\documentclass[a4paper]{article}

\usepackage{amsthm,amsmath,amssymb,amsfonts}
\usepackage[american]{babel}
\usepackage{graphicx}
\usepackage{microtype} %Sev_12.3.21: Removed, was causing scalable fonts error - it is not standard to change the font. Do we need this for something specific?
\usepackage{hyperref}
\usepackage{cleveref}
\usepackage[url=false,giveninits=true,maxbibnames=99,style=alphabetic]{biblatex}
\usepackage{mathtools}
\usepackage{enumitem}
\usepackage{csquotes}
\usepackage[noend]{algpseudocode}
\usepackage{algorithm}
\usepackage{xparse}
\usepackage{color}
\usepackage{caption}
\usepackage{subcaption}
\usepackage{fullpage}
\usepackage{placeins}
\usepackage{xcolor}
\usepackage{makecell}
\usepackage{qcircuit}
\usepackage{thmtools}
\usepackage{thm-restate}

\mathchardef\mhyphen="2D % Define a "math hyphen"

\newcommand\newmathabbrev[2]{\newcommand{#1}{\ensuremath{#2}}}

\newcommand\cfont\mathrm
\newmathabbrev\p{\cfont{P}}
\newmathabbrev{\N}{\mathbb N}
\newmathabbrev\NP{\cfont{NP}}
\newmathabbrev\DTIME{\cfont{DTIME}}
\newmathabbrev\tSAT{3\cfont{\mhyphen{}SAT}}
\newmathabbrev\MA{\cfont{MA}}
\newmathabbrev\AM{\cfont{AM}}
\newmathabbrev\NPDAG{\cfont{NP\mhyphen{}DAG}}
\newmathabbrev\QMADAG{\cfont{QMA\mhyphen{}DAG}}
\newmathabbrev\yes{\mathrm{yes}}
\newmathabbrev\no{\mathrm{no}}
\newmathabbrev\US{\cfont{US}}
\newmathabbrev\FP{\cfont{FP}}
\newmathabbrev\PP{\cfont{PP}}
\newmathabbrev\CeP{\cfont{C_=P}}
\newmathabbrev\coCeP{\cfont{coC_=P}}
\newmathabbrev\PH{\cfont{PH}}
\newmathabbrev\SAT{\cfont{SAT}}
\newmathabbrev\SPP{\cfont{SPP}}
\newmathabbrev\GapP{\cfont{GapP}}
\newmathabbrev\BQP{\cfont{BQP}}
\newmathabbrev\QP{\cfont{QP}}
\newmathabbrev\StoqMA{\cfont{StoqMA}}
\newmathabbrev\coNP{\cfont{coNP}}
\newmathabbrev\AzPP{\cfont{A_0PP}}
\newmathabbrev\QMA{\cfont{QMA}}
\newmathabbrev\coQMA{\cfont{coQMA}}
\newmathabbrev\BPP{\cfont{BPP}}
\newmathabbrev\QCMA{\cfont{QCMA}}
\newmathabbrev\pNPlog{\p^{\NP[\log]}}
\newmathabbrev\pNP{\p^{\NP}}
\newmathabbrev\pNPtwo{\p^{\NP[2]}}
\newmathabbrev\pNPone{\p^{\NP[1]}}
\newmathabbrev\pParSAT{\p^{||\SAT}}
\newmathabbrev\pQMApar{\p^{||\QMA}}
\newmathabbrev\pCpar{\p^{||\C}}
\newmathabbrev\pStoqMApar{\p^{||\StoqMA}}
\newmathabbrev\pQMAlog{\p^{\QMA[\log]}}
\newmathabbrev\pClog{\p^{\textup{C}[\log]}}
\newmathabbrev\pC{\p^{\textup{C}}}
\newmathabbrev\QMASPACE{\cfont{QMASPACE}}
\newmathabbrev\pQMAtlog{\p^{\QMA(2)[\log]}}
\newmathabbrev\pStoqMAlog{\p^{\StoqMA[\log]}}
\newmathabbrev\pQMApt{\p^{\Vert\QMA(2)}}
\newmathabbrev\pQMA{\p^{\QMA}}
\newmathabbrev\SharpP{\cfont{\#P}}
\newmathabbrev\pSharP{\p^{\SharpP[1]}}
\newmathabbrev\PromisePP{\cfont{PromisePP}}
\newmathabbrev\lett{\le_\mathrm{tt}}
\newmathabbrev\YES{\mathsf{YES}}
\newmathabbrev\NO{\mathsf{NO}}
\newmathabbrev\PSPACE{\cfont{PSPACE}}
\newmathabbrev\IP{\cfont{IP}}
\newmathabbrev\POLY{\cfont{POLY}}
\newmathabbrev\DAG{\cfont{DAG}}
\newmathabbrev\StoqMADAG{\StoqMA\mhyphen\cfont{DAG}}
\newmathabbrev\CDAG{C\mhyphen\cfont{DAG}}
\newmathabbrev\CDAGf{C\mhyphen\cfont{DAG}_f}
\newmathabbrev\CDAGs{C\mhyphen\cfont{DAG}_s}
\newmathabbrev\CDAGd{C\mhyphen\cfont{DAG}_{d}}
\newmathabbrev\CDAGo{C\mhyphen\cfont{DAG}_1}
\newmathabbrev\LOGS{\cfont{LOGS}}
\newmathabbrev\TAUT{\cfont{TAUTOLOGY}}
\newmathabbrev\SBQP{\cfont{SBQP}}
\newmathabbrev\Fc{F_\coNP}
\newmathabbrev\Fa{F_\AzPP}
\newmathabbrev\GSCON{\cfont{GSCON}}
\newmathabbrev\GSCONexp{\GSCON_\cfont{exp}}
\newmathabbrev\QMAexp{\QMA_\cfont{exp}}
\newmathabbrev\UQMA{\cfont{UQMA}}
\newmathabbrev\R{\mathbb R}
\newmathabbrev\Trees{\cfont{TREES}}
\newmathabbrev\apxsim{\cfont{APX\mhyphen{}SIM}}
\newmathabbrev\AWPP{\cfont{AWPP}}
\newmathabbrev\X{\mathcal{X}}
\newmathabbrev\Y{\mathcal{Y}}

\newmathabbrev\Z{\mathcal{Z}}
\newmathabbrev\ZZ{\mathbb{Z}}
\newmathabbrev\Hprop{H_\mathrm{prop}}
\newmathabbrev\Hin{H_\mathrm{in}}
\newmathabbrev\Hout{H_\mathrm{out}}
\newmathabbrev\Hstab{H_\mathrm{stab}}
\newmathabbrev\Lext{\L_\mathrm{ext}}
\newmathabbrev\BTWNP{\cfont{BTW}(\NP)}
\newmathabbrev\BSN{\cfont{BSN}}
\newmathabbrev\SN{\cfont{SN}}
\newmathabbrev\BD{\cfont{BD}}
\newmathabbrev\HYPERTREE{\cfont{NP\mhyphen{}HYPERTREE}}
\newmathabbrev\Hext{H_\mathrm{ext}}
\newmathabbrev\Hpropt{\tilde{H}_\mathrm{prop}}
\newmathabbrev\Hint{\tilde{H}_\mathrm{in}}
\newmathabbrev\Houtt{\tilde H_\mathrm{out}}
\newmathabbrev\EXP{\cfont{EXP}}
\newmathabbrev\A{\mathcal{A}}
\newmathabbrev\U{\mathcal{U}}

\newcommand\kLH{k\cfont{\mhyphen{}LH}}
\renewcommand\L{\mathcal{L}}

\newcommand\B{\mathcal B}
\newcommand\Piacc{\Pi_\mathrm{acc}}

\newmathabbrev\DAGSSAT{\DAGS(\SAT)}
\newmathabbrev\DAGS{\mathrm{DAGS}}
\newmathabbrev\DAGSNP{\DAGS(\NP)}

\newcommand\psiin{\psi_\mathrm{in}}
\newmathabbrev\AND{\cfont{AND}}

\newmathabbrev\STCONN{{S,T}\cfont{\mhyphen{}CONN}}
\newmathabbrev\CNF{\cfont{CNF}}
\newmathabbrev\NEXP{\cfont{NEXP}}
\newmathabbrev\NPSPACE{\cfont{NPSPACE}}
\newmathabbrev\QCMASPACE{\cfont{QCMASPACE}}
\newmathabbrev\BQPSPACE{\cfont{BQPSPACE}}
\newmathabbrev{\PCP}{\cfont{PCP}}
\newmathabbrev\BQUPSPACE{\cfont{BQ_UPSPACE}}

\protected\def\verythinspace{%
  \ifmmode
    \mskip0.5\thinmuskip
  \else
    \ifhmode
      \kern0.08334em
    \fi
  \fi
}

\newcommand{\bin}{\{0,1\}}
\newcommand{\Piy}{\Pi_\mathrm{yes}}
\newcommand{\Pin}{\Pi_\mathrm{no}}
\newcommand{\Pii}{\Pi_\mathrm{inv}}
\newcommand{\Gammay}{\Gamma_\mathrm{yes}}
\newcommand{\Gamman}{\Gamma_\mathrm{no}}

\newcommand{\C}{\mathbb C}

\newcommand{\desc}{\mathrm{Desc}}
\newcommand{\anc}{\mathrm{Anc}}
\newcommand{\be}{\begin{equation}}
\newcommand{\ee}{\end{equation}}
\newcommand{\QV}{\mathcal{QV}}
\newcommand{\QVp}{\mathcal{QV}^+}

\renewcommand{\epsilon}{\varepsilon}
\DeclareMathOperator*{\argmax}{arg\,max}

\DeclareMathOperator{\Herm}{Herm}

\DeclareMathOperator{\Tr}{Tr}

\DeclareMathOperator{\indeg}{indeg}
\DeclareMathOperator{\outdeg}{outdeg}
\DeclareMathOperator{\depth}{depth}
\DeclareMathOperator{\level}{level}

\DeclareMathOperator{\Null}{Null}

\DeclareMathOperator{\suc}{children}
\DeclareMathOperator{\pred}{parents}

\DeclareMathOperator\outputc{out-wire}
\DeclareMathOperator\width{width}
\DeclareMathOperator\tw{tw}
\DeclareMathOperator\s{s}

\DeclareMathOperator\negl{negl}
\DeclareMathOperator\inv{preimage}

\newcommand\hermp[1]{\Herm\left(#1\right)}
\newcommand\lmin{\lambda_{\mathrm{min}}}

\newcommand\lminp[1]{\lmin\left(#1\right)}

\newcommand{\set}[1]{{\left\{#1\right\}}}    % braces for set notation

\newcommand{\pout}{p_{\textup{out}}}

\newcommand{\fix}{\textup{fix}}

\DeclareMathOperator{\herm}{Herm}

\DeclareMathOperator{\poly}{poly}
\DeclareMathOperator{\polylog}{polylog}

\DeclarePairedDelimiter\bra{\langle}{\rvert}
\DeclarePairedDelimiter\ket{\lvert}{\rangle}

\DeclarePairedDelimiter\abs{\lvert}{\rvert}
\DeclarePairedDelimiter\norm{\lVert}{\rVert}
\DeclarePairedDelimiter\enorm{\lVert}{\rVert_2}

\DeclarePairedDelimiter\trnorm{\lVert}{\rVert_{\mathrm{tr}}}
\DeclarePairedDelimiterX\braket[2]{\langle}{\rangle}{#1 \delimsize\vert #2}
\DeclarePairedDelimiterX\ketbra[2]{\lvert}{\rvert}{#1 \delimsize\rangle\delimsize\langle #2}

\newcommand{\braketb}[2]{\bra{#1}#2\ket{#1}}

\setlist[itemize]{noitemsep, topsep=0pt}
\setlist[enumerate]{noitemsep, topsep=0pt}

\declaretheorem[numberwithin=section]{theorem}
\declaretheorem[sibling=theorem]{corollary}
\declaretheorem[sibling=theorem]{lemma}

\declaretheorem[sibling=theorem]{theme}

\declaretheorem[sibling=theorem,style=definition]{definition}
\declaretheorem[sibling=theorem,style=definition]{remark}
\declaretheorem[sibling=theorem,name=Rule,style=definition]{rul}

%%%%%%%%%%%%%%%%%%%%
%flag qubit section defs

\newcommand{\spre}{S_{\textup{pre}}}
\newcommand{\spost}{S_{\textup{post}}}
\newcommand{\f}{f} % bijection in circuit-to-Ham mapping
\newcommand{\g}{g} % weight on final time step T in history state
\newcommand{\psiout}{\psi_{\textup{out}}}
\newcommand{\phiin}{\phi_{\textup{in}}}

\newcommand{\wm}{w_{1\cdots m}}

\newcommand{\pgood}{p_{\textup{good}}}
\newcommand{\hin}{H_{\textup{in}}}
\newcommand{\hprop}{H_{\textup{prop}}}
\newcommand{\hstab}{H_{\textup{stab}}}

\newcommand\Hw{H_\mathrm{w}}

\newcommand{\spa}[1]{\mathcal{#1}}

\newcommand{\hout}{H_{\textup{out}}}

\newcommand{\pbad}{p_{\textup{bad}}}

\newcommand{\nn}{N}

 %CHECK
 %CHECK
 %CHECK
\newcommand{\psihist}{\psi_{\textup{hist}}}

\newcommand{\qflag}{q_{\textup{flag}}}

\newcommand{\unitary}[1]{\textup{U}\left(#1\right)}

%%%%%%%temp additions for algorithmic approach

\newcommand{\COs}[1]{\Call{ComputeOutput*}{#1}}

%%%%%%%%%%%%%%

%https://tex.stackexchange.com/questions/198771/align-in-substack
\makeatletter
\newcommand{\subalign}[1]{%
  \vcenter{%
    \Let@ \restore@math@cr \default@tag
    \baselineskip\fontdimen10 \scriptfont\tw@
    \advance\baselineskip\fontdimen12 \scriptfont\tw@
    \lineskip\thr@@\fontdimen8 \scriptfont\thr@@
    \lineskiplimit\lineskip
    \ialign{\hfil$\m@th\scriptstyle##$&$\m@th\scriptstyle{}##$\hfil\crcr
      #1\crcr
    }%
  }%
}
\NewDocumentCommand{\LeftComment}{s m}{%
  \Statex \IfBooleanF{#1}{\hspace*{\ALG@thistlm}}\(\triangleright\) #2}

%https://tex.stackexchange.com/questions/3964/mathematical-symbol-for-disjoint-set-union
\def\moverlay{\mathpalette\mov@rlay}
\def\mov@rlay#1#2{\leavevmode\vtop{%
   \baselineskip\z@skip \lineskiplimit-\maxdimen
   \ialign{\hfil$\m@th#1##$\hfil\cr#2\crcr}}}
\newcommand{\charfusion}[3][\mathord]{
    #1{\ifx#1\mathop\vphantom{#2}\fi
        \mathpalette\mov@rlay{#2\cr#3}
      }
    \ifx#1\mathop\expandafter\displaylimits\fi}

\makeatother

\newcommand{\cupdot}{\charfusion[\mathbin]{\cup}{\cdot}}
\newcommand{\bigcupdot}{\charfusion[\mathop]{\bigcup}{\cdot}}

\algnewcommand{\LineComment}[1]{\State \(\triangleright\) #1}

\algblockdefx[ON]{Blk}{EndBlk}[1]
  {#1}
  {}

% Tells algorithmicx not to print an empty line if `noend' is set
\makeatletter
\ifthenelse{\equal{\ALG@noend}{t}}%
  {\algtext*{EndBlk}}
  {}%
\makeatother

\AtEveryBibitem{%
  \clearlist{language}%
}

\addbibresource{lit.bib}

\setcounter{secnumdepth}{3}
\setcounter{tocdepth}{3} 
\usepackage[onehalfspacing]{setspace}

\title{On polynomially many queries to NP or QMA oracles}
%Authors removed for ITCS
\author{Sevag Gharibian\footnote{Department of Computer Science and Institute for Photonic Quantum Systems (PhoQS), Paderborn University, Germany. Email: \{sevag.gharibian, dorian.rudolph\}@upb.de.} ~and Dorian Rudolph\footnotemark[1] }

\begin{document}

\maketitle

\begin{abstract}
    We study the complexity of problems solvable in deterministic polynomial time with access to an \NP{} or Quantum Merlin-Arthur (\QMA)-oracle, such as $\pNP$ and $\pQMA$, respectively.
    The former allows one to classify problems more finely than the Polynomial-Time Hierarchy (\PH), whereas the latter characterizes physically motivated problems such as Approximate Simulation (\apxsim{}) [Ambainis, CCC 2014].
    In this area, a central role has been played by the classes $\pNPlog$ and $\pQMAlog$, defined identically to $\pNP$ and $\pQMA$, except that only \emph{logarithmically} many oracle queries are allowed. Here, [Gottlob, FOCS 1993] showed that if the adaptive queries made by a $\pNP$ machine have a ``query graph'' which is a tree, then this computation can be simulated in $\pNPlog$.

    In this work, we first show that for any verification class $C\in\{\NP, \MA, \QCMA, \QMA, \QMA(2),\allowbreak \NEXP, \QMAexp\}$, any $\pC$ machine with a query graph of ``separator number'' $s$ can be simulated using deterministic time $\exp(s\log n)$ and $s\log n$ queries to a $C$-oracle.
    When $s\in O(1)$ (which includes the case of $O(1)$-treewidth, and thus also of trees), this gives an upper bound of $\pClog$, and when $s\in O(\log^k(n))$, this yields bound $\QP^{C[\log^{k+1}]}$ ($\QP$ meaning quasi-polynomial time).
    We next show how to combine Gottlob's ``admissible-weighting function'' framework with the ``flag-qubit'' framework of [Watson, Bausch, Gharibian, 2020], obtaining a unified approach for embedding $\pC$ computations directly into \apxsim{} instances in a black-box fashion.
    Finally, we formalize a simple no-go statement about polynomials (c.f. [Krentel, STOC 1986]): Given a multi-linear polynomial $p$ specified via an arithmetic circuit, if one can ``weakly compress'' $p$ so that its optimal value requires $m$ bits to represent, then $\pNP$ can be decided with only $m$ queries to an NP-oracle.%\subseteq\p^{\NP[m]}$ (in words, $m$ \NP{} queries would
\end{abstract}

\section{Introduction}\label{scn:intro}

The celebrated Cook-Levin Theorem~\cite{Cook1971,Levin1973} and Karp's 21 NP-complete problems~\cite{Karp1972} laid the groundwork for the theory of NP-completeness to become the \emph{de facto} ``standard'' for characterizing ``hard'' problems. Indeed, in the decades since, hundreds of decision problems have been identified as NP-complete (see, e.g.,~\cite{GJ79}). Yet, despite the success of this theory, it soon became apparent that finer characterizations were needed to capture the complexity of certain hard problems.

In this direction, Stockmeyer \cite{Stockmeyer1976} defined the \emph{Polynomial Hierarchy} ($\PH$), of which the second level will interest us here.
Specifically, one may consider $\Sigma_2^\p=\NP^{\NP}$ (i.e. an NP-machine with access to an NP-oracle) or $\Delta^\p_2=\p^\NP$ (i.e. a P machine with access to an NP-oracle).
Here, our focus is on the latter, defined as the set of decision problems solvable by a deterministic polynomial-time Turing machine making polynomially many queries to an oracle for (say) SAT.
Like NP, $\p^\NP$ has natural complete problems, such as that shown by Krentel \cite{Krentel1992}:
Given Boolean formula $\phi:\set{0,1}^n\mapsto\set{0,1}$, does the lexicographically largest satisfying assignment $x_1\cdots x_n$ of $\phi$ have $x_n=1$?

\paragraph{Restricting the number of NP queries.} In 1982, in pursuit of yet finer characterizations, Papadimitriou and Zachos~\cite{Papadimitriou1982} asked: What happens if one considers problems ``slightly harder'' than \NP, i.e. solvable by a P machine making only \emph{logarithmically} many queries to an NP-oracle? This class, denoted $\pNPlog{}$, contains both NP and co-NP (since the P machine can \emph{postprocess} the answer of the NP-oracle by negating said answer), and is thus believed strictly harder than NP. The following decade saw a flurry of activity on this topic (see \Cref{sscn:relatedwork}); for example, Wagner \cite{Wagner1987,Wagner1988} showed that deciding if the optimal solution to a MAX-$k$-SAT instance has even Hamming weight is $\pNPlog{}$-complete.

This led to the natural question: Is $\pNPlog=\pNP$?
If one restricts the $\pNP$ machine to make all NP queries in \emph{parallel} (i.e. non-adaptively), denoted $\p^{\Vert\NP}$, then
Hemachandra \cite{Hemachandra1989} and Buss and Hay \cite{Buss1991} have shown $\p^{\Vert\NP}=\pNPlog{}$.
Thus, adaptivity appears crucial; so, Gottlob~\cite{Gottlob1995} next allowed dependence between queries as follows:
One may view $\p^\NP$ as a directed acyclic graph (DAG), whose nodes represent NP queries, and directed edge $(u,v)$ indicates that query $v$ depends on the answer of query $u$.
Denote this as the ``query graph'' of the $\pNP$ computation (\Cref{def:c-dag}).
In 1995, Gottlob showed that any $\pNP$ computation whose query graph is a tree can be simulated in $\pNPlog{}$.
To the best of our knowledge, this is the current state of the art regarding $\pNP$ versus $\pNPlog$.

\paragraph{Developments on the quantum side.} A few years later, the complexity theoretic study of ``quantum constraint satisfaction problems'' began in 1999 with Kitaev \enquote{quantum Cook-Levin theorem}~\cite{Kitaev2002}, which states that the problem of estimating the ``ground state energy'' of a local Hamiltonian ($k$-LH) is complete for Quantum Merlin Arthur ($\QMA$, a quantum generalization of NP). Particularly appealing is the fact that $k$-LH is physically motivated: It encodes the problem of estimating the energy of a quantum system when cooled to its lowest energy configuration.

More formally, $k$-LH generalizes the problem MAX-$k$-SAT, and is specified as follows.
As input, we are given a (succinct) description of a Hermitian matrix $H = \sum_i H_i\in\C^{2^n\times 2^n}$, where each Hermitian $H_i$ is a local ``quantum clause'' acting non-trivially on at most $k$ qubits (out of the full $n$-qubit system).
The \emph{ground state} (i.e. optimal assignment) is then the eigenvector of $H$ with the smallest eigenvalue, which we call the \emph{ground state energy} (i.e. optimal assignment's value).
Thus, understanding the low temperature properties of a many-body system is ``simply'' an eigenvalue problem for some succinctly described exponentially large matrix $H$.
Since Kitaev's work, a multitude of other physical problems have been shown to be $\QMA$-complete (see, e.g., surveys \cite{Osborne2012a, Bookatz2014,Gharibian2015b}).\\

\noindent \emph{The formalisation of $\pQMAlog$.} In 2014, Ambainis tied the study of QMA and $\pNPlog$ together by discovering the first $\pQMAlog$-complete problem ($\pQMAlog$ is defined as $\pNP$, but with the NP-oracle replaced with a QMA-oracle): \emph{Approximate Simulation ($\apxsim{}$)}.
To define APX SIM, suppose we wish to simulate the experiment of cooling down a quantum many-body system, and then performing a local measurement so as to extract information about the ground state's properties.
Formalized (roughly) as a decision problem, we must decide, given Hamiltonian $H$ describing the system, observable $A$ describing a local measurement, and inverse polynomially gapped thresholds $\alpha$ and $\beta$, whether there exists a ground state $\ket\psi$ of $H$ with expected value $\braketb\psi A$ below $\alpha$.

For context, APX-SIM can be viewed as a quantum variation of Wagner's $\pNPlog{}$-complete problem above~\cite{Wagner1987,Wagner1988} (does the optimal solution to a MAX-SAT instance have even Hamming weight?), since both problems ask about properties of optimal solutions to quantum and classical constraint satisfaction problems, respectively.
However, in the quantum setting, APX-SIM has the additional perk of being strongly physically motivated.
This is because often in practice, one is not interested in the ground state energy, but in properties of the \emph{ground state itself} (e.g. does it exhibit certain quantum phenomena? When does it undergo a phase transition?)~\cite{Gharibian2015b}. %(In fact, APX-SIM is a generalization of $k$-LH --- simply set $H=0$ and $A=H$ in the former.)
APX-SIM models the ``simplest''  experiment for computing such ground state properties, making no assumptions about additional information the experimenter might \emph{a priori} have.
(For example, in APX-SIM, although the goal is to probe the ground state of $H$, one is \emph{not} given the corresponding ground state energy as input. This is crucial, both complexity theoretically\footnote{If the definition of APX-SIM were to be modified so that the ground state energy of $H$ was given as part of the input, then APX-SIM would be QMA-complete instead of $\pQMA$-complete. This is because once one knows the ground state energy, a single QMA query and no postprocessing suffices to answer APX-SIM.} and physically, since in practice an experimenter does not \emph{a priori} know the ground state energy, as it is QMA-complete to compute to begin with!)

\paragraph{$\pQMAlog$ versus $\pQMA$ and this paper.} This sets up the question inspiring the current work --- is $\pQMAlog=\pQMA$?
In 2020, Gharibian, Piddock, and Yirka~\cite{Gharibian2019b} showed that $\pQMAlog{} = \p^{\Vert\QMA{}}$, for  $\p^{\Vert\QMA{}}$ defined as $\p^{\Vert\NP{}}$ but with an NP-oracle.
This gave a quantum analogue of $\p^{\Vert\NP}=\pNPlog{}$~\cite{Hemachandra1989,Buss1991}, although it required completely different proof techniques\footnote{The roadblock quantumly is that unlike NP, QMA is a class of \emph{promise problems}. Thus, one must account for the possibility that a (say) $\pQMAlog$ machine may make ``invalid'' queries, i.e. those violating the promise of the QMA-oracle. A general survey covering such issues regarding promise problems is ~\cite{G06}.}.
In this paper, we thus set our sights on the next step: Gottlob's work on $\pNP$ computations with trees as query graphs~\cite{Gottlob1995}.
What we are able to achieve is not just a quantum analogue of~\cite{Gottlob1995}, but a significant strengthening in multiple directions for both NP and QMA: Our main result considers query graphs of \emph{bounded separator number} (which includes bounded treewidth, and hence trees), applies to a host of verification classes including NP and QMA, and gives non-trivial (quasi-polynomial) upper bounds even beyond the bounded separator number case.
Along the way, we show how to combine the techniques used with the existing work on APX-SIM and $\pQMAlog$, yielding a unified framework for mapping $\pQMA$-type problems directly to APX-SIM instances.

\subsection{Our results}\label{sscn:results}

To state our results, define (formal definitions in \Cref{sec:preliminaries})
\begin{eqnarray}
    \QV &:=& \{\NP, \MA, \QCMA, \QMA, \QMA(2), \NEXP, \QMAexp\},\\
    \QVp &:=& \QV \cup \{\StoqMA\}.
\end{eqnarray}
This is the set of classical and quantum verification classes for which our results will be stated.
However, our framework applies in principle to verification classes $C$ beyond these sets; the main properties we require are for $C$ to allow promise gap amplification\footnote{Amplification here means that $C$ with constant promise gap (difference between completeness and soundness parameters) is equal to $C$ with $1/\poly$ gap.} and classical preprocessing before verification.
%In particular, since our framework is essentially agnostic to the precise choice of verification class $C$, most common instantiations of the general notion of \emph{QVClass} from \cite{WBG20} work.

Recall now that an NP query graph is a DAG encoding an arbitrary $\pNP$ computation, where nodes correspond to NP queries; denote this an NP-DAG.
Replacing NP with any $C\in\QVp$, we arrive at the notion of a C-DAG (\Cref{def:c-dag}).
As expected, deciding whether a given C-DAG corresponds to an accepting $\pC$ computation is itself a $\pC$-complete problem (\Cref{l:pC}).
To thus obtain new upper bounds on $\pC$ computations, in this work, we parameterize a given C-DAG via its \emph{separator number}, $s$.

Briefly, a graph $G=(V,E)$ on $n$ vertices has a \emph{separator} of size $s(n)$ if there exists a set of at most $s(n)$ vertices whose removal splits the graph into at least two (non-empty) connected components (\Cref{def:separator-number}). $G$ has \emph{separator number}~\cite{Gruber2012} $s(n)$ if, (1) {for all} subsets $Q\subseteq V$, the vertex-induced graph on $Q$ has a separator of size at most $s(n)$, and (2) $s(n)$ is the smallest number for which this holds. Denote by $\CDAGs$ a C-DAG of separator number $s$, where we write $\CDAG_1$ for the case of $s\in O(1)$.
Note that treewidth upper bounds separator number~\cite{Gruber2012}.

\paragraph{1. Deciding C-DAGs.} Our main result is the following. For clarity, by ``deciding'' a C-DAG, we mean deciding whether it encodes an accepting or rejecting $\pC$ computation.
\begin{restatable}{theorem}{theoremBsnS}\label{thm:bsn-s}
    Fix any $C\in\QV$ and efficiently computable function $s:\N\to\N$. Then,
    \begin{equation}\label{eqn:upper}
        \CDAGs\in  \DTIME\left(2^{O(s(n)\log n)}\right)^{C[s(n)\log n]},
    \end{equation}
    for $n$ the number of nodes in $G$.
\end{restatable}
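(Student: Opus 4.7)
The plan is to generalize Gottlob's ``admissible weighting function'' framework \cite{Gottlob1995} for trees (separator number $1$) to graphs of arbitrary separator number $s$, leveraging the gap-amplification hypothesis on $C$. The goal is to construct an integer weighting $w:V\to\N$ on the vertex set of the C-DAG such that (i) the range $R$ of $W(\alpha):=\sum_v w_v\alpha_v$ over $\alpha\in\{0,1\}^V$ is at most $n^{O(s(n))}$, and (ii) the output bit of the underlying $\pC$ computation is recoverable from $W^*:=W(\alpha^*)$, where $\alpha^*$ is the unique consistent assignment (\ie, the one whose bit at each $v$ matches the $C$-oracle's answer on the query $\phi_v$ induced by $\alpha^*$ restricted to $v$'s ancestors). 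Given such a $w$, a binary search on $W^*$ uses $\log_2 R=O(s\log n)$ $C$-queries of the form ``does there exist $\alpha\in\{0,1\}^V$ with $W(\alpha)\ge k$ such that for every $v$ with $\alpha(v)=1$ the induced query $\phi_v^\alpha$ is a yes-instance of $C$?''\ This language sits in $C$: the prover supplies $\alpha$ together with a $C$-witness for each claimed-YES vertex, and the verifier runs all per-vertex $C$-verifications in parallel (after amplifying each to error $1/n^{\omega(1)}$) and deterministically checks $W(\alpha)\ge k$.

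To obtain such a $w$ with range $n^{O(s)}$, I would first construct a balanced-separator decomposition tree $\mathcal T$ of the C-DAG of depth $O(\log n)$ whose internal nodes carry separators of size $O(s)$. This follows by iteratively applying the hereditary separator-number hypothesis to the largest surviving component, absorbing the iteration into a constant factor in $s$. Along $\mathcal T$, define $w$ in a positional system: each level of $\mathcal T$ reserves $O(s)$ high-order bits encoding that level's separator assignment, with Gottlob's admissibility trick applied locally to ensure that the correct separator bits uniquely maximize the contribution from that level, conditional on subordinate levels being resolved correctly. Placing the output vertex at the root of $\mathcal T$ makes its bit the top-most bit of $W^*$, so that reading $W^*$ yields the $\pC$ answer. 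Summing bit budgets across $O(\log n)$ levels gives $R=2^{O(s\log n)}=n^{O(s)}$, meeting the stated query bound, and the outer $\DTIME(2^{O(s\log n)})$ machine has ample time to perform the enumerations, binary search, and post-processing.

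The hardest step is the hierarchical weighting construction for $s\ge 2$: unlike trees, the components below each separator interact through the $O(s)$ separator vertices, so the weights must simultaneously (a) prevent inflated $W$ coming from ``locally consistent but globally wrong'' $\alpha$'s---the hard direction, since verifying NO-answers is not natively in $C$---(b) retain $\alpha^*$ as the unique consistent max-$W$ assignment, and (c) fit within the $O(s\log n)$ total bit budget. A secondary subtlety is the uniform handling of the quantum verification classes $\QMA,\QCMA,\QMA(2),\QMAexp$: per-vertex witnesses are quantum states that cannot be black-box concatenated without error blow-up, so each per-vertex verification is first amplified to negligible error, and a ``flag-qubit'' mechanism is used to aggregate individual pass/fail bits for the deterministic postprocessor. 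The absence of $\StoqMA$ from $\QV$ in the theorem is presumably because no analogous black-box flag-qubit encoding is known for it; including the exponential classes $\NEXP,\QMAexp$ additionally requires checking that all intermediate objects remain succinctly describable so that the $\DTIME(2^{O(s\log n)})$ machine can manipulate them.
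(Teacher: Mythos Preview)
Your high-level plan (balanced separator tree of depth $O(\log n)$ $\to$ admissible weighting with total weight $n^{O(s)}$ $\to$ binary search via $C$-oracle) is exactly the paper's strategy. However, the step you flag as ``hardest'' is where your proposal has a genuine gap, and it is precisely the step that carries all the content of the paper's proof.

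Your ``positional'' weighting scheme---assigning $O(s)$ high-order bits per separator-tree level---does not yield an admissible weighting function on $G$. Admissibility (\Cref{def:weighting-function}) is a constraint along the \emph{DAG} edges: $f(v)$ must dominate $\sum_{w\in\Gamma(v)} f(w)$ over $v$'s DAG-children. But DAG edges can run in either direction relative to the separator tree: a vertex deep in the tree can have DAG-children that sit near the root (and hence, in your scheme, carry much larger weight than $v$ itself). So level-based weights on $G$ are not admissible, and your ``apply Gottlob's trick locally, conditional on subordinate levels being resolved correctly'' does not repair this---the subordinate levels feed into the higher ones, not the other way around. Relatedly, you cannot simply ``place the output vertex at the root of $\mathcal T$'': the output node is fixed by the DAG and need not be a balanced separator.

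The paper does \emph{not} weight $G$ directly. Instead, it first applies a global graph transformation (the Compression Lemma, \Cref{l:compress}) that produces a new, equivalent C-DAG $G^*$ on $2^{O(sD)}n$ nodes in which \emph{every} vertex has at most $O(sD)$ DAG-descendants. This is done by (i) replacing each vertex $u$ by $2^{O(sD)}$ copies, one per assignment to the separators on the branch from the root to $u$; (ii) deleting all ``downward'' DAG edges and replacing them by ``upward shortcut'' edges so that every directed path in $G^*$ climbs monotonically toward the root; and (iii) adding a new output ``conductor'' node $t$ that, given all the copies' outputs, recursively selects the correct computation path (\Cref{alg:compute-inputs}). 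Only after this transformation does the standard weighting $\omega(v)=(c+1)^{|\desc(v)|}$ become $c$-admissible with total weight $(c+1)^{O(sD)}n=n^{O(s)}$. None of this is visible in your sketch.

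Two smaller points. First, your proposed $C$-verifier (``run all per-vertex verifications in parallel, then deterministically check $W(\alpha)\ge k$'') is not how the paper handles the quantum classes, and it is not clear it works as stated: the per-vertex outcomes are probabilistic, so there is no deterministic $W(\alpha)$ to check. The paper's verifier (\Cref{lem:approximating-t}) instead samples a \emph{single} index $i$ with probability $f(v_i)/W$ and runs only $Q_i$; its acceptance probability is then exactly the normalized weight function $t/W$, which is what binary search targets. This sidesteps entanglement-across-proofs issues without any ``flag-qubit mechanism'' (that machinery belongs to the APX-SIM section, not to this theorem). Second, for promise classes you must also handle invalid queries: copies of the same node may receive inconsistent answers, which is why the paper additionally merges equivalent copies (\Cref{alg:merge}) before weighting.
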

\noindent In words, any $\pC$ computation with a query graph of separator number $s$ can be simulated by a classical deterministic Turing machine running in time $2^{O(s(n)\log n)}$ and making $s(n)\log n$ queries to a $C$-oracle.
With \Cref{thm:bsn-s} in hand, we are able to obtain the following sequence of results.

First, by setting $s=O(1)$, we significantly strengthen Gottlob's \cite{Gottlob1995} $\Trees(\NP) = \pNPlog{}$ result to the constant separator number case and broad range of verification classes $C$:
\begin{restatable}{theorem}{theoremBsn}\label{thm:bsn}
    For any $C\in\QV$, $\CDAG_1$ is $\p^{C[\log]}$-complete.
\end{restatable}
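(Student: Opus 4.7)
The plan is to prove the two directions separately, with the upper bound being an immediate specialization of \Cref{thm:bsn-s} and $\p^{C[\log]}$-hardness following from a classical decision-tree encoding.

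First, for containment $\CDAG_1 \in \p^{C[\log]}$, I would simply invoke \Cref{thm:bsn-s} with $s(n)\in O(1)$: then \eqref{eqn:upper} collapses to $\DTIME(2^{O(\log n)})^{C[O(\log n)]}=\p^{C[\log]}$ and no further work is needed.

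For $\p^{C[\log]}$-hardness, I would give a polynomial-time many-one reduction from an arbitrary $L\in\p^{C[\log]}$ to $\CDAG_1$ via a decision-tree unfolding. Let $M$ be a deterministic polynomial-time machine deciding $L$ that makes at most $k=O(\log n)$ adaptive queries to a $C$-oracle. On input $x$, construct the full binary tree $T_x$ of depth $k$ in which the internal node reached by answers $a_1,\dots,a_{i-1}\in\{\YES,\NO\}$ is labelled with the $i$-th query $M$ would issue on $x$ under those prior answers, and the leaves carry $M$'s final accept/reject output along the corresponding branch. Since $k=O(\log n)$, $T_x$ has $\poly(n)$ nodes and is constructible in deterministic polynomial time from $x$. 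Viewed as an undirected graph, $T_x$ is a tree, and trees have separator number $1$ (any centroid vertex disconnects the tree, and every vertex-induced subgraph of a tree is itself a disjoint union of trees, so the bound is preserved under the universal-quantifier in the definition of separator number). By construction, $T_x$ accepts as a C-DAG iff $M$ accepts $x$, giving the reduction.

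The main obstacle I anticipate is the usual promise-class subtlety for $C\in\{\MA,\QCMA,\QMA,\QMA(2),\QMAexp\}$: on some branches of $T_x$ the queries posed by $M$ may be \emph{invalid} (i.e.\ violate the promise gap of $C$), and the C-DAG semantics must treat these branches consistently with the definition underlying \Cref{l:pC}. I expect the decision-tree construction to mesh cleanly with whatever ``any consistent answer is acceptable on invalid queries'' convention is already used there, since each tree node carries exactly one $C$-query whose invalidity affects only the two subtrees directly below it; but this book-keeping, rather than the tree-structure argument itself, is the step that needs the most care to verify.
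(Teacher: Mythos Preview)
Your containment argument matches the paper's. For hardness, there is a gap: a \CDAG{} must have a \emph{unique} result node (the sole vertex of out-degree $0$; see \Cref{def:c-dag}), but your tree $T_x$ has $2^k$ leaves, so as written it is not a valid \CDAG{} instance and the phrase ``$T_x$ accepts as a C-DAG'' is undefined. Neither orientation of the edges helps: root-to-leaves gives many out-degree-$0$ nodes, while leaves-to-root makes the root's output the answer to the \emph{first} query rather than $M$'s final decision. The fix is easy---add one output node $t$ with incoming edges from every node of $T_x$, have $t$ reconstruct the actual branch from the received answers and output $M$'s decision along it---and the resulting graph (tree plus an apex) still has separator number at most $2$, hence $O(1)$. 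But this step is genuinely missing from your proposal.

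For comparison, the paper sidesteps the issue by routing through $\p^{\Vert C}$: it cites the standard containment $\p^{C[\log]}\subseteq\p^{\Vert C}$ (which is exactly your decision-tree unfolding, flattened into one non-adaptive round), and then observes that any $\p^{\Vert C}$ computation is a star-shaped \CDAG{} with all queries pointing into a single central output node---separator number $1$. This is shorter, and the promise-class concern you flag dissolves automatically in both constructions once the output node is in place: that node simply simulates $M$ on the received answers, and $M$ is by definition robust to arbitrary answers on invalid queries.
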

\noindent In words, any $\pC$ computation with a query graph of constant separator number is decidable in $\pClog$.

Second, an advantage of \Cref{thm:bsn-s} is that it scales with \emph{arbitrary} $s(n)$. Thus, to our knowledge, we obtain the first upper bounds for $\pC$ involving \emph{quasi}-polynomial resources:
\begin{restatable}{corollary}{corollaryQP}\label{cor:qp}
    For all integers $k\geq 1$ and $C\in\QV$, $\CDAG_{\log^k}\in \QP^{C[\log^{k+1}(n)]}$, where QP denotes quasi-polynomial time (\Cref{def:qp}).
\end{restatable}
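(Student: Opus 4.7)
The plan is to obtain this corollary as a direct instantiation of \Cref{thm:bsn-s} with the particular choice $s(n) = \log^k(n)$. Since $s$ depends only on $n$ and is evaluated via iterated logarithms, it is clearly efficiently computable, so the hypotheses of \Cref{thm:bsn-s} are satisfied for every $C \in \QV$.

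The next step is to simplify the two resource parameters appearing in the conclusion of \Cref{thm:bsn-s}. For the deterministic running time, I would substitute to get
\begin{equation}
2^{O(s(n)\log n)} \;=\; 2^{O(\log^k(n)\cdot \log n)} \;=\; 2^{O(\log^{k+1}(n))},
\end{equation}
which is precisely quasi-polynomial time in $n$ (matching the class \QP{} from \Cref{def:qp}). For the oracle queries, I would note that $s(n)\log n = \log^k(n)\cdot\log n = \log^{k+1}(n)$, which fits directly into the $C[\log^{k+1}(n)]$ query bound in the statement.

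Combining these two observations, the inclusion $\CDAG_{\log^k} \in \QP^{C[\log^{k+1}(n)]}$ follows immediately from \Cref{thm:bsn-s}. There is no genuine obstacle here; the only thing to be careful about is confirming that \Cref{thm:bsn-s} really permits non-constant $s(n)$ (which it does, since the statement is explicitly for any efficiently computable $s:\N\to\N$) and that the definition of quasi-polynomial time in \Cref{def:qp} matches the expression $2^{O(\log^{k+1}(n))}$ for every fixed $k\geq 1$. Given these points, the corollary is essentially a one-line substitution.
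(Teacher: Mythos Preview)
Your proposal is correct and matches the paper's approach: the paper states \Cref{cor:qp} immediately after \Cref{thm:bsn-s} and \Cref{thm:bsn} without a separate proof, treating it as a direct instantiation of \Cref{thm:bsn-s} with $s(n)=\log^k(n)$. Your substitution and check that $2^{O(\log^{k+1} n)}\in\QP$ are exactly what is needed.
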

\noindent In words, any $\pC$ computation with a query graph of polylogarithmic separator number is decidable in quasi-poly-time with polylog $C$-queries.
In general, $s(n)$ may scale as $O(n)$, in which case \Cref{thm:bsn-s} does not yield a non-trivial bound. Whether this can be improved is left as an open question (\Cref{sscn:open}).

Third, an example of a verification class which is \emph{not} known to satisfy promise gap amplification is \StoqMA{} (see, e.g.,~\cite{Aharonov2020}). Here, we also obtain non-trivial bounds, albeit weaker ones:
\begin{restatable}{theorem}{theoremBsnSStoqma}\label{thm:bsn-s-stoqma}
    Fix $C=\StoqMA$ and any efficiently computable function $s:\N\to\N$. Then,
    \begin{equation}\label{eqn:upper-stoqma}
        \CDAGs\in  \DTIME\left(2^{O(s(n)\log^2 n)}\right)^{C[s(n)\log^2 n]}.
    \end{equation}
\end{restatable}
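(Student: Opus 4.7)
The strategy is to re-run the separator-based recursion from \Cref{thm:bsn-s}, modified to account for the fact that $\StoqMA$ is only known to admit polynomial, not exponential, promise-gap amplification (cf.~\cite{Aharonov2020}).

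First, I would reuse the high-level algorithm: given the input $\CDAGs$, compute a balanced separator $S$ of size at most $s(n)$, enumerate the $2^{s(n)}$ possible $\YES/\NO$ assignments to the oracle nodes in $S$, and recurse on each of the induced component subgraphs. Since separator number is inherited by vertex-induced subgraphs, the recursion tree has depth $O(\log n)$ and $2^{O(s(n)\log n)}$ leaves, each a constant-size subproblem decidable by $O(1)$ direct $\StoqMA$ queries.

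Next, I would adapt the step of \Cref{thm:bsn-s} that compiles the verification at each recursion level into a single $C$-oracle query whose promise gap is amplified to $1/\exp(n)$. For $\StoqMA$ such amplification is not known, so instead I would amplify only to $1/\poly(n)$ gap and replace each ``strong'' query by a block of $O(\log n)$ polynomially-amplified $\StoqMA$ queries combined via majority vote (or, equivalently, binary search over the acceptance probability). This emulates a single exponentially-amplified query at a cost of $O(\log n)$ oracle calls, which is enough after union-bounding over the $2^{O(s(n)\log n)}$ branches of the recursion. Multiplying the $s(n)\log n$ query budget from \Cref{thm:bsn-s} by this $O(\log n)$ overhead gives the claimed $s(n)\log^2 n$ $\StoqMA$ queries; the deterministic outer machine iterates over the $2^{O(s(n)\log n)}$ guesses and assembles the bundled responses within $\DTIME\bigl(2^{O(s(n)\log^2 n)}\bigr)$.

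The main obstacle is verifying that every bundled query lies strictly outside $\StoqMA$'s promise region so that each oracle response is well-defined. Carefully tracking how promise gaps shrink under composition, and showing that a $1/\poly$ gap combined with $O(\log n)$ repetitions suffices to avoid the undefined region at \emph{every} recursion level of the simulation, is the most delicate part of the argument, and is precisely where the $\log^2 n$ overhead (rather than $\log n$) is fundamentally needed.
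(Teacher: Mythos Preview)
Your proposal rests on a mistaken picture of how \Cref{thm:bsn-s} is proved, and the fix you suggest does not work.

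The paper does \emph{not} run a recursive enumeration with a single exponentially-amplified $C$-query per level. Instead, it first compresses the query graph via the Compression Lemma (\Cref{l:compress}) into a graph $G^*$ on which every node has $O(sD)$ descendants, equips $G^*$ with a $c$-admissible weighting function $f^*$ (\Cref{def:weighting-function}), and then reduces deciding $G^*$ to approximating the maximum of the real-valued function $t$ in \Cref{eqn:t} via binary search using the $C$-oracle (\Cref{lem:approximating-t,lem:correct-query-string}). The number of oracle calls is $O(\log W_{f^*}(G^*))$. For $C\in\QV$ one amplifies each node's verifier to constant completeness/soundness, so $\eta=(\alpha-\beta)/2\in\Theta(1)$ and $c=\eta^{-1}\in O(1)$, giving $W_{f^*}(G^*)\le (c+1)^{O(sD)}n=2^{O(s\log n)}$. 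For $\StoqMA$, no amplification to constant gap is known, so $\eta=1/\poly(n)$ forces $c\in\poly(n)$; hence $W_{f^*}(G^*)\le (\poly(n))^{O(s\log n)}n=2^{O(s\log^2 n)}$, and the binary search needs $O(s\log^2 n)$ queries. That is the entire source of the extra $\log n$.

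Your proposed substitute, taking majority vote over $O(\log n)$ oracle responses, cannot help: an oracle call is \emph{deterministic} on valid inputs and arbitrary on invalid ones, so repeating it changes nothing. The promise-gap issue lives inside the verifier $V$ of \Cref{lem:approximating-t} (whose acceptance probability encodes $t$), not in the oracle interface, and the needed binary-search precision is governed by $\eta$ through the $c$-admissibility requirement of \Cref{lem:correct-query-string}. Your write-up never engages with the weighting-function machinery, which is where the extra $\log n$ genuinely enters.
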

\noindent %Similar to before, $\StoqMADAG_{\log^k}\in \QP^{\StoqMA[\log^{k+2}(n)]}$ for all $k\in\N$ (\Cref{cor:qp-stoqma}).
Note the extra log factor in the exponents --- this prevents \Cref{thm:bsn-s-stoqma} from recovering result $\p^{\Vert \StoqMA} = \p^{\StoqMA[\log]}$~\cite{Gharibian2019b} ($\p^{\Vert \StoqMA}$ corresponds to a $\StoqMADAG$ with $s(n)=1$). Nevertheless, we \emph{do} recover and improve on~\cite{Gharibian2019b} when we instead consider the case of bounded \emph{depth} query graphs next.

Finally, Gottlob \cite{Gottlob1995} also studied query graphs of bounded \emph{depth}.
The next theorem is an extension of his result.
We define $\CDAGd$ as $\CDAGs$, except now we consider query DAGs of \emph{depth} (\Cref{def:level}) at most $d$ (as opposed to separator number $s$).
\begin{restatable}{theorem}{theoremConstantDepth}\label{thm:constant-depth}
    Let $d:\N\to\N$ be an efficiently computable function. For $C\in\{\NP,\NEXP, \QMAexp\}$,
    $
        \CDAGd \subseteq \p^{C[d(n)\log(n)]},
    $
    and for $C\in\QVp$,
    \[
        \CDAGd \subseteq \DTIME\left(2^{O(d(n)\log(n))}\right)^{C[d(n)\log(n)]}.
    \]
\end{restatable}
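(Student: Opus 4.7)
The plan is to process the query DAG layer by layer in topological order. Assign each node $v$ its topological depth $\ell(v)\in\{0,1,\ldots,d(n)-1\}$; this is computable in P. By the definition of $\CDAGd$, every parent of a level-$i$ node lies at level strictly less than $i$, so once the answers to all queries at levels $<i$ have been fixed, the queries at level $i$ are mutually \emph{non-adaptive}: each one is a fixed polynomial-time function of its parents' answers, and is then an instance of $C$. A depth-$d$ $C$-DAG therefore factors into $d(n)$ sequential rounds of parallel $C$-queries, which is the key structural observation we will exploit.

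The first step is to set up an inductive simulation whose invariant after round $i$ is that the machine has recorded the correct answers to every level-$\le i$ query. For the inductive step, given the answers at levels $<i$, the list of level-$i$ query instances can be written down in polynomial time (using the classical preprocessing already assumed of $C$), and can then be solved in a single round of parallel $C$-queries. The theorem thus reduces to simulating one round of $\poly(n)$ parallel $C$-queries by $O(\log n)$ adaptive $C$-queries, and iterating this $d(n)$ times.

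For $C\in\{\NP,\NEXP,\QMAexp\}$, the equivalence $\p^{\Vert\NP}=\pNPlog$ of \cite{Hemachandra1989,Buss1991}, together with its adaptations to the exponential-time classes, immediately gives a per-round cost of $O(\log n)$ adaptive queries, and so $d(n)\cdot O(\log n)$ queries in total yield the bound $\p^{C[d(n)\log n]}$. For the promise classes $C\in\QVp$ --- in particular $\MA$, $\QCMA$, $\QMA$, $\QMA(2)$, and $\StoqMA$ --- I would instead have the simulator determine, at each layer, an $O(\log n)$-bit ``summary statistic'' (such as the number of accepting queries in that layer) via $O(\log n)$ adaptive $C$-queries, while deterministically enumerating over the $n^{O(1)}$ candidate values of this statistic and using the $C$-oracle to certify the correct one. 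Iterated over $d(n)$ layers, this uses $d(n)\log n$ oracle queries and runs in time $n^{O(d(n))}=2^{O(d(n)\log n)}$, matching the stated upper bound.

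The main obstacle I expect is the class-specific behaviour of the parallel-to-logarithmic reduction inside $\QVp$: for $\StoqMA$ promise-gap amplification is not known, and for the $\QMA$-type classes one must cope with queries that violate the promise gap in later layers (as in \cite{Gharibian2019b}). The deterministic enumeration overhead of $2^{O(d(n)\log n)}$ is precisely what we pay to absorb these class-specific subtleties into a single uniform statement; once the per-layer simulation is in place, iterating across layers and applying the original $\pC$ machine's polynomial-time postprocessing to the collected answers closes the argument.
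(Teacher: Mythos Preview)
Your approach differs from the paper's. The paper does not iterate over layers; it applies the admissible weighting function $\rho(v)=(c|V|)^{\depth(G)-\level(v)}$ of \Cref{lem:weighting-function} directly to $G$ (no graph transformation needed), giving total weight $W_\rho(G)\le n(cn)^{d(n)}$, and then runs \Cref{sscn:solve-c-dag} verbatim: one binary search on the scalar $T$ via the verifier of \Cref{lem:approximating-t} costs $O(\log W_\rho(G))=O(d(n)\log n)$ $C$-queries, and \Cref{lem:correct-query-string} certifies the recovered $x$. For $\NP/\NEXP/\QMAexp$ that verifier evaluates $t$ exactly (resp.\ to exponential precision), so the base machine stays polynomial; for the remaining classes its promise gap is $\Theta(\eta/W_\rho(G))$, forcing the verifier's size---and hence the base runtime---up to $n^{O(d(n))}$. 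For $C\in\{\NP,\NEXP\}$ your census idea can be made to work and is a valid, more elementary alternative (it is essentially Gottlob's original bounded-depth argument). Note, though, that $\p^{\Vert\NP}=\pNPlog$ as stated only extracts \emph{one output bit} from a parallel round; what actually drives your invariant is that an $\NP$/$\NEXP$ oracle, given the count $k_i$, can uniquely \emph{reconstruct} which $k_i$ layer-$i$ instances are YES by guessing witnesses---and this uniqueness holds precisely because there is no promise.

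For every promise class in the statement---$\QMAexp$ included, not only $\StoqMA$---your proposal has a genuine gap. Once invalid queries exist, the count $k_i$ (or any $O(\log n)$-bit per-layer summary) no longer pins down the layer-$i$ answers: an invalid node may or may not admit an above-threshold proof, so two oracle calls at layer $i{+}1$ can both legitimately ``reconstruct'' layer $i$ from $(k_0,\ldots,k_i)$ yet assign different bits to that node, and hence see \emph{different} layer-$(i{+}1)$ instances. Your layer-$(i{+}1)$ binary search is then ill-posed. The sentence ``the deterministic enumeration overhead \ldots\ is precisely what we pay to absorb these subtleties'' does not supply a mechanism: enumerating $n^{O(1)}$ candidate statistics per layer over $d$ layers is either $d\cdot n^{O(1)}$ time (sequential, and does not resolve the reconstruction ambiguity) or $n^{O(d)}$ joint combinations (and then costs $n^{O(d)}$ oracle queries, not $d\log n$). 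The paper's route sidesteps all of this, because \Cref{lem:correct-query-string} is proved for \emph{any} near-optimal $x$ of the single global functional $t$, and ``correct query string'' (\Cref{def:correct}) already permits arbitrary answers at invalid nodes.
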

\noindent Using this, we obtain that deciding a $\pC$ computation with a query graph of constant depth is $\pClog$-complete (\Cref{cor:constant-depth}). This modestly improves upon $\p^{\Vert \StoqMA} = \p^{\StoqMA[\log]}$~\cite{Gharibian2019b}, which is the case of $d=1$ (versus our $d\in O(1)$ in \Cref{thm:constant-depth}).

\paragraph{2. A unified framework for embedding $\pC$ problems into APX-SIM.} To date, there are two known approaches for embedding QMA-oracle queries (and thus $\pQMAlog$ problems) into APX-SIM: The ``query gadget'' construction of Ambainis~\cite{Ambainis2014}, and the ``flag-qubit'' framework\footnote{This is a significantly generalized version of the ``sifter'' construction of Gharibian and Yirka~\cite{Gharibian2019}.} of Watson, Bausch, and Gharibian~\cite{WBG20} . Each of these frameworks has complementary pros and cons: The former handles \emph{adaptive} oracle queries, but is difficult to use when strong \emph{geometric} constraints for APX-SIM are desired (e.g. the physically motivated settings of 1D and/or translationally invariant Hamiltonians), whereas the latter requires non-adaptive queries, but is essentially agnostic to the circuit-to-Hamiltonian\footnote{Here, a ``circuit-to-Hamiltonian mapping'' is a quantum analogue of the Cook-Levin construction, i.e. a map from quantum verification circuits to local Hamiltonians.} mapping used (and thus easily handles geometric constraints).

Here, we utilize the construction behind our main result, \Cref{thm:bsn-s}, to unify these approaches into a single framework for embedding arbitrary $\pC$ computations into APX-SIM. The crux of the reduction is the following ``generalized lifting lemma'', whose full technical statement (\Cref{lem:lift}) is beyond the scope of this introduction (below, we state a significantly simplified version\footnote{For example, \Cref{lem:lift} also takes a separator tree as part of its input; for pedagogical purposes, the informal version presented here ignores this, as a separator tree is computed in $\poly(N)$ time in all our applications of \Cref{lem:lift} anyway.}).

\begin{lemma}[(Informal) Generalized Lifting Lemma (c.f. Lifting Lemma of~\cite{WBG20})]\label{lem:liftinformal}
Fix $C\in\QVp$ and any local circuit-to-Hamiltonian mapping $\Hw$ (\Cref{def:local-mapping}). Define $N_d:=2^{O(d(n)\log n)}$, and $N_s:=2^{O(s(n)\log n)}$ if $C\in\QV$ or $N_s:=2^{O(s(n)\log^2 n)}$ if $C=\StoqMA$. Define $N:=\min(N_s,N_d)$, and let $G$ be a $\CDAG$ instance $n$ vertices of separator number $s(n)$ (as in \Cref{thm:bsn-s}) and depth $d(n)$ (as in \Cref{thm:constant-depth}). Then, there exists a $\poly(N)$-time many-one reduction from $G$ to an instance $(H,A)$ of APX-SIM, such that $H$ has size $\poly(N)$ and satisfies all geometric properties of $\Hw$ (e.g. locality of clauses, 1D nearest-neighbor interactions, etc).
\end{lemma}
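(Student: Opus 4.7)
The plan is to extend the flag-qubit lifting of Watson-Bausch-Gharibian beyond non-adaptive parallel queries by threading the separator-tree structure exposed in the proofs of \Cref{thm:bsn-s}, \Cref{thm:bsn-s-stoqma}, and \Cref{thm:constant-depth} through the Hamiltonian construction directly. Running the decomposition algorithm of whichever of these theorems gives the better bound on $G$ produces a tree $\mathcal T$ of size $\poly(N)$ whose root-to-leaf paths enumerate all consistent \emph{guess scenarios} for the answers of the separator vertices at every level of the recursion. Conditioned on a fixed leaf $\ell$, the remaining $C$-queries along $G$ become deterministically determined, and the full $\pC$ accept/reject decision reduces to a $\poly(N)$-size classical post-processing circuit applied to the guessed bits and to indicators that the guesses are consistent with the verifier's true outcomes.

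I would then construct $H = H_\mathrm{in} + H_\mathrm{guess} + \sum_{(\ell,v)} w_{\ell,v}\, H_w^{(\ell,v)} + H_\mathrm{out}$, where $\ell$ ranges over leaves of $\mathcal T$ and $v$ over vertices on $\ell$'s root-to-leaf path. Each $H_w^{(\ell,v)}$ is a flag-qubit gadget built from $\Hw$ applied to the verifier circuit at $v$ and controlled by an identifier of $\ell$ stored on a small \emph{guess register}; it penalizes the event that the verifier's outcome on $v$ disagrees with the guess recorded at $\ell$. The terms $H_\mathrm{in}$ and $H_\mathrm{guess}$ restrict the guess register to valid leaves of $\mathcal T$, and $H_\mathrm{out}$ encodes the classical post-processing circuit so that the local observable $A$ reads off the final accept bit. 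Geometric properties of $\Hw$ (locality, 1D nearest-neighbor, translation invariance, stoquasticity) are inherited by $H$, since each $H_w^{(\ell,v)}$ lives in a dedicated copy of the $\Hw$ layout and the remaining terms are classical in a product basis and hence can be arranged to respect the same layout without breaking it.

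The main obstacle is the weight-balancing step. The WBG20 framework was designed for non-adaptive parallel queries, where each gadget contributes independently to the total energy, whereas here different leaves of $\mathcal T$ share vertices and a single wrong guess on a shared vertex must simultaneously penalize every leaf that depends on it. Resolving this requires a quantum analogue of Gottlob's admissible-weighting function: the weights $w_{\ell,v}$ must be chosen so that, in the YES case, the unique ``true'' leaf of $\mathcal T$ admits a state whose expected observable value lies below $\alpha$, while in the NO case the accumulated penalty from any inconsistent guess or any rejecting leaf pushes the energy above $\beta$, preserving the inverse-polynomial APX-SIM gap. For $C = \StoqMA$, one must additionally verify that the flag-qubit gadgets remain stoquastic and that the extra logarithmic factor appearing in $N_s$ absorbs the weaker promise-gap amplification available for that class.
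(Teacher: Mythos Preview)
Your plan misses the paper's central simplification and, as a result, the claimed preservation of geometric properties is unjustified.

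The paper does \emph{not} build a Hamiltonian out of many flag-qubit gadgets $H_w^{(\ell,v)}$ controlled by a guess register. Instead, all of the separator-tree / admissible-weighting machinery is packed into a \emph{single} $C$-verifier $V$ (the one from \Cref{lem:approximating-t}): $V$ measures $x$, samples $i$ with probability $p_i = f^*(v_i)/W$, and runs $Q_i$, so that its acceptance probability is exactly $t(x,\psi_1,\dots,\psi_N)/W$. The circuit-to-Hamiltonian map $\Hw$ is then applied \emph{once} to $V$, and $H = \alpha\,\Hw(V) + M_2$ where $M_2$ penalizes the single output wire of $V$ (the new ``flag qubit''). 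The admissible weights live inside $V$'s sampling distribution, not as coefficients $w_{\ell,v}$ on Hamiltonian terms. This is why geometric properties are preserved for free: $\Hw$ is a black box applied to one circuit, so whatever locality/1D/translation-invariance/stoquasticity $\Hw$ guarantees is inherited automatically.

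Your architecture, by contrast, places many copies of $\Hw$-gadgets side by side and controls them via a shared guess register. The sentence ``each $H_w^{(\ell,v)}$ lives in a dedicated copy of the $\Hw$ layout and the remaining terms are classical in a product basis and hence can be arranged to respect the same layout'' is the gap: a guess register that must interact with $\poly(N)$ disjoint 1D chains is not itself 1D nearest-neighbor, and certainly not translation-invariant; for $\StoqMA$ the controlled-gadget structure is also not obviously stoquastic. Moreover, the ``main obstacle'' you flag (weight-balancing across shared vertices) is an artifact of doing the weighting at the Hamiltonian level; in the paper's approach it is already handled by \Cref{lem:correct-query-string} inside the verifier, and the Hamiltonian-side analysis (\Cref{l:low-flag}) becomes a short calculation with no exchange argument needed. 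So while your instinct to combine separator trees with admissible weighting and the flag-qubit framework is right, the specific decomposition you propose does not deliver the geometric-preservation claim, and the fix is precisely the paper's move: push everything into one verifier before touching $\Hw$.
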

\noindent  In words, one can embed any $\pC$ computation directly into an APX-SIM instance $H$ in $\poly(N)$ time, irrespective of the choice of $C$ or $\Hw$ (i.e. the mapping is essentially black-box). For clarity, a lifting lemma for APX-SIM was first given in~\cite{WBG20}, which our \Cref{lem:liftinformal} generalizes as follows: (1) \cite{WBG20} requires parallel queries to $C$, whereas \Cref{lem:liftinformal} allows arbitrary $\pC$ computations (parameterized by separator number $s$), and (2) \cite{WBG20} requires promise gap amplification for $C$, which is not known to hold for StoqMA, whereas \Cref{lem:liftinformal} allows $C=\StoqMA$.

Next, by applying our lifting lemma for $C=\QMA$ and $s\in O(1)$, we obtain $\pQMAlog$-hardness of APX-SIM (\Cref{thm:liftedhardness}). This is not surprising, since our \Cref{thm:bsn} shows $\CDAG\in\pQMAlog$, and APX-SIM is $\pQMAlog$-hard~\cite{Ambainis2014,Gharibian2019}. What \emph{is} interesting, however, is:
\begin{enumerate}
     \item The map from $\pC$ to APX-SIM of \Cref{lem:liftinformal} is ``direct'', meaning we embed all the query dependencies of the input C-DAG directly into the flag qubit construction.
        \item A poly-time reduction from $\pQMA$ to APX-SIM for all $1\leq s\leq n$ would imply $\pQMA=\pQMAlog$ and is therefore unlikely, if one believes $\pQMA\ne\pQMAlog$. However, \Cref{lem:lift} shows $\pQMA$ \emph{can} be embedded into APX-SIM, at the expense of blowing up the APX-SIM instance's size to $N=2^{O(s(n)\log n)}$.
        \item Finally and most interestingly, the construction of~\cite{WBG20} is somewhat mysterious, in that it ``compresses'' multiple QMA query answers into a \emph{single} flag qubit, which \emph{a priori} appears at odds with Holevo's theorem\footnote{Roughly, Holevo's theorem says that $n$ qubits cannot reliably transmit more than $n$ bits of information.}. In the present paper, we reveal \emph{why} this works --- our construction utilizes the ``admissible weighting function'' framework of~\cite{Gottlob1995}, which Gottlob used to reduce $\pNP$ computations to maximization of a real-valued function, $f$. But as we discuss in \Cref{sscn:techniques}, this is precisely what the flag qubit framework allows one to simulate (in both~\cite{WBG20} and here)!% Roughly, with just a local penalty term, one can force any low energy state of the Hamiltonian $H$ from the lifting lemma to maximize this implicitly embedded function, $f$, which in turn allows the history state to extract the correct query answers.
\end{enumerate}
In fact, we observe that~\cite{Ambainis2014} implicitly rediscovers\footnote{Like~\cite{Gottlob1995}, \cite{Ambainis2014} uses an exponentially growing weighting function to ensure soundness when simulating adaptive oracle queries, although the term ``admissible weighting function'' is not used in the latter.} a version of Gottlob's weighting function approach. Thus, underlying all three works of~\cite{Gottlob1995,Ambainis2014,WBG20}, as well as the current one, is a central unifying theme worth stressing:
\begin{theme}[Unifying theme]\label{theme:opt}
    The reduction of $\pC$ to maximizing a single real-valued function.
\end{theme}

Finally, for $C=\StoqMA$ and $s\in O(1)$, application of our lifting lemma is still possible (i.e. utilizing the $N_s$ term), but the Hamiltonian obtained is now quasi-polynomial in size, since $N:=2^{O(s(n)\log^2 n)}$ (\Cref{thm:liftedhardnessstoqma}). Luckily, we can instead utilize the $N_d$ term (i.e. bounded-depth setup) of the lifting lemma, which yields the desired $\poly(n)$-size output Hamiltonian when $d\in O(1)$. This means we recover the $\p^{\StoqMA[\log]}$-hardness result of \cite{Gharibian2019b} via the flag qubit framework (details in \Cref{sscn:apply},) resolving an open question of \cite{WBG20}. For clarity, \cite{Gharibian2019b}'s proof of this result is via perturbation theory, which we do not require here.

\paragraph{3. No-go statement for ``weak compression'' of polynomials.} To further drive home the point of \Cref{theme:opt}, we close with a simple no-go statement purely about polynomials. Roughly, given a real-valued polynomial $f$  (specified\footnote{Strictly speaking, we do not require arithmetic circuits to specify $f$. However, the multi-linear polynomials produced for our statement can have exponentially many terms if expanded fully in a monomial basis. To specify this succinctly, it suffices not to expand brackets in our polynomial descriptions (i.e. not replace $(x+y)(a+b)=xa+xb+ya+yb$); arithmetic cricuits are a natural avenue for formalising this.} via an arithmetic circuit (\Cref{def:arithmetic})), we define \emph{weak compression} as efficiently mapping $f$ to an efficiently computable real-valued function $g$, such that there exists an optimal point $y^*$ at which $g$ is maximized, from which (1) we may efficiently recover an optimal point $x^*$ maximizing $f$, and (2) $g(y^*)$ requires fewer bits than $f(x^*)$ to represent (i.e. has been ``compressed'').

\begin{restatable}{lemma}{lemmaPolycompress}\label{l:polycompress}
    Fix any function $h:R^+\to R^+$. Suppose that given any multi-linear polynomial $p$ (represented as an arithmetic circuit) requiring $B$ bits for some optimal solution (in the sense of \Cref{def:compress}), $p$ is weakly compressible to $h(B)$ bits. Then $\pNP\subseteq {\p^{\NP[h(B)]}}$.
\end{restatable}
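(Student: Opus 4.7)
My proof plan is a Krentel-style reduction from $\pNP$ to multi-linear polynomial optimization, combined with the hypothesized weak compressibility. I fix a canonical $\pNP$-complete decision problem: given a CNF formula $\phi$ on $n$ variables, decide a prescribed bit of
\[
  V^* := \max_{x \in \{0,1\}^n} p(x), \qquad p(x) = \phi(x) \cdot \sum_{i=1}^n 2^{n-i}\, x_i,
\]
where $\phi$ is its multi-linear encoding, built clause-by-clause as $1 - \prod_{\ell \in C}(1-\ell)$. On $\{0,1\}^n$, $V^*$ encodes the lexicographically largest satisfying assignment of $\phi$ (with $V^* = 0$ if $\phi$ is unsatisfiable), so $V^*$ has $B \leq n+1$ significant bits. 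The polynomial is specified by a polynomial-size arithmetic circuit, and standard symbolic simplification (collapsing $x_i^2 \mapsto x_i$, or using auxiliary variables) ensures the multi-linearity required by \Cref{def:compress}.

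Invoking the hypothesis on $p$ produces an efficiently computable $g$ and an optimal point $y^*$ with $V_g := g(y^*)$ representable in $h(B)$ bits, along with an efficient map $y^* \mapsto x^*$ recovering an optimizer of $p$. A deterministic polynomial-time machine then computes $V_g$ by adaptive binary search, issuing NP queries of the form ``does there exist $y$ with $g(y) \geq t$?''; this lies in $\NP$ because $g$ is efficiently computable, and exactly $h(B)$ such queries pin down $V_g$. For the final bit of the search, I would fold the $\pNP$ decision into the query itself, asking instead ``does there exist $y$ with $g(y) \geq t$ \emph{and} the efficiently-recovered $x^*$ satisfies the prescribed $\pNP$ predicate?'' This single query simultaneously determines the low bit of $V_g$ and delivers the original $\pNP$ answer, keeping the total oracle cost at exactly $h(B)$.

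The main obstacle is preventing a post-hoc search-to-decision step from inflating the query budget by $|y^*|$ extra calls: naively, after computing $V_g$, one would need further queries to materialize $y^*$ bit-by-bit before recovering $x^*$. The folded-in final query above sidesteps this, because the recovery map $y^* \mapsto x^*$ is efficient and can therefore be evaluated \emph{inside} a single $\NP$ certificate that guesses $y$, checks the threshold on $g(y)$, recovers $x^*$, and verifies the predicate. Read modulo a constant additive slack in $h$ (benign, since $h$ is an arbitrary positive function fixed in advance), the inclusion $\pNP \subseteq \p^{\NP[h(B)]}$ follows.
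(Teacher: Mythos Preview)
Your argument is correct and shares the paper's high-level strategy: reduce $\pNP$ to maximizing a multi-linear polynomial whose optimum needs $B$ bits, invoke the compression hypothesis to get $g$, then binary-search for $g(y^*)$ with $O(h(B))$ NP queries, folding the final $\pNP$ decision into an NP certificate that guesses $y$, checks the threshold, applies the recovery map, and verifies the predicate.

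The route differs in how the polynomial is built. The paper stays inside its own machinery: it maps an arbitrary $\pNP$ machine to an NP-DAG, applies the admissible weighting function $\omega(v)=3^{\abs{\desc(v)}}$ to obtain the function $t$ of \Cref{eqn:f}, and then (via \Cref{l:poly}) arithmetizes each verifier through Cook--Levin and linearizes to produce $\pout$. You instead go straight through Krentel's $\pNP$-complete lex-max-SAT problem with the single polynomial $\phi(x)\cdot\sum_i 2^{n-i}x_i$, which is more elementary and avoids the weighting-function apparatus entirely. Both constructions face the same technicality of producing a poly-size \emph{multi-linear} circuit; you hand-wave the $x_i^2\mapsto x_i$ collapse, while the paper writes out an explicit iterated linearization step. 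One place where you are actually more careful than the paper: its proof sketch says binary search ``suffices to identify an optimal input'' without spelling out how one goes from the optimal \emph{value} $g(y^*)$ back to a witness; your folding argument makes this explicit, and your remark about constant additive slack is harmless since $\p^{\NP[f]}$ is defined up to $O(f)$ queries in \Cref{def:oracle-classes}.
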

\noindent Let us be clear that this statement is not at all surprising for the reader familiar with Krentel's work%\footnote{Roughly, Krentel~\cite{K88} shows that optimizing a function $f$ on a finite set $S$ (formalized by taking the maximum over non-deterministic branches of a ``metric Turing machine''~\cite{K88}), where the maximum $f(x)$ over $x\in S$ requires $k$ bits to specify, is equivalent to endowing an FP machine with $k$ queries to an NP oracle.}
~\cite{K88} on OptP (see \Cref{sscn:relatedwork}). Nevertheless, we believe it is worth formalizing, as it uses complexity theory to give a no-go statement about a purely mathematical concept (non-compressibility of polynomials). From \Cref{l:polycompress}, one obtains:
\begin{restatable}{corollary}{corollaryPolycompressA}\label{cor:polycompress1}
    If any multi-linear polynomial $p$ (represented as an arithmetic circuit) can be weakly compressed with $h(B)=O(\log B)$, then $\pNP\subseteq {\p^{\NP[\log]}}$.
\end{restatable}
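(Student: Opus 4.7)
The plan is to invoke Lemma \ref{l:polycompress} essentially as a black box, with the only real work being to argue that the parameter $B$ appearing there is polynomially bounded in the input length $n$ of the underlying $\pNP$ instance.

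Concretely, first I would apply Lemma \ref{l:polycompress} with the hypothesized compression rate $h(B)=O(\log B)$, obtaining $\pNP\subseteq \p^{\NP[O(\log B)]}$, where $B$ is the bit-length described by \Cref{def:compress} for the multi-linear polynomial $p$ constructed in that lemma's proof from a generic $\pNP$ computation on input of length $n$. Second, I would argue $B\le\poly(n)$. The polynomial $p$ built in the proof of Lemma \ref{l:polycompress} encodes a $\pNP$ computation in the Krentel/Gottlob ``admissible weighting function'' style invoked elsewhere in the paper: the at most $\poly(n)$ adaptive NP-query answer bits are aggregated into integer weights (typically powers of $2$) of total magnitude at most $2^{\poly(n)}$, realized via a $\poly(n)$-size arithmetic circuit over Boolean inputs. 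Consequently the optimal value of $p$ has magnitude at most $2^{\poly(n)}$, so $B\le\poly(n)$, and therefore $\log B=O(\log n)$.

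Combining the two steps gives
\[
    \pNP\;\subseteq\;\p^{\NP[h(B)]}\;=\;\p^{\NP[O(\log B)]}\;\subseteq\;\p^{\NP[O(\log n)]}\;=\;\p^{\NP[\log]},
\]
which is exactly the claimed inclusion.

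The only mild obstacle is the bookkeeping in step two: one must trace through Lemma \ref{l:polycompress}'s explicit construction of $p$ to confirm that both the arithmetic circuit size and the coefficient magnitudes are polynomially bounded in $n$, so that the optimal value is indeed representable in $\poly(n)$ bits. This should be immediate from the construction, since it simulates a polynomial-time machine making $\poly(n)$ oracle queries, but it is the only nontrivial ingredient; everything else is a direct substitution into Lemma \ref{l:polycompress}.
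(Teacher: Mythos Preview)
Your proposal is correct and follows essentially the same approach as the paper: invoke \Cref{l:polycompress} as a black box and observe that the admissible weighting function used in its proof (here $\omega(v)=3^{|\desc(v)|}$, not powers of $2$, but this is immaterial) has at most exponential total weight on an arbitrary NP-DAG, so $B\le\poly(n)$ and hence $h(B)=O(\log B)=O(\log n)$.
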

\begin{restatable}{corollary}{corollaryPolycompressB}\label{cor:polycompress2}
    If any multi-linear polynomial $p$ requiring $B\in O(1)$ bits for some optimal solution can be weakly compressed with $h(B)=1$, then the Polynomial-Time Hierarchy (PH) collapses to its third level (more accurately, to $\textup{P}^{\Sigma_2^p}$).
\end{restatable}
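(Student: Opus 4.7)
The plan is to apply the argument behind Lemma \ref{l:polycompress} to computations in $\p^{\NP[O(1)]}$, deduce the collapse $\p^{\NP[O(1)]} = \pNPone$, and then invoke Kadin's Boolean-hierarchy collapse theorem to derive the claimed PH collapse.

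First, I would inspect the construction in the proof of Lemma \ref{l:polycompress}: for a $\pNP$ machine making $q$ adaptive NP queries on input $x$, the Krentel-style multi-linear polynomial $p_x$ produced has optimal value encoding the $q$-bit string of oracle answers (suitably weighted so that maximization picks out the lexicographically first accepting transcript). The number of bits $B$ required to represent $p_x$'s maximum is thus $O(q)$. Restricting attention to $\p^{\NP[k]}$ machines with $k$ a fixed constant then yields polynomials with $B \in O(1)$. By the hypothesis of the corollary, every such $p_x$ is weakly compressible to a polynomial-time function $g_x$ whose optimum requires only $h(B)=1$ bit. Plugging this $g_x$ back into the binary-search argument of Lemma \ref{l:polycompress} shows that a single NP query to locate the optimal value of $g_x$, together with polynomial-time postprocessing to extract the original accept/reject bit from a corresponding $y^*$, suffices. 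Hence $\p^{\NP[O(1)]} \subseteq \pNPone$, and so $\p^{\NP[O(1)]} = \pNPone$.

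Next, I would translate this query-hierarchy collapse into a Boolean-hierarchy collapse over NP. It is a standard equivalence (due to Wagner, and K\"obler--Sch\"oning--Wagner) that the adaptive query hierarchy $\{\p^{\NP[k]}\}_{k\geq 1}$ and the Boolean hierarchy $\{\mathrm{BH}_k\}_{k\geq 1}$ over $\NP$ interleave, so the assumed collapse $\p^{\NP[k+1]} = \p^{\NP[k]}$ for some fixed $k$ is equivalent to $\mathrm{BH}$ collapsing at some finite level. Invoking Kadin's theorem (in its refined Chang--Kadin form), any such finite collapse of the Boolean hierarchy forces $\PH$ to collapse to its third level $\p^{\Sigma_2^p}$, which is precisely the desired conclusion.

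The main obstacle is the first step, namely making precise the bound $B = O(q)$ from the construction of Lemma \ref{l:polycompress}. This requires verifying that the weighting scheme used to force the polynomial's maximum to encode the first accepting transcript does not blow the maximum value beyond $2^{O(q)}$. Modulo this bookkeeping, the query-hierarchy/Boolean-hierarchy dictionary together with Kadin's collapse theorem gives the result in a black-box manner; one secondary care point is picking the form of Kadin's theorem whose collapse target lands exactly at $\p^{\Sigma_2^p}$ (equivalently $\Delta_3^p$), rather than at a slightly different variant within the third level.
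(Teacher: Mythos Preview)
Your proposal is correct and follows essentially the same line as the paper's proof: restrict Lemma~\ref{l:polycompress} to $\p^{\NP[k]}$ for constant $k$, observe that the admissible weighting function $\omega$ then has $O(1)$ total weight so $B\in O(1)$, apply the compression hypothesis to land in $\pNPone$, and invoke a known collapse result. The only difference is in the final step: the paper cites Hartmanis's result directly (that $\pNPtwo=\pNPone$ implies $\PH=\p^{\Sigma_2^p}$), whereas you route through the query-hierarchy/Boolean-hierarchy equivalence and then Kadin's theorem (in Chang--Kadin form). Both paths land at $\Delta_3^p=\p^{\Sigma_2^p}$; the paper's is a one-line citation, yours unpacks the same machinery that underlies Hartmanis's statement. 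Your description of the encoded optimum as the ``lexicographically first accepting transcript'' is not quite what the admissible weighting construction does, but this is cosmetic and does not affect the argument.
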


\subsection{Techniques}\label{sscn:techniques}
\paragraph{1. Techniques for deciding C-DAGs.} At a high-level, our approach follows Gottlob's strategy for $\pNP$~\cite{Gottlob1995}: Given\footnote{Gottlob's modeling of query graphs is slightly different, in that nodes of the DAG encode propositional formulae, whereas here it is more convenient to put verification circuits at nodes.} a C-DAG $G$, we (1) ``compress'' $G$ to an equivalent query $G'$, (2) define an ``admissible weighting function'' on $G'$, (3) define an appropriate verifier $V$, on which binary search via C-oracle queries suffices to extract the original C-query answers in $G$, and thus to decide $G$ itself.
The key steps at which we deviate significantly from~\cite{Gottlob1995} are (1) and (3), as we now elaborate.

In more detail, in order to decide $G$, the goal is to compute a \emph{correct query string} $x$ for $G$, i.e. a string of answers to the $C$-oracle queries asked by $G$.
(Note $x$ is not necessarily unique when $C$ is a \emph{promise} class such as QMA.)
For this, fix any topological order $T$ on the nodes of $G$.
The clever insight of~\cite{Gottlob1995} (rediscovered in~\cite{Ambainis2014}), is that by ``weighting'' queries early in $T$ exponentially larger than queries later in $T$, one can force all queries, in order, to be answered correctly.
Roughly speaking, such an exponential weighting scheme $\omega$ is called ``admissible'' (\Cref{def:weighting-function}).
The core premise is then to map $(G,\omega)$ to a real-valued function $f$, so the maximum value of $f$ encodes the query string $x$.
Hence, by conducting binary search on $f$ via the C-oracle, one can identify $f$'s optimal value, thus recovering $x$. The  challenge is that for \emph{arbitrary} $G$, the maximum value of $f$ can scale exponentially in $n$, the number of nodes in $G$.
Thus, one requires $\poly(n)$ queries to extract $x$, obtaining no improvement over the $\pC$ computation $G$ we started with!\\

\begin{figure}[t]
    \begin{center}
        \includegraphics{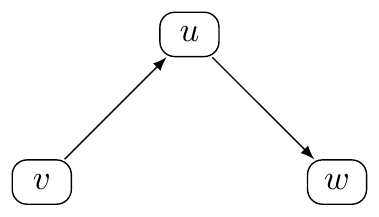}
        \hspace{10mm}
        \includegraphics{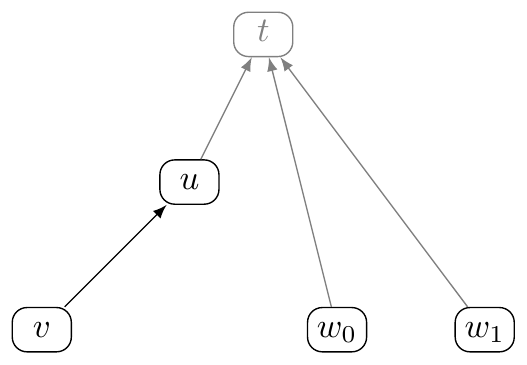}
       \caption{Simple example of a graph transformation, where the outputs of $u$ are decoupled by creating copies $w_0,w_1$ with hardcoded inputs. $t$ selects the copy of $w$ depending on the output of $u$.}
       \label{fig:simpletransform}
    \end{center}
\end{figure}

\noindent\emph{Compressing $G$.} To overcome this in our setting of bounded separator number (and beyond), we first recursively compute separators of $G$, obtaining a ``separator tree'' (\Cref{ssscn:separatortree}) structure overlaying $G$.
With this separator tree in hand, we show our main technical lemma, the Compression Lemma (\Cref{l:compress}).
Roughly, the idea behind the Compression Lemma is to ``decouple'' dependencies in $G$ by creating multiple copies of a node.
To illustrate, an oversimplified example is given in \Cref{fig:simpletransform}, where the output node $w$ depends on $u$, which depends on $v$.
(Each node encodes, say, an NP query.)
To remove the dependency of $w$ on $u$, we create two copies $w_0$ and $w_1$, where the input from $u$ is hardcoded as $0$ or $1$, respectively.
Then an output node $t$ is added to select the correct copy of $w$ depending on the output of $v$.

For clarity, this basic decoupling principle is reminiscent of that employed in~\cite{Gottlob1995}. However, whereas the latter maps $G$ to $G'$ via iterative local transformations (similar to \Cref{fig:simpletransform}, but without the $t$ node), here we are unable to make such an approach work.
Indeed, due to the much stronger coupling between nodes in our setting, we appear to acquire a carefully orchestrated, \emph{global} transformation of $G$ to $G'$.
Roughly, we must carefully exploit the separator tree as a guide to recursively create node copies and reroute wires, at the end of which we introduce a ``conductor\footnote{Meant in the sense of an ``orchestra conductor''.}'' node $t$ to orchestrate the madness.
For the reader interested in a brief peek at details (\Cref{sscn:graph-transformation}), \Cref{fig:transform} runs through an example graphically depicting the global compression, and \Cref{alg:compute-inputs} is used (e.g.) in $t$ to recursively orchestrate and compute the final output of the new C-DAG, $G'$.
The upshot of this global transformation is that, when $s\in O(1)$, $G'$ is ``compressed'' in such a way that (roughly) we can define an admissible weighting function of at most $\poly(n)$ weight on $G'$, as we do next.\\% (intuitively, this will allow us to get away with $O(\log n)$ $C$-queries to decide $G'$).\\

\noindent\emph{Designing the verifier $V$.} The second main step (\Cref{sscn:solve-c-dag}) is to use an admissible weighting function on $G'$ to ``reduce'' $G'$ to maximization of a real-valued function, $t$ (\Cref{theme:opt}); we use (\Cref{eqn:t})
\begin{equation}\label{eqn:tintro}
  t(x,\psi_1,\dots,\psi_\nn) := \sum_{i=1}^\nn f(v_i) \bigl(\strut x_i\Pr[Q_i(z_i(x),\psi_i)=1] + (1-x_i)\gamma\bigr),
\end{equation}
where intuitively, $f(v_i)$ is the weight at node $i$, and $\Pr[Q_i(z_i(x),\psi_i)=1]$ is the probability that $C$-verifier $Q_i$ at node $v_i$ accepts, given incoming wires $z_i(x)$ from its parents and proof $\ket{\psi_i}$.
Function $t$ is carefully designed so that (1) any ``approximately maximum'' value of $t$ encodes a correct query string $x$ (\Cref{lem:correct-query-string}), and (2) we can design a C-verifier $V$ with acceptance probability precisely $t(x,\psi_1,\dots,\psi_\nn)$ (up to renormalization) (\Cref{lem:approximating-t}). Thus, binary search via $V$ allows us to extract $x$ from $t$. Crucially, by the compression of the previous step, when $s\in O(1)$, the maximum value of $t$ is at most $\poly(n)$, meaning $O(\log n)$ $C$-queries suffice in the binary search. Moreover, our $V$ is simple --- it simulates a random $Q_i$ (according to the distribution induced by weights $f(v_i)$) on $(x,\ket{\psi_i})$. We exploit this by defining $t$ over a \emph{cross product} of proofs $\ket{\psi_i}$ (rather than a tensor product, as is usual); this sleight of hand avoids complications regarding entanglement across proofs from previous works (e.g. \cite{WBG20}).

\paragraph{2. Techniques for a unified APX-SIM framework.} Roughly, \cite{WBG20} embeds a (say) $\pQMAlog$ computation $\Pi$ into APX-SIM as follows: (1) Build a ``superverifier'' circuit $V$, which verifies each of the queries of $\Pi$ in parallel, and conditioned on the output of each subverifier, performs a small rotation on a shared ``flag qubit'', $q$. The superverifier $V$ is then pushed through an abstract circuit-to-Hamiltonian mapping $\Hw$, and the encoding of $q$ in the resulting Hamiltonian $\Hw(V)$ is carefully penalized to force low energy states to correctly answer all queries. The advantage of this setup is that it is oblivious to the choice of $\Hw$; the disadvantage is that it requires a somewhat involved exchange argument to ensure soundness against entanglement across parallel proofs.

Recall now that our main construction rolls up an entire arbitrary C-DAG into a single C-verifier, $V$ (\Cref{lem:approximating-t}). What we next show is that our $V$ can rather simply be substituted for the superverifier $V$ of~\cite{WBG20} in the flag qubit construction. The key reason this works is again \Cref{theme:opt} --- since, as mentioned above, the acceptance probability of our $V$ literally encodes the value of $t$, we can treat the output wire of our $V$ as the ``new flag qubit'' $q$ (thus eliminating the multiple rounds of small rotations in~\cite{WBG20}). As in~\cite{WBG20}, by then mapping $V$ to $\Hw(V)$, we can now penalize $q$ on the Hamiltonian side to force all low energy states of $\Hw(V)$ to implicitly maximize $t$! Finally, we remark that since our $V$ is naturally robust against entanglement across proofs, our proof of correctness is significantly simpler than~\cite{WBG20}.

\paragraph{3. Techniques for ``weak compression'' of polynomials.} This result follows easily by combining \Cref{sec:approximating-t} with standard techniques, so we keep the discussion brief. Roughly, given an NP-DAG, we (1) apply the Cook-Levin theorem to map each NP verifier to a SAT formula, (2) arithmetize each of these SAT formula and combine them to simulate \Cref{eqn:tintro} on the Boolean hypercube, and (3) linearize the resulting multi-variate polynomial; denote the output as $p$. Since $p$ is multilinear, it is maximized on our domain of interest on a vertex of the hypercube; thus, by design, from the maximum value of $p$, we can recover the maximum value of $t$, from which we can extract the correct query string for the input NP-DAG. The argument is concluded by observing that to identify the maximum $p^*$ of $p$, a binary search via NP-oracle requires $O(\log (p^*))$ queries. As an aside, the collapse of PH in \Cref{cor:polycompress2} leverages Hartmanis' result that if $\pNPtwo=\pNPone$, then $\PH=\textup{P}^{\Sigma_2^p}$~\cite{Hartmanis1993}.

\subsection{Related Work}\label{sscn:relatedwork}

\paragraph{The classes $\pNP$ and $\pNPlog{}$.}
As mentioned above, $\NP\cup\coNP\subseteq\pNPlog\subseteq \Sigma_2^p$, and $\pNPlog\subseteq\PP$~\cite{Beigel1989}.
It holds that $\pNPlog{} = \p^{\Vert\NP}$ \cite{Hemachandra1989,Buss1991}.
Gottlob~\cite{Gottlob1995} showed that $\p^\NP$ with a tree for a query graph equals $\pNPlog{}$ (this also follows from our \Cref{thm:bsn}).
It is believed that for any $k\in O(1)$, the classes $\textup{P}^{\textup{NP}[k]}$, $\textup{P}^{\textup{NP}[\log^k n]}$, and $\textup{P}^{\textup{NP}}$ are distinct.
For example, $\textup{P}^{\textup{NP}[1]}=\textup{P}^{\textup{NP}[2]}$ implies both $\textup{P}^{\textup{NP}[1]}=\pNPlog$ and a collapse of PH to $\Delta_3^p=\textup{P}^{\Sigma_2^p}$~\cite{Hartmanis1993}.
However, it is known that $\textup{P}^{\textup{NP}[\log^k(n)]}=\textup{P}^{\Vert\textup{NP}[\log^{k+1}(n)]}$ for all $k\geq 1$~\cite{Castro1992}.
Complete problems for $\pNPlog{}$ include determining a winner in Lewis Carroll's 1876 voting system~\cite{Hemaspaandra1997}, and a $\textup{P}^{\textup{NP}[\log^2 n]}$-complete problem is model checking for certain branching-time temporal logics~\cite{S03}.

Closely related to one of the central themes of this work, \Cref{theme:opt}, is Krentel's~\cite{Krentel1988} work on OptP.
Roughly, OptP[$z(n)$] is the class of \emph{functions} (i.e. not decision problems) computable via maximization of a real-valued function, where the function is restricted to $z(n)$ bits of output precision.
Krentel shows the classes OptP[$z(n)$] and $\textup{FP}^{\textup{NP}[z(n)]}$ are equivalent (FP the set of \emph{functions} computable in poly-time).
Through this,~\cite{Krentel1988} obtains (e.g.) that determining whether the length of the shortest traveling salesperson tour in a graph $G$ is divisible by $k$ is $\pNP$-complete, but that determining if the size of the max clique in $G$ is divisible by $k$ is only $\pNPlog$-complete.
Before this, Papadimitriou had shown~\cite{P82} that deciding if $G$ has a \emph{unique} optimum traveling salesperson tour is $\pNP$-complete.

\paragraph{$\QMA$, $\pQMAlog$ and related classes.}
Kitaev's ``quantum Cook-Levin/circuit-to-Hamiltonian'' construction showing QMA-completeness for the local Hamiltonian problem has since been greatly extended to many settings (e.g. \cite{Kempe2003,Kempe2006,AGIK09,GI09}).
For $\QMA(2)$, Chailloux and Sattath~\cite{Chailloux2012} showed the \emph{separable sparse Hamiltonian problem} is $\QMA(2)$-complete.
Fefferman and Lin~\cite{Fefferman2016} prove that the local Hamiltonian problem with \emph{exponentially} small promise gap is $\PSPACE{}$-complete.
See (e.g.)~\cite{Osborne2012a,Gharibian2015b} for surveys and further results.

Ambainis~\cite{Ambainis2014} initiated the study of $\pQMAlog$, and showed APX-SIM is $\pQMAlog$-complete and SPECTRAL GAP (deciding if the spectral gap of a local Hamiltonian is large or small) is $\p^{\textup{UQMA}[\log]}$-hard.
These results were obtained for log-local observables (APX-SIM) and Hamiltonians (APX-SIM and SPECTRAL GAP).
Gharibian and Yirka~\cite{Gharibian2019} improve both results to $O(1)$-local, and show $\pQMAlog\subseteq\PP$.
In contrast to $\pNPlog$, $\pQMAlog$ is \emph{not} believed to be in PH, since even BQP is believed outside of PH~\cite{Aa10,RT18}.
Gharibian, Piddock, and Yirka~\cite{Gharibian2019b} next obtain a complexity classification of $\pQMAlog$ (along the lines of Cubitt and Montanaro~\cite{CM16}) depending on the class of Hamiltonians employed; this includes, for example, $\pStoqMAlog$-completeness for APX-SIM on stoquastic Hamiltonians.
They also introduce the ``sifter'' framework to show the first $\pQMAlog$-hardness result for 1D Hamiltonians on the line.
Watson, Bausch, and Gharibian~\cite{WBG20} significantly extend the sifter framework to develop the flag-qubit framework (also used in \Cref{scn:APXSIM}), showing (among other results) that APX-SIM on 1D {translation-invariant} systems is $\p^{\QMAexp}$-complete.

Most recently, Watson and Bausch~\cite{WB21} show a $\p^{\QMAexp}$-completeness result for approximating a critical boundary in the phase diagram of a translationally-invariant Hamiltonian.
Aharonov and Irani~\cite{AI21} and Watson and Cubitt~\cite{WC21} simultaneously and independently study variants of the problem of computing digits of the ground state energy of a translationally invariant Hamiltonian in the thermodynamic limit.
The former shows that the function version of this problem lies between $\textup{FP}^{\NEXP}$ and $\textup{FP}^{\QMAexp}$, while the latter shows that a decision version of the energy density problem is between $\p^{\textup{NEEXP}}$ and $\textup{EXP}^{\QMAexp}$ (for quantum Hamiltonians).

\subsection{Open questions}\label{sscn:open}

First, can our main result (\Cref{thm:bsn-s}) be extended to further classes of graphs, perhaps by considering different parameterizations, such as graphs with logarithmic pathwidth (which includes the case of constant separator number)?
Second, \Cref{thm:bsn-s} gives non-trivial bounds when (say) $s\in O(1)$ or $s\in O(\polylog(n))$.
For $s\in \Theta(n)$, however, the DTIME base therein scales as $2^n$, thus yielding a trivial upper bound on $\CDAGs$.
Can our bound be improved from $\DTIME\left(2^{O(s(n)\log n)}\right)^{C[s(n)\log n]}$ to $\DTIME\left(2^{O(s(n))}\right)^{C[s(n)\log n]}$ (i.e. shave off the extra log factor in the base)?
If so, one would immediately recover the $\p^{\Vert\StoqMA}\subseteq\p^{\StoqMA[\log]}$ result of~\cite{Gharibian2019b} (currently, we rely on \Cref{thm:constant-depth} to recover this here), and more generally, our framework would not take a hit when applied to classes $C$ without promise gap amplification.
However, what is \emph{unlikely} is to show a bound of $\DTIME\left(2^{O(s(n))}\right)^{C[s(n)]}$ --- since $\p^{\Vert \NP}$ has $s\in O(1)$, this would imply $\p^{\Vert \NP}=\pNPlog\in\p^{\NP[k]}$ for $k\in O(1)$.
Third, do our theorems also hold for complexity classes such as UniqueQMA ($\UQMA{}$) or $\QMA_1$ (QMA with perfect completeness)?
Here, the main difficulty seems to be \emph{invalid} queries (queries violating the promise), as then the verifier from \Cref{lem:approximating-t} does not necessarily have a unique proof or perfect completeness.
One could also consider \AM-like complexity classes instead of the \MA-like classes we used.

\section{Preliminaries}\label{sec:preliminaries}

\paragraph{Notation.} $S=\bigcupdot_i S_i$ denotes a partition of set $S$ into subsets $S_i$. $:=$ denotes a definition.

\paragraph{Promise problems.} Due to the inherently probabilistic nature of quantum computation, the quantum complexity classes we are interested in are defined in terms of \emph{promise problems}.
A promise problem $\Pi$ is defined by a tuple $\Pi = (\Piy, \Pin, \Pii)$ with $\Piy\cupdot\Pin\cupdot\Pii = \bin^*$.
We call $x\in\Piy$ a \emph{yes-instance}, $x\in\Pin$ a \emph{no-instance}, and $x\in\Pii$ an \emph{invalid instance}.

\begin{definition}[\QP{} (quasi-polynomial time)]\label{def:qp}
    $\QP=\bigcup_k\DTIME(n^{\log^k n})$ is the set of problems accepted by a deterministic Turing machine in quasi-polynomial time.
\end{definition}

\subsection{Quantum Complexity Classes}\label{sscn:complexity classes}

The circuits used by quantum complexity classes belong to \emph{polynomial-time uniform quantum circuit families} $\{ Q_n\}$.
That means, there exists a Turing machine that on input $n$ outputs a classical description of a quantum circuit $Q_n$ in time $\poly(n)$.
Qubits are represented by the Hilbert space $\B := \C^2$.

The arguably most natural quantum analogue of \NP{} (or \MA) is $\QMA$, where a $\BQP$-verifier is given an additional quantum proof.

\begin{definition}[$\QMA$]
    Fix polynomials $p(n)$ and $q(n)$.
    A promise problem $\Pi$ is in $\QMA$ (Quantum Merlin Arthur) if there exists a polynomial-time uniform quantum circuit family $\{Q_n\}$ such that the following holds:
    \begin{itemize}
        \item For all $n$, $Q_n \in \U\left(\B_A^{\otimes n} \otimes \B_B^{\otimes p(n)}\otimes \B_C^{\otimes q(n)}\right)$. The register $A$ is used for the input, $B$ contains the proof, and $C$ the ancillae initialized to $\ket0$.
        \item $\forall x\in\Pi_\text{yes} \;\exists \ket\psi \in \B^{\otimes p(|x|)}: \Pr[Q_{|x|}\text{ accepts } \ket{x}\ket\psi] \ge 2/3$
        \item $\forall x\in\Pi_\text{no} \;\forall \ket\psi \in \B^{\otimes p(|x|)}: \Pr[Q_{|x|}\text{ accepts } \ket{x}\ket\psi] \le 1/3$
    \end{itemize}
\end{definition}

Here, we say a quantum circuit $Q_n$ accepts an input $\ket x\ket\psi$ if measuring the first qubit of the ancilla register $C$ in the standard basis results in outcome $\ket1$.
The acceptance probability is then given by
\begin{equation}
    \Pr[Q\text{ accepts }\ket x\ket \psi] = \enorm*{
    \left(I_A\otimes\ketbra11_{C_1}\otimes I \right)  U\ket x_A\ket\psi_B\ket0^{\otimes q(n)}_C}^2.
\end{equation}

Note that the thresholds $c=2/3$ and and $s=1/3$ may be replaced with $c=1-\epsilon$ and $s=\epsilon$ such that $\epsilon \ge 2^{-\poly(n)}$ \cite{Kitaev2002}.
We refer to $c$ as \emph{completeness}, $s$ as \emph{soundness}, and $c-s$ as the \emph{promise gap}.

We also consider special cases of $\QMA$.
In $\QCMA$, the proof is classical, i.e. $\ket\psi\in\bin^{p(n)}$.
In $QMA(k)$, the verifier receives $k$ unentangled proofs, i.e. $\ket{\psi} = \bigotimes_{j=1}^k\ket{\psi_j}$).
It holds that $\QMA(2) = \QMA(\poly(n))$ as shown by Harrow and Montanaro~\cite{Harrow2013}.
Therefore, probability amplification is possible.
In $\QMAexp$, $p(n)$ and $q(n)$ are allowed to be exponential (i.e. $2^{\poly(n)}$) and $\{Q_n\}$ is an exponential-time uniform quantum circuit family.
$\QMAexp$ can be considered the quantum analogue of $\NEXP$.

The classical complexity classes $\NP$ and $\MA$ (Merlin-Arthur) may also be considered special cases of $\QMA$.
Restricting $\QMA$ to classical proofs and classical randomized verifiers results in $\MA$.
Additionally requiring perfect completeness and soundness yields $\NP$.
Note that $\NP$ and $\MA$ are usually equivalently defined as the problems accepted by nondeterministic (randomized) Turing machines.

Next, we define the $k$-local Hamiltonian problem, which was shown to be $\QMA$-complete in a \enquote{quantum Cook-Levin theorem} by Kitaev~\cite{Kitaev2002}.

\begin{definition}[$k$-local Hamiltonian]\label{def:local-hamiltonian}
    A Hermitian operator $H\in\hermp{\B^{\otimes n}}$ acting on $n$ qubits is a $k$-local Hamiltonian if it can be written as
    \begin{equation} H = \sum_{\subalign{S&\subseteq [n]\\ |S|&\le k}} H_S \otimes I_{[n]\setminus S}.\end{equation}
    Additionally, $0\preccurlyeq H_S \preccurlyeq I$ holds without loss of generality.

    We refer to the minimum eigenvalue $\lminp{H}$ as the \emph{ground state energy} of $H$ and the corresponding eigenvectors as \emph{ground states}.
\end{definition}

% \begin{definition}[$k$-local Hamiltonian problem ($\kLH$)]
%     Fix a polynomial $p(n)$.
%     The promise problem $\kLH$ is defined as follows.
%     \begin{enumerate}[label={},leftmargin=1.5em]
%         \item Input:
%         \begin{itemize}
%             \item A $k$-local Hamiltonian $H\in\hermp{\B^{\otimes n}}$.
%             \item Efficiently computable threshold functions $\alpha,\beta:\N\to\R$ satisfying the \emph{promise gap} $\beta(n)-\alpha(n) \ge 1/p(n)$ for all $n\ge 1$.
%         \end{itemize}
%         \item Output:
%         \begin{itemize}[align=left,labelwidth=\widthof{YES:},leftmargin=4em,labelsep*=1em]
%             \item[YES:] $\lminp{H} \le \alpha(n)$
%             \item[NO:] $\lminp{H} \ge \beta(n)$
%         \end{itemize}
%     \end{enumerate}
% \end{definition}

\begin{definition}[$\kLH(H,k,a,b)$]
    Given a $k$-local Hamiltonian $H=\sum_i H_i $ acting on $N$ qubits and real numbers $a,b$ such that $b-a\geq N^{-c}$, for $c>0$ constant, decide:
    \begin{itemize}
        \item[YES.] If $\lmin(H) \le a$ (i.e. the ground state energy of $H$ is at most $a$).
        \item[NO.] If $\lmin(H) \ge b$.
    \end{itemize}
\end{definition}

Next, we give formally define the oracle based complexity classes used throughout this paper.

\begin{definition}[$\p^C$]\label{def:oracle-classes}
    Let $C$ be a complexity class with complete problem $\Pi$.
    $\p^C = \p^\Pi$ is the class of (promise) problems that can be decided by a polynomial-time deterministic Turing machine $M$ with the ability to query an oracle for $\Pi$.
    If $M$ asks an \emph{invalid} query $x\in\Pii$, the oracle may respond arbitrarily.

    We say $\Gamma\in\p^C$ if there exists an $M$ as above such that $M$ accepts/rejects for $x\in \Gammay$/$x\in\Gamman$, regardless of how invalid queries are answered.

    For a function $f$, we define $\p^{C[f]}$ in the same way, but with the restriction that $M$ may ask at most $O(f(n))$ queries on input of length $n$.

    For an integer $k$, we define $\p^{C[k]}$, where $M$ may ask at most $k$ queries on each input.

    $\p^{\Vert C}$ denotes the class where $M$ must ask all queries at the same time.
    We call these queries \emph{non-adaptive} opposed to the \emph{adaptive} queries of the above classes, because the queries do not depend on the results of other queries.

    For a function $f: \bin^*\to\bin^*$, we define $\p^f$ and the other classes analogously, except that $M$ may now query the oracle for values $f(x)$.
\end{definition}

The $\pQMAlog{}$-complete problem is $\apxsim{}$ (approximate simulation).
It essentially asks whether a given Hamiltonian has a ground state with a certain property (e.g., a ground state where the first qubit is set to $\ket1$).

% \begin{definition}[$\apxsim$ \cite{Ambainis2014}]
%     Fix a polynomial $p(n)$.
%     The promise problem $\apxsim{}$ (Approximate Simulation) is defined as follows.
%     \begin{enumerate}[label={},leftmargin=1.5em]
%         \item Input:
%         \begin{itemize}
%             \item A $k$-local Hamiltonian $H\in\hermp{\B^{\otimes n}}$ (see \Cref{def:local-hamiltonian}).
%             \item A $k$-local observable $A \in\hermp{\B^{\otimes n}}$.
%             \item Efficiently computable threshold functions $\alpha,\beta,\delta:\N\to\R$ satisfying the \emph{promise gap} $\beta(n)-\alpha(n) \ge 1/p(n)$ and $\delta(n)  \ge 1/p(n)$ for all $n\ge 1$.
%         \end{itemize}
%         \item Output:
%         \begin{enumerate}[align=left,labelwidth=\widthof{YES:},leftmargin=4em,labelsep*=1.1em]
%             \item[YES:] $H$ has a ground state $\ket\psi$ satisfying $\braketb{\psi}A \le \alpha(n)$.
%             \item[NO:] For all $\ket\psi\in\B^{\otimes n}$ with $\braketb\psi H \le \lmin(H) + \delta$, it holds that $\braketb{\psi}{A} \ge \beta(n)$.
%         \end{enumerate}
%     \end{enumerate}
% \end{definition}

\begin{definition}[$\apxsim(H,A,k,l,a,b,\delta)$~\cite{Ambainis2014}]\label{def:apx}
        Given a $k$-local Hamiltonian $H=\sum_i H_i $ acting on $N$ qubits, an $l$-local observable $A$, and real numbers $a$, $b$, and $\delta$ such that $b-a\geq N^{-c}$ and $\delta\geq N^{-c'}$, for $c,c'>0$ constant, decide:
    \begin{itemize}
        \item[YES.] If $H$ has a ground state $\ket{\psi}$ satisfying $\bra{\psi}A\ket{\psi}\leq a$.
        \item[NO.] If for all $\ket{\psi}$ satisfying $\bra{\psi}H\ket{\psi}\leq \lmin(H)+\delta$, it holds that $\bra{\psi}A\ket{\psi}\geq b$.
    \end{itemize}
\end{definition}

Ambainis showed completeness for $k=\Theta(\log n)$ \cite{Ambainis2014}.
Gharibian and Yirka~\cite{Gharibian2019} improved this to $k=5$.
Gharibian, Piddock, and Yirka~\cite{Gharibian2019b} improved this to $k=2$ for physically motivated Hamiltonian models.

\begin{definition}[$\StoqMA{}$ \cite{Bravyi2006}]\label{def:stoqma}
    Fix polynomials $\alpha(n),\beta(n),p(n),q(n),r(n)$ with $\alpha(n)-\beta(n)\ge1/\poly(n)$.
    A promise problem $\Pi$ is in $\StoqMA$ (Stoquastic Merlin Arthur) if there exists a polynomial-time uniform quantum circuit family $\{Q_n\}$ such that the following holds:
    \begin{itemize}
        \item For all $n$, $Q_n \in \U\left(\B_A^{\otimes n} \otimes \B_B^{\otimes p(n)}\otimes \B_C^{\otimes q(n)}\otimes \B_D^{\otimes r(n)}\right)$. The register $A$ is used for the input, $B$ contains the proof, $C$ ancillae initialized to $\ket{0}$, and $D$ ancillae initialized to $\ket+$.
        $Q_n$ only uses $X$, CNOT, and Toffoli gates.
        \item For $x\in\bin^*$, $\abs x=n$, $\ket\psi\in\B^{\otimes p(n)}$, let $\ket{\psiin} := \ket x_A\ket{\psi}_B\ket{0}_C^{\otimes q(n)}\ket{+}_D^{\otimes r(n)}$.
        The acceptance probability is then given by
        \begin{equation} \Pr[Q_{n}\text{ accepts } \ket{x}\ket\psi] = \braketb{\psiin}{Q_n^\dagger \Piacc{}\, Q_n}, \end{equation}
        where $\Piacc = \ketbra++_{C_1}$ measures the first ancilla in the $\{\ket+,\ket-\}$ basis.
        \item $\forall x\in\Pi_\text{yes} \;\exists \ket\psi \in \B^{\otimes p(|x|)}: \Pr[Q_{|x|}\text{ accepts } \ket{x}\ket\psi] \ge \alpha(n)$
        \item $\forall x\in\Pi_\text{no} \;\forall \ket\psi \in \B^{\otimes p(|x|)}: \Pr[Q_{|x|}\text{ accepts } \ket{x}\ket\psi] \le \beta(n)$
    \end{itemize}
\end{definition}

Note that the only difference between $\StoqMA{}$ and $\MA$ is that $\StoqMA{}$ may perform its final measurement in the $\{\ket+,\ket-\}$ basis (i.e. setting $\Piacc:=\ketbra00_{C_1}$ would result in $\MA$) \cite{Bravyi2006}.

It further holds that the $\StoqMA{}$ verifier accepts any state with only nonnegative coordinates with probability $\ge1/2$.
Therefore, we cannot amplify the gap by majority voting as for $\MA$.
Recently, Aharonov, Grilo, and Liu~\cite{Aharonov2020} have shown that $\StoqMA$ with $\alpha(n) = 1-\negl(n)$ and $\beta(n) = 1-1/\poly(n)$ is contained in $\MA$, where $\negl(n)$ denotes a function smaller than all inverse polynomials for sufficiently large $n$.
It is therefore unlikely that such an amplification is possible.

\subsection{Graph Theory}

Let $G=(V,E)$ be a directed graph.
For a node $v\in V$, we define $\indeg(v)$ and $\outdeg(v)$ as the number of incoming and outgoing edges, respectively.
The sets $\pred(v) := \{w\in V\mid (w,v)\in E\}$ and $\suc(v) := \{w\in V\mid(v,w)\in E\}$ denote the parents and children of $v$, respectively.
The set of \emph{ancestors} (\emph{descendants}) of node $v$ is the set of all $u\in V\setminus\set{v}$, such that there is a directed path in $G$ from $u$ to $v$ ($v$ to $u$).
If $G$ contains no directed cycles, we call it a DAG (directed acyclic graph).

\begin{definition}[Tree decomposition]\label{def:tree-decomposition}
    Let $G=(V,E)$ be an undirected graph.
    A \emph{tree decomposition} $T=(V_T,E_T)$ of $G$ is a graph with $m$ nodes labelled by subsets $X_1,\dots,X_m\subseteq V$ such that:
    \begin{itemize}
        \item Each node of $G$ is contained in some node $X_i$ of $T$: $\bigcup_{i=1}^m X_i = V$.
        \item For all $(u,v)\in E$, there exists an $X_i$ such that $u,v \in X_i$.
        \item For all $v\in V$, the subtree in $T$ induced by $\set{X_i\mid v\in X_i}$ is connected.
    \end{itemize}
    The \emph{width} of a tree decomposition $T$ is defined as $\width(T) := \max_i \abs{X_i}-1$.
    The \emph{treewidth} of $G$, denoted $\tw(G)$, is defined as the minimum width among all possible tree decompositions of $G$.
\end{definition}

\noindent Bodlaender~\cite{Bodlaender1993} has shown that tree decompositions for graphs with bounded treewidth (i.e. $\tw(G)=O(1)$) can be computed in linear time.

The connection between tree decompositions and \emph{separators}, which we define next, has a long and well-studied history (e.g.~\cite{RS86,Reed1992,BGHK95,Amir10,BDDFLP13}).

\begin{definition}[Separator number \cite{Gruber2012}]\label{def:separator-number}
    Let $G=(V,E)$ be an undirected graph.
    A set $S\subseteq V$ is a \emph{separator} of $G$ if $G\setminus S$ (i.e. the graph induced by the nodes $V\setminus S$) has at least two connected components or at most one node.
    $S$ is \emph{balanced} if every connected component of $G\setminus S$ has at most $\lceil (\abs V - \abs S)/2 \rceil$ nodes. The \emph{balanced separator number} of $G$, denoted $\s(G)$, is the smallest $k$ such that for every $Q\subseteq V$, the induced subgraph $G[Q]$ has a balanced separator of size at most $k$.
\end{definition}

\begin{lemma}[Theorem 9 of \cite{Gruber2012} (see also\protect\footnote{Proposition 2.5 of~\cite{RS86} gives the slightly weaker bound $\s(G)\le \tw(G)+1$, which also suffices for our purposes.}~\cite{RS86,BGHK95})]
    \label{lem:separator-number}
    $\s(G) \le \tw(G) \le O(s(G)\cdot\log n)$.
\end{lemma}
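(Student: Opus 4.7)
The plan is to prove the two inequalities separately, as they use essentially independent techniques.

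For the first inequality $\s(G) \le \tw(G)$, I would fix any subset $Q \subseteq V$ and take an optimal tree decomposition $T$ of the induced subgraph $G[Q]$ (which exists since $\tw(G[Q]) \le \tw(G)$, as restricting each bag to $Q$ yields a valid decomposition). The standard ``centroid bag'' argument then applies: view $T$ as an unrooted tree and orient its edges away from a hypothetical centroid. Each edge $e$ of $T$ induces a bipartition of $Q$ into the vertices appearing only on one side versus only on the other (vertices that appear on both sides must lie in both endpoint bags of $e$). Since orienting each edge toward the larger side must produce a sink bag $X_c$, one obtains a bag whose removal splits $G[Q]$ into connected components each of size at most $\lceil(|Q|-|X_c|)/2\rceil$, by the running-intersection and edge-coverage axioms of \Cref{def:tree-decomposition}. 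Since $|X_c| \le \tw(G[Q])+1 \le \tw(G)+1$, this immediately gives the weaker bound from the footnote; the sharper bound of $\tw(G)$ (rather than $\tw(G)+1$) follows from a small additional pushdown argument on overly-large components.

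For the second inequality $\tw(G) \le O(s(G)\log n)$, I would proceed by recursive construction. Given $G$ with $s(G) = s$, find a balanced separator $S \subseteq V$ of size at most $s$, whose existence is guaranteed by $s(G) = s$ applied to $Q = V$. The graph $G \setminus S$ splits into connected components $C_1,\dots,C_k$ of size at most $\lceil(|V|-|S|)/2\rceil \le |V|/2$. Recursively build a tree decomposition $T_i$ for each $G[C_i]$ using separator number $s$ again (noting that $s(G[C_i]) \le s$ by definition of $\s$). Then form a tree decomposition of $G$ by creating a root bag $S$, attaching each $T_i$ as a child subtree, and augmenting every bag of every $T_i$ by adding all of $S$. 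Verifying the three tree decomposition axioms is direct: coverage and subtree-connectedness follow from the inductive hypothesis together with $S$ being present in all bags on each branch, and every edge of $G$ is either internal to some $C_i$ (covered by $T_i$) or has at least one endpoint in $S$ (covered by the bag that also contains the other endpoint, augmented by $S$).

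The recurrence for bag size is $b(n) \le b(n/2) + s$, giving $b(n) = O(s \log n)$ and hence $\tw(G) = O(s(G)\log n)$ as claimed. The main subtlety will be the first inequality's centroid argument: one must be careful that the ``sink bag'' one finds is actually a separator of $G[Q]$ and not merely of the tree $T$, which requires invoking both the edge-coverage axiom (so that any edge crossing between components has both endpoints in a common bag, which must then lie in $X_c$) and the running-intersection axiom (so that the components of $T \setminus X_c$ really do correspond to vertex-disjoint pieces of $G[Q] \setminus X_c$). For the second inequality, the only mild care needed is to ensure that recursively separating $G[C_i]$ rather than $G$ is legitimate, which is automatic from the ``for every $Q \subseteq V$'' quantifier in \Cref{def:separator-number}.
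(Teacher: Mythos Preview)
The paper does not prove this lemma at all; it is quoted verbatim as Theorem~9 of \cite{Gruber2012} (with the footnote pointing to the weaker $\s(G)\le\tw(G)+1$ from~\cite{RS86}), and no proof appears in the paper. So there is no ``paper's own proof'' to compare against.

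Your sketch is essentially the standard argument from the cited references. For the upper bound $\tw(G)\le O(\s(G)\log n)$, your recursive separator-tree construction with the recurrence $b(n)\le b(n/2)+s$ is exactly the textbook proof and is correct as written. For the lower bound, your centroid-bag argument is the right idea and cleanly yields the weaker $\s(G)\le\tw(G)+1$ (which, as the footnote notes, already suffices for the paper's purposes). However, your reference to a ``small additional pushdown argument'' to shave off the $+1$ is too vague to count as a proof: Gruber's actual argument for the sharp bound is a bit more delicate than a one-line fix, and you should either spell it out or be content with the $+1$ version and cite~\cite{Gruber2012} for the improvement.
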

\noindent We define tree decompositions and separator number for a directed graph $G$ on the undirected version of $G$.
It appears to be an open problem whether $\tw(G) = \Theta(s(G))$ holds.
However, resolving this question would not improve our results, since we only use the first inequality.

\subsubsection{Separator Trees}\label{ssscn:separatortree}

The separator number allows us to decompose graphs into \emph{separator trees}, which we use to evaluate query graphs more efficiently.

\begin{definition}\label{def:separator-tree}
    A \emph{(balanced) separator tree} of an undirected graph $G=(V,E)$ is a tree $T=(V_T, E_T)$, with vertices in $V_T$ labelled by subsets $\{S_1,\dots,S_m\}$ satisfying $\bigcupdot_{i=1}^m S_i = V$, and $T$ being rooted in $S_1$. $S_1$ is a (balanced) separator of $G$, and the trees rooted in the children of $S_1$ are (balanced) separator trees of $G\setminus S_1$. To distinguish vertices/edges of $G$ from vertices/edges of $T$, we refer to the latter as \emph{supervertices/superedges}. A path along superedges is called a \emph{superpath}. The unique superpath from $S_1$ to any supervertex $S$ is called a \emph{branch} of the tree.
\end{definition}
\noindent Unless noted otherwise, throughout this work we assume separators are balanced.

\begin{lemma}\label{lem:compute-separator-tree}
    Given an $n$-vertex graph $G=(V,E)$, a separator tree $T$ of $G$ with separator number $s:=\s(G)$ can be computed in time $n^{O(s)}$.
\end{lemma}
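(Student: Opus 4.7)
The plan is to construct the separator tree recursively in the natural top-down fashion, relying on brute-force enumeration to find a balanced separator at each step.

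First I would observe that the definition of $\s(G)$ is exactly what is needed to support the recursion: by hypothesis, \emph{every} induced subgraph $G[Q]$ admits a balanced separator of size at most $s$. Since each recursive subproblem operates on a connected component of $G$ with some separator vertices removed, and is thus itself an induced subgraph of $G$, such a separator is guaranteed to exist at every recursive call.

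The algorithm on input $Q\subseteq V$ with $n':=|Q|$ proceeds as follows. Enumerate all subsets $S\subseteq Q$ with $|S|\le s$, of which there are at most $\sum_{i=0}^{s}\binom{n'}{i}=O(n'^{s})$. For each candidate $S$, compute the connected components of $G[Q\setminus S]$ in polynomial time (e.g.\ by BFS), and check whether $G[Q\setminus S]$ has at most one vertex or at least two components and whether each component has size at most $\lceil (n'-|S|)/2\rceil$. Take the first $S$ that passes both tests as the supervertex at the root of the current subtree, and recurse on each connected component of $G[Q\setminus S]$. Correctness is immediate from \Cref{def:separator-tree}, and termination follows from the balancedness of $S$, which guarantees the recursion depth is $O(\log n)$.

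For the runtime, let $T(n')$ denote the cost on a subproblem of size $n'$. Finding the separator at the current node costs $n'^{O(s)}$, and then $T(n')\le n'^{O(s)}+\sum_{i} T(n_i)$, where the $n_i$ are the component sizes, $\sum_i n_i\le n'$, and each $n_i\le \lceil (n'-s)/2\rceil$. Since $x\mapsto x^{O(s)}$ is superadditive for $s\ge 1$, the total work at each recursion level is bounded by $n^{O(s)}$; multiplying by the $O(\log n)$ depth and absorbing the logarithm into the exponent yields $T(n)=n^{O(s)}$, as claimed.

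The main obstacle is genuinely minor: one must only verify that the superadditivity argument gives a clean per-level bound of $n^{O(s)}$ rather than degrading with depth. Everything else, including the existence of the separator at each recursive step and the polynomial-time verifiability of the ``balanced separator'' property, is essentially definitional.
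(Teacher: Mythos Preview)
Your proposal is correct and follows essentially the same approach as the paper: brute-force search for a balanced separator at each node, recurse on the components, and bound the total work across the recursion. The only minor difference is in the bookkeeping---the paper shows the per-level cost halves (yielding a geometric series summing to $2n^{s+2}$), whereas you bound each of the $O(\log n)$ levels by $n^{O(s)}$ via superadditivity and absorb the logarithm; both give $n^{O(s)}$.
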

\begin{proof}
    By \Cref{def:separator-number}, every induced subgraph of $G$ has a balanced separator of size at most $s$. Thus, the brute force approach to build a separator tree is to first brute force search for a separator $S$ of $G$ in time $n^{s'}$ for $s'= s+2$ (try all ${n}\choose{s}$ subsets of vertices, for each subset check $O(n^2)$ edges), remove it, and recurse on all induced balanced subgraphs on $V\setminus S$. (Technically, since we do not know $s$ beforehand, we can try all values for separator size starting from $2$ onwards via brute force; this does not affect the overall runtime.)

    To analyze the runtime of this procedure over all recursive calls, a slight non-triviality is that for a balanced separator $S$ (\Cref{def:separator-number}), we have no control over the sizes of each connected component of $G\setminus S$, other than no one component has size more that $\abs{V}/2$. Thus, the recurrence relation one obtains scales as $T(n)= \left(\sum_{i=1}^k T(n_i)\right) + n^{s'}$, where $2\leq k\leq n$, $\sum_{i=1}^k n_i\le n$, $n_i\leq n/2$ for all $i$, and for some $s'= s+2$. (In particular, this means the standard Master Theorem~\cite{BHS1980} cannot be applied.) In fact, the values of the $n_i$ can even change between levels of the recurrence.

    The analysis, luckily, is simple. Let $L=1$ denote the base case of the recurrence, which we view as the root of a recursion tree (i.e. each node $v$ of the tree is a recursive call, whose children correspond to the recursive calls made by $v$). At any level $L\geq 1$, we claim the additive cost at a node $v$ (i.e. corresponding to the ``$+n^{s'}$'' term) is at least twice the additive cost of its children. This implies the total cost incurred at level $L+1$ is at most half the cost of level $L$, giving a total cost for the algorithm via geometric series $\sum_{L=0}^{D-1} \frac{n^{s'}}{2^L}\leq 2 n^{s'}$, for all $D$ denoting the depth of the recursion, as claimed.

    To see that the cost at any $v$ is indeed at least twice the cost of its children $w_1,\ldots, w_k$, let $n$ be the input size for $v$ and $n_1,\ldots, n_k$ the input sizes for $w_1,\ldots, w_k$, respectively. Then, the total additive cost across all children of $v$ is
    \begin{equation}
        \sum_{i=1}^k n_i^{s'}= n^{s'}\sum_{i=1}^k \frac{n_i}{n}\left(\frac{n_i}{n}\right)^{s'-1}\leq n^{s'} \max_{i}\left(\frac{n_i}{n}\right)^{s'-1}\leq \frac{1}{2}n^{s'},
    \end{equation}
    where the first inequality follows since the coefficients $n_i/n$ yield a convex combination, and the second inequality since $n_i\leq n/2$ for all $i$ and $s'=s+2\geq 1$.
\end{proof}
\begin{remark}
Note that the separator tree computed by \Cref{lem:compute-separator-tree} may contain separators of different sizes $1\leq s'\leq s$. However, in this work it is convenient to assume without loss of generality that all separators have size exactly $s$. This can trivially be achieved by ``padding'' each separator $S$ of size $1\leq s'<s$ by adding dummy vertices to $S$ (and hence to $G$; all dummy vertices are isolated). The number of dummy vertices added is trivially at most $sn$ (there can never be more than $n$ separators); thus, the size of $G$ increases by at most $sn$ vertices, which does not affect any of our results.
\end{remark}

Additionally, although by definition of balanced separator, a balanced separator tree has $O(\log n)$ depth, at times we may wish to leverage a shorter depth tree if one should exist. For convenience, we hence state the following lemma.

\begin{lemma}\label{lem:compute-separator-tree-depth}
    Given an $n$-vertex graph $G$, depth $D$ and separator size $s$ ($D$ and $s$ are specified in unary), a separator tree of $G$ of depth $D$ with separators of size $s$ can be computed in time $n^{O(Ds)}$, if it exists.
\end{lemma}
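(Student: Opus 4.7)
The plan is to adapt the brute-force recursion behind \Cref{lem:compute-separator-tree}, parameterizing it by the remaining depth budget $d$ rather than hoping for balanced splits. This reparameterization is essential because without balancedness the geometric-series runtime argument used there no longer applies; instead, the depth bound $D$ will directly limit the recursion depth, and the $m^{O(s)}$ branching factor at each level will only multiply $D$ times in the exponent.

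Concretely, on input $(G[Q], d)$ with $m := |Q|$, the algorithm enumerates all subsets $S \subseteq Q$ of size exactly $s$ (padding smaller candidates with dummy isolated vertices, as in the remark following \Cref{lem:compute-separator-tree}, so only a single size needs to be tried). For each candidate $S$, it checks in $\poly(m)$ time that $G[Q] \setminus S$ has at least two connected components (or at most one vertex) and, if so, recursively attempts to build depth-$(d-1)$ separator trees on every connected component of $G[Q] \setminus S$. The first candidate succeeding on all components yields the output; otherwise we report failure. The base case $d = 1$ succeeds iff $m \leq s$, producing a single supernode labelled $Q$. Correctness follows by induction on $d$: the root separator of any valid depth-$d$ tree is among the enumerated candidates, and the inductive hypothesis handles each subtree.

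For the runtime, let $T(m, d)$ denote the worst-case time on an $m$-vertex induced subgraph with depth budget $d$. The recurrence is
\[
T(m, d) \leq m^{O(s)} \cdot \max_{\substack{k \geq 1 \\ n_1 + \cdots + n_k \leq m}} \sum_{i=1}^k T(n_i, d-1) + \poly(m),
\]
with $T(m, 1) \leq \poly(m)$. I would prove $T(m, d) \leq m^{c s d}$ by induction on $d$, for a sufficiently large absolute constant $c$, using the standard inequality $\sum_i n_i^a \leq (\sum_i n_i)^a$ for $a \geq 1$ and $n_i \geq 0$. This gives $\sum_i T(n_i, d-1) \leq \sum_i n_i^{c s (d-1)} \leq m^{c s (d-1)}$; substituting yields $T(m, d) \leq m^{c s d}$ once $c$ is large enough to absorb the $m^{O(s)}$ and $\poly(m)$ factors. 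Taking $d = D$ and $m = n$ gives the claimed $n^{O(Ds)}$ bound. The main obstacle is really just the bookkeeping in this induction; there is no conceptual difficulty beyond choosing $c$ to absorb all polynomial overheads, since the power-mean inequality above does the actual work.
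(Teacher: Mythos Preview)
Your proposal is correct and uses the same algorithm as the paper: brute-force enumeration of all size-$s$ subsets as candidate separators at every node of the recursion tree, recursing on connected components with a decremented depth budget. The only real difference is the runtime analysis. The paper reuses the level-wise accounting from \Cref{lem:compute-separator-tree}, writing the recurrence as $T(n)= n^s\bigl(\sum_{i} T(n_i)\bigr) + n^{s'}$ and summing $\sum_{L=0}^{D-1} n^{s'}(n^s/2)^L$, which leans on the halving estimate (and hence on balancedness) from the earlier lemma. Your induction on $d$ via $\sum_i n_i^{a}\le(\sum_i n_i)^{a}$ is cleaner and sidesteps any balancedness assumption; it also makes the dependence on $D$ completely transparent. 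Either analysis yields $n^{O(Ds)}$, so this is a stylistic rather than substantive divergence.
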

\begin{proof}
    A brute-force approach similar to \Cref{lem:compute-separator-tree} is used, except there is a catch: At any level $L$ of the recursion, for each subset of vertices $S$ we consider, even if $S$ is a separator, it may not lead to a \emph{depth $D$} separator tree, even if such a tree exists. Thus, it does not suffice at level $L$ to simply find a size $s$ separator $S$, but rather in the worst case we may need to consider all such $O(n^s)$ such separators $S$. Thus, the recurrence relation now scales as $T(n)= n^s\left(\sum_{i=1}^k T(n_i)\right) + n^{s'}$. Running the same argument as \Cref{lem:compute-separator-tree} now yields total cost
    $
        \sum_{L=0}^{D-1} n^{s'}\left(\frac{n^s}{2}\right)^L\in n^{O(Ds)},
    $
    where recall $s'=s+2$.
\end{proof}

Finally, we remark that only the size $s$ of the separators in the balanced (or low-depth) separator tree is relevant for our algorithms.
The separator number $\s(G)$ is only used to compute separator trees more efficiently.
For a balanced separator tree, we may have $\s(G) \ge \Omega (s\cdot \log(n))$.\footnote{Proof: Let $G$ be a complete binary tree on $n$-nodes with additional edges from each node to its descendants. Then $\s(G)=\Theta(\log n)$, but $G$ has a separator tree with separators of size $1$.}

\section{Query graphs and \CDAG}

The main object of study in this work is the concept of a \emph{query graph}, which we now formally define in the context of a decision problem, $\CDAG$.% Intuitively, $\CDAG$ is analogous to CIRCUIT-SAT\footnote{CIRCUIT-SAT is NP-complete problem: Given a Boolean circuit $C$ taking in $n$ bits as input and outputting a single bit, is there an input causing $C$ to output $1$?}, except now the various verification circuits can be linked together to form a graph.

\begin{figure}[t]
    \begin{center}
        \includegraphics[height=20mm]{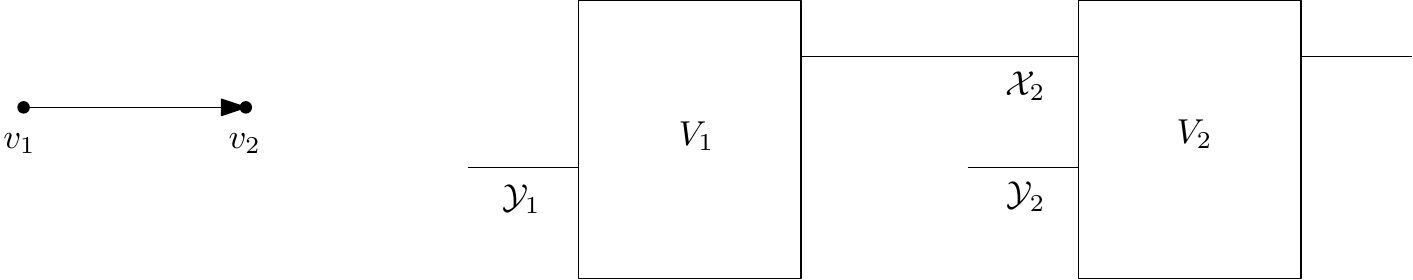}
        \vspace{2mm}
       \caption{Left: A simple example of an NP-DAG for with two nodes, with $v_2$ the output node. Right: The circuit view represented by the NP-DAG. Each $V_i$ is an NP verifier taking in input in register $\X_i$ and proof in register $\Y_i$. Note $v_1$ has in-degree $0$, hence $V_1$ has trivial input register $\X_1$. The output wire of $V_2$ carries the output of the NP-DAG.}
       \label{fig:CDAG}
    \end{center}
\end{figure}

\begin{definition}[\CDAG~(\Cref{fig:CDAG})]\label{def:c-dag}
    Fix any complexity class $C\in\QVp$.
    A \CDAG{} instance is defined by an $n$-node DAG $G=(V=\{v_1,\dots,v_n\},E)$, with structure as follows:
    \begin{itemize}
        \item  Vertex $v_n\in V$ is the unique vertex with $\outdeg(v_n) =0$, denoted the \emph{result node}.
        \item
        Each $v_i\in V$ is associated with a promise problem $\Pi^i\in C$ that determines the output of $v_i$.
        Formally, $\Pi^i$ is specified via a $\poly(n)$-sized description%
        \footnote{This description may be implicit to describe exponentially large circuits (e.g., for $\NEXP$).}
        of a verification circuit $Q_i$
        with designated input and proof registers $\X_i$ and $\Y_i$.%
        \footnote{For example, if $C=\NP$, then $\Piy^i$ is the set of all strings $x$ on $\X_i$, for which there exists a proof $y$ on $\Y_i$, such that NP verifier $Q_i$ accepts $(x,y)$.}
        The input register $\X_i$ consists of precisely $\indeg(v_i)$ bits/qubits, set to the string on $v_i$'s incoming edges/wires.
        In order to allow non-trivial $Q_i$ for bounded in-degree, we allow an implicit padding of $\X_i$ to $\poly(n)$ bits.
        $v_i$ has a single output wire, denoted $\outputc[v_i]$, corresponding to the output of the verifier $Q_i$.
\end{itemize}
    Finally, we say $G\in\CDAG_\yes$ (respectively, $G\in\CDAG_\no$) if the evaluation procedure EVALUATE (\Cref{alg:c-dag-eval}) outputs $1$ (respectively, $0$) \emph{deterministically} (i.e. regardless of how any invalid queries are answered).
\begin{algorithm}[t]
    \caption{Evaluation procedure for \CDAG}\label{alg:c-dag-eval}
    \begin{algorithmic}[1]
        \Function{Evaluate}{$G=(V,E)$}
            \State Sort the nodes of $V$ topologically into $v_1,\dots,v_n$.
            \State The variable $x_i\in\bin$ will denote the result of $v_i$'s query.
            \For{$i=1,\dots,n$}
                \State $z_i \gets \bigcirc_{v_j\in \pred(v_i)}\, x_j$\label{alg:c-dag-eval:z_i}
                \Comment $\bigcirc$ denotes concatenation (concatenation order is specified by $Q_i$).
                \State $x_i \gets\begin{cases}
                    1, &\text{if }z_i\in \Piy^i\\
                    0, &\text{if }z_i\in \Pin^i\\
                    \text{$0$ or $1$ (nondeterministically)}, &\text{if }z_i\in \Pii^i
                \end{cases}$\label{alg:c-dag-eval:x_i}
            \EndFor
            \State \Return $x_n$
            \Comment Recall $v_n$ is the result node.
        \EndFunction
    \end{algorithmic}
\end{algorithm}
\end{definition}
\begin{remark}
Observe that if $C$ is a promise class, then $\CDAG$ is a \emph{promise} problem (as opposed to a decision problem) --- this is because then $\Pii^i$ is not necessarily empty, and so we must be \emph{promised} that \Cref{alg:c-dag-eval} outputs either $0$ or $1$ deterministically, regardless of how invalid queries are answered.
\end{remark}
\begin{definition}[Correct query string]\label{def:correct}
    Any string $x\in\bin^n$ that can be produced via Line~\ref{alg:c-dag-eval:x_i} of \Cref{alg:c-dag-eval} is called a \emph{correct query string}.
\end{definition}
\begin{remark}
Intuitively, in \Cref{def:correct} the bits of $x$ encode a sequence of correct query answers corresponding to the nodes of $G$. Note the correct query string need not be unique if $C$ is a promise class (i.e. invalid queries are allowed). Also, we may view any query string as a function $V\to\bin$.
\end{remark}

\begin{remark}
Our notion of C-DAG is similar to the DAGS(NP) formalization of Gottlob (Definition 3.2 of~\cite{Gottlob1995}), except the latter has node queries encoded by propositional formulae. In contrast, here we use verification circuits at the nodes to make it easier to abstractly address a variety of verification classes $C$. (Alternatively, one might also consider ``quantizing'' the NP-dags of~\cite{Gottlob1995} by replacing propositional formulae with local Hamiltonians.)
\end{remark}

Just as Gottlob shows DAGS(NP) (more accurately, DAGS(SAT)) is $\p^{\NP}$-complete~\cite{Gottlob1995}, here we have the more general statement:
\begin{lemma}\label{l:pC}
    For any $C\in\QVp$, \CDAG{} is $\p^C$-complete.
\end{lemma}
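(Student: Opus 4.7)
The plan is to prove both directions: $\CDAG \in \pC$ (membership) and $\pC$-hardness, each by a direct syntactic correspondence. The argument follows the same pattern as Gottlob's $\pNP$-completeness of $\DAGSSAT$~\cite{Gottlob1995}, but applied uniformly to any $C\in\QVp$, including promise classes.

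For membership, I would simply run the EVALUATE procedure of \Cref{alg:c-dag-eval} as the $\pC$ algorithm: topologically sort $V$, and for each $v_i$ in order, use the already-computed bits $x_1,\dots,x_{i-1}$ to classically assemble the input string $z_i$ on $v_i$'s incoming wires, ask the $C$-oracle the query defined by $(Q_i,z_i)$ (which has a $\poly(n)$-size description by \Cref{def:c-dag}), and store the answer as $x_i$. Output $x_n$. This runs in polynomial time with at most $n$ adaptive $C$-queries. When $C$ is a promise class, some $z_i$ may be invalid, but \Cref{def:c-dag} guarantees that $x_n$ is the same regardless of how invalid queries are answered, matching precisely the corresponding clause in \Cref{def:oracle-classes}.

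For hardness, given $\Gamma\in\pC$ decided by a polynomial-time oracle machine $M$ making at most $N\le\poly(n)$ queries on input $x\in\bin^n$, I would build a $\CDAG$ instance $G_x$ on $N+1$ nodes, where $v_i$ for $i\le N$ represents $M$'s $i$-th oracle query and $v_{N+1}$ is the result node. I include edges $(v_j,v_i)$ for every $j<i$, and design the verifier $Q_i$ to hardcode $x$ together with $M$'s transition function so that, given incoming bits $(x_1,\dots,x_{i-1})$ on $\X_i$, it deterministically simulates $M$ up to its $i$-th query to produce the query string $q_i=q_i(x,x_1,\dots,x_{i-1})$, and then invokes the native $C$-verifier on input $q_i$ and proof register $\Y_i$. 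The result node $v_{N+1}$ has incoming edges from all $v_1,\dots,v_N$; its verifier $Q_{N+1}$ is a trivial $C$-verifier (which exists because $\p\subseteq C$ for every $C\in\QVp$, and $\StoqMA$ in particular can simulate deterministic polynomial time via $X$/CNOT/Toffoli gates as in \Cref{def:stoqma}) that ignores its proof and simply computes $M$'s accept/reject function on $(x,x_1,\dots,x_N)$. By construction, any run of EVALUATE on $G_x$ exactly mirrors a run of $M^C$ on $x$, so $G_x\in\CDAG_\yes$ iff $M$ accepts; and since $M$ must decide $\Gamma$ independently of how invalid queries are answered, EVALUATE's output on $G_x$ is deterministic, so $G_x$ satisfies the $\CDAG$ promise.

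I do not anticipate any deep obstacle; the main care point is the consistent handling of invalid queries across promise classes, which is resolved by noting that the ``output independent of invalid-query answers'' clause in \Cref{def:oracle-classes} and the ``deterministic EVALUATE'' clause in \Cref{def:c-dag} formalize the same guarantee and thus match up in both directions. A secondary subtlety is ensuring that $Q_{N+1}$ and the ``classical preprocessing'' inside each $Q_i$ can be realised as honest $C$-verifiers, which is immediate from the fact that $\p\subseteq C$ for each $C\in\QVp$ (and for $C\in\{\NEXP,\QMAexp\}$, from \Cref{def:c-dag}'s allowance of implicit descriptions of exponential-size circuits).
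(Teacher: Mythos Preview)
Your proof is correct and follows essentially the same approach as the paper: membership via simulating \textsc{Evaluate} with a $\p^C$ machine, and hardness via the complete DAG whose $i$th node computes and verifies $M$'s $i$th query. The only minor difference is that you add an explicit $(N{+}1)$st result node to compute $M$'s final accept/reject decision, whereas the paper uses just $m$ nodes and implicitly assumes the last query's answer is $M$'s output; your version is slightly more explicit on this point.
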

\begin{proof}
     First, $\CDAG\in\p^C$ holds because a $\p^C$ machine can straightforwardly compute a correct query string by simulating \Cref{alg:c-dag-eval} on a \CDAG-instance $G$.
    By definition, if $G\in\CDAG_\yes$, then $x_n=1$, and if $G\in\CDAG_\no$, then $x_n=0$.

    Second, to show $\p^C$-hardness of \CDAG{}, we sketch a poly-time many-one reduction from $\p^C$ to \CDAG{}.
    Let $M$ be a $\p^C$ machine receiving input $x\in\set{0,1}^\ast$.
    Without loss of generality, we may assume that $M$ always performs $m \le \poly(|x|)$ queries, so let $G$ be a DAG with $m$ nodes.
    Node $v_i$ represents the $i$th query of $M$ and has incoming edges from $v_1,\dots,v_{i-1}$ (i.e. query $i$ depends on all previous queries).
    Then, $Q_i$ is defined as the circuit that, conditioned on the answers of queries $1$ through $i-1$, first computes the $C$-query $\phi$ (e.g. $\phi$ could be a SAT formula or a local Hamiltonian) which $M$ would send to the $C$-oracle for query $i$, and simulates the corresponding $C$-verification circuit on $\phi$, outputting the result of said verification. (For clarity, note the $C$-verifier is not actually ``run'' here; we are simply defining the action of $Q_i$ as part of the query graph for the reduction.) By construction and how YES/NO instances of $\CDAG$ are defined (\Cref{alg:c-dag-eval}), $G \in \CDAG$ if and only if $M$ accepts $x$.
\end{proof}

\begin{remark}
    When $C$ is a promise class, $\p^C$ is also a promise class (despite having $\p$ as a base). This is because, as with the definition of $\CDAG$ (\Cref{def:c-dag}), a valid $\p^C$ machine is promised to deterministically output the same answer regardless of how invalid queries are answered.
\end{remark}

Thus, \Cref{l:pC} says that on general query graphs $G$, \CDAG\ captures all of $\p^C$.
The primary aim of this paper is hence to consider graphs $G$ with \emph{bounded separator number} (which, by \Cref{lem:separator-number}, includes the case of bounded treewidth). For this, we introduce the following definition for convenience.

\begin{definition}[$\CDAGs$]\label{def:c-dagf}
    Let $s:\N\to\N$ be an efficiently computable function. Then, $\CDAGs$ is defined as $\CDAG$, except that $G$ has separator number $s(G)\in O(s(n))$, for $n$ the number of nodes used to specify the $\CDAG$ instance. For brevity, we use $\CDAGo$ to denote the case of $s\in O(1)$.
\end{definition}
\noindent Thus, the union of $\CDAGs$ over all polynomials $s:\N\mapsto\N$ equals $\CDAG$.

\section{Query Graphs with Bounded Separator Number}\label{scn:BSN}

We first state the main technical theorem of this section, \Cref{thm:separator-tree-time}, followed by the results we obtain from it as corollaries. The remainder of \Cref{scn:BSN} then proves \Cref{thm:separator-tree-time}. For clarity, throughout this work, we assume that the full specification of any \CDAG{} instance $G$ (i.e. the DAG itself, the verification circuits $Q_i$, etc) scales polynomially with its number of nodes, $n$.

\begin{theorem}\label{thm:separator-tree-time}
    Fix $C\in\QV$. As input, we are given (1) a \CDAG{} instance $G$ on $n$ nodes, and (2) a separator tree for $G$ of depth $D$ and separator size $s$.
    Then, $G$ can be decided in deterministic time $2^{O(sD+\log n)}$ with $O(sD + \log n)$ queries to a $C$-oracle.
\end{theorem}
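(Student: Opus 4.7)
\medskip

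\noindent\textbf{Proof plan.} The plan is to follow the three-step template outlined in Section~1.3: (i) use the input separator tree to produce a suitably decoupled/compressed C-DAG $G'$, (ii) equip $G'$ with an admissible weighting function whose total weight is bounded by $2^{O(sD+\log n)}$, and (iii) reduce to maximizing a single real-valued function $t$ whose optimum is extracted by binary search with a single $C$-verifier. Concretely, let $N := 2^{O(sD+\log n)}$ denote the target bound on both the size of $G'$ and the range of $t$; then binary search needs only $\log_2 N = O(sD+\log n)$ $C$-oracle queries, and the outer Turing machine runs in time $\poly(N)=2^{O(sD+\log n)}$.

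\medskip

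\noindent\textbf{Step 1 (Compression).} First I would apply the Compression Lemma (\Cref{l:compress}) to $G$ together with the given separator tree to obtain a new C-DAG $G'$ with the following key property: whenever a node $v_i$ receives an incoming wire whose source lies in a separator supernode higher up in $T$, we hardcode each possible value of that wire by creating multiple copies of $v_i$ (and, recursively, of its descendants) and reroute edges accordingly. A global ``conductor'' output node $t$ (computed via the recursive procedure \Cref{alg:compute-inputs}) is then added to select the copy consistent with the actual query answers on each branch. The blow-up at any single level of $T$ is $2^{O(s)}$ (one copy per assignment to the $s$ separator vertices), and the total blow-up along any root-to-leaf branch is $2^{O(sD)}$; multiplying by the $n$ original vertices gives $|V(G')| \le n \cdot 2^{O(sD)} = 2^{O(sD+\log n)}$. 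Crucially, by construction $G$ is a YES-instance iff the conductor node accepts.

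\medskip

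\noindent\textbf{Step 2 (Admissible weighting and the function $t$).} Next, I would assign an admissible weighting function $f$ (in the sense of Gottlob~\cite{Gottlob1995}) to the nodes of $G'$. Since the separator-tree structure forces all remaining dependencies in $G'$ to be ``local'', the weighting can be built recursively bottom-up on $T$ so that the total weight $W := \sum_{v_i \in V(G')} f(v_i)$ is bounded by $|V(G')| = N$. With this weighting in hand, define, as in \Cref{eqn:tintro},
\begin{equation*}
  t(x,\psi_1,\dots,\psi_{|V(G')|}) \;=\; \sum_{i} f(v_i)\bigl(x_i \Pr[Q_i(z_i(x),\psi_i)=1] + (1-x_i)\gamma\bigr),
\end{equation*}
where $\gamma$ is the usual Gottlob ``default'' value. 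By the admissibility property, any $(x,\psi_\bullet)$ that is an \emph{approximate} maximizer of $t$ encodes a correct query string for $G'$ (formalized in \Cref{lem:correct-query-string}), hence also for the original $G$, and in particular reveals the output bit at the result node.

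\medskip

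\noindent\textbf{Step 3 (Verifier and binary search).} Finally, using \Cref{lem:approximating-t}, I would wrap $t$ into a single $C$-verifier $V$ whose acceptance probability equals $t/W$, exploiting the cross-product-of-proofs trick to avoid entanglement issues. Since $V$ is a single $C$-verification circuit of size $\poly(N)$, a standard adaptive binary search with $O(\log N) = O(sD + \log n)$ queries to a $C$-oracle (after amplifying the promise gap of $C$, which is available for all $C\in\QV$) suffices to locate the maximum of $t$ to within the gap needed by \Cref{lem:correct-query-string}. The outer deterministic machine constructs $G'$, the weighting $f$, and the verifier $V$ in time $\poly(N) = 2^{O(sD+\log n)}$, yielding the claimed bound.

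\medskip

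\noindent\textbf{Main obstacle.} The technically delicate part is Step~1: organizing the copies and the conductor node so that (a) \emph{every} cross-separator dependency of $G$ is removed in $G'$, (b) the blow-up factor per branch of $T$ is only $2^{O(s)}$ rather than something like $2^{O(n)}$, and (c) the admissible weighting in Step~2 still exists on $G'$ with total weight $\le N$. Unlike Gottlob's purely local transformation for trees, here the separator tree forces a globally coordinated rewiring, and one has to verify inductively along $T$ that each recursive call in \Cref{alg:compute-inputs} produces exactly the input string that the original $Q_i$ would have received under any correct query string of $G$.
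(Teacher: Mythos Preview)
Your proposal is correct and follows essentially the same three-step approach as the paper: compress via \Cref{l:compress}, equip the compressed graph with an admissible weighting function so that $t$ from \Cref{eqn:tintro} has total weight $2^{O(sD+\log n)}$, and then binary-search for the maximum of $t$ using the $C$-verifier of \Cref{lem:approximating-t} together with \Cref{lem:correct-query-string}. The only minor discrepancy is that the paper produces the weighting function $f^*$ as part of the Compression Lemma itself (starting from $\omega(v)=(c+1)^{|\desc(v)|}$ on the pre-merge graph $G''$, where $|\desc(v)|\le O(sD)$ by construction, and then summing weights across merged copies), rather than building it via a separate bottom-up pass on the separator tree as you suggest---but both routes give the same bound and the rest of your outline matches the paper's proof.
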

\begin{remark}
 The class $\StoqMA$ is not included in \Cref{thm:separator-tree-time}; this is because the proof of the theorem requires $C$ with a constant promise gap, which $\StoqMA$ is not known to have. (See \Cref{sscn:stoqma} for the weaker result we are able to show for $\StoqMA$.)
\end{remark}

With \Cref{thm:separator-tree-time} in hand, we obtain the following results.

\theoremBsnS* 
\begin{proof}
    A separator tree of depth $D = O(\log(n))$ and separators of size $s=\s(G)$ is computed using \Cref{lem:compute-separator-tree} in time $n^{O(s)}$.
    Applying \Cref{thm:separator-tree-time} completes the proof.
\end{proof}

\noindent In words, this says that $\p^C$, with the restriction that the query graph used by the $\p$ machine has separator number $f(n)$, is contained in the class on the right side of \Cref{eqn:upper}. When $f\in O(1)$, this upper bound is tight:

%For $s=O(1)$ equality holds.
%In general, however, equality is not likely to hold because $\BSNC\subseteq \PSPACE$.

\theoremBsn*
\begin{proof}
    $\CDAGo \in \p^{C[\log]}$ is immediate from \Cref{thm:bsn-s}. As for $\p^{C[\log]}$-hardness, we use the well-known fact that $ \p^{C[\log]} \subseteq \p^{\Vert C}$ for general~\cite{B91}\footnote{Reference \cite{B91} actually studies only NP, but the containment proof technique straightforwardly generalizes to other classes: Namely, instead of making logarithmically many adaptive queries to $C$, precompute the polynomially many potential queries the P machine could make, and send these in one parallel round to the $C$-oracle.} $C$, and observe $\p^{\Vert C}$-hardness of $\CDAGo$. Namely, the DAG $G$ for any input to a problem from $\p^{\Vert C}$ is a star, with all edges directed towards the center of the star, which is the output node. Thus, $G$ has separator size $1$ (i.e. remove the center of the star to isolate all remaining vertices), i.e. it encodes an instance of $\CDAGo$.
\end{proof}
\noindent
More generally, we obtain the following general scaling corollary when the separator number is polylogarithmic.
\corollaryQP*

\paragraph{Organization of remainder of section.} \Cref{sscn:weighting} introduces the notion of weighting functions. \Cref{sscn:graph-transformation} gives the main graph transformation which ``compresses'' a $\CDAG$ instance appropriately, and sets up a corresponding weighting function. This can be roughly thought of as a ``hardness proof'', i.e. that the compressed DAG output by this graph transformation captures the original $\CDAG$ instance. \Cref{sscn:solve-c-dag} gives the matching upper bound --- that the compressed DAG, coupled with an appropriate choice of weighting function, can now be resolved with fewer queries to a $C$-oracle. \Cref{sscn:stoqma} and \Cref{sscn:bounded} discuss the special cases of StoqMA and bounded depth (beyond the naive log bound)) $\CDAG$ instances, respectively.

\paragraph{Notation.} The remainder of this section introduces a fair amount of notation. For ease of reference, we collect notation here. $\Gamma(v) := \{w\mid (v,w)\in E\}$ is the neighbor set of vertex $v$. The descendents of vertex $v$ are denoted $\desc(v)$, i.e. the set of nodes reachable from vertex $v$ via a directed path, excluding $v$ itself. Analogously, the ancestors of vertex $v$ are denoted $\anc(v)$, i.e. the set of nodes from which there is a directed path to $v$, excluding $v$ itself.

\subsection{Weighting Functions}\label{sscn:weighting}

We now introduce the concept of \emph{weighting functions}, which assign a weight to each node in a DAG $G$.
%We show that \CDAG{}-instances permitting a weighting function with certain properties can be efficiently evaluated by a $\p^{C[\log]}$-machine.
Weighting functions were first used by Gottlob~\cite{Gottlob1995} to prove $\Trees(\NP) = \pNPlog$, and later implicitly by Ambainis~\cite{Ambainis2014} to show $\pQMAlog$-hardness of the Approximate Simulation (APX-SIM) problem.
We use a modified definition.

\begin{definition}[Weighting function]\label{def:weighting-function}
    Let $G=(V,E)$ be a DAG.
    An efficiently computable function $f:V\to\R$ is called a \emph{weighting function}.
    We say $f$ is \emph{$c$-admissible} for constant $c\in\R$ if for all $v\in V$,
    \begin{equation}\label{eq:admissible}
        f(v) \ge 1+c\sum_{w\in\Gamma(v)} f(w),
    \end{equation}
    where $\Gamma(v) := \{w\mid (v,w)\in E\}$ is the (out-going) neighbor set of $v$.
    The \emph{total weight} $W_f(G)$ of $G$ under weighting function $f$ is
    \begin{equation} W_f(G) = \sum_{v\in V}f(v).\end{equation}
\end{definition}

\begin{remark}
Our \Cref{def:weighting-function} is slightly weaker than Gottlob's~\cite{Gottlob1995}, which sums over all nodes in $\desc(v)$ (i.e. nodes reachable from $v$ via a directed path, excluding $v$ itself) instead of $\Gamma(v)$ in \eqref{eq:admissible}.
However, these definitions are equivalent up to a constant factor in $c$.
\end{remark}

\begin{definition}[Levels of a DAG]\label{def:level}
    Let $G=(V,E)$ be a DAG.
    We divide $G$ recursively into levels.
    Level $0$ is made up by the nodes without incoming edges.
    Level $i+1$ contains nodes $v$ that have only inputs $w$ (i.e. $(w,v)\in E$) with $\level(w) \le i$ and at least one input $w$ with $\level(w)=i$.
    We denote the level of a node $v\in V$ by $\level(v)$.
    Nodes on the last level are called \emph{terminal nodes}.
    The \emph{depth} of $G$, denoted $\depth(G)$ is the maximum level number.
\end{definition}

In the next lemma, we extend Gottlob's~\cite{Gottlob1995} admissible weighting functions to our definition of $c$-admissability (\Cref{def:weighting-function}).
For $c=1$,  the definitions are the same.

\begin{lemma}\label{lem:weighting-function}
    For any DAG $G=(V,E)$ and $c\ge2$, the weighting functions $\rho$ and $\omega$ below are $c$-admissible:
    \begin{align}
        \rho(v) &= (c\abs V)^{\depth(G)-\level(v)}\\
        \omega(v) &= (c+1)^{\abs{\desc(v)}}
    \end{align}
\end{lemma}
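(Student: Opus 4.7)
The plan is to treat the two weighting functions separately; in both cases the argument reduces to a direct calculation exploiting the definition of the relevant graph parameter (level for $\rho$, descendant count for $\omega$), and neither requires an inductive detour.

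For $\rho(v) = (c|V|)^{\depth(G)-\level(v)}$, the key observation is that any $w \in \Gamma(v)$ satisfies $\level(w)\ge\level(v)+1$ by \Cref{def:level}, so $\rho(w)\le (c|V|)^{\depth(G)-\level(v)-1}$. Combining this with $|\Gamma(v)|\le|V|-1$ (the graph is loopless) yields
\[
    c\sum_{w\in\Gamma(v)}\rho(w)\;\le\;c(|V|-1)\,(c|V|)^{\depth(G)-\level(v)-1}\;=\;\Bigl(1-\tfrac{1}{|V|}\Bigr)\rho(v).
\]
If $\Gamma(v)\ne\emptyset$, then $\depth(G)-\level(v)\ge 1$ and so $\rho(v)\ge c|V|\ge 2|V|\ge |V|$, which gives $\rho(v)/|V|\ge 1$ and hence $1+c\sum_w\rho(w)\le\rho(v)$. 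If $\Gamma(v)=\emptyset$ the condition reduces to $\rho(v)\ge 1$, which holds trivially.

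For $\omega(v)=(c+1)^{|\desc(v)|}$, I would in fact prove the strictly stronger bound $\omega(v)\ge 1+c\sum_{u\in\desc(v)}\omega(u)$ (this is Gottlob's original condition, as noted in the remark after \Cref{def:weighting-function}); it immediately implies the stated condition since $\Gamma(v)\subseteq\desc(v)$ and the summands are nonnegative. The point is a one-line geometric series argument: topologically sort $\desc(v)$ in reverse as $u_1,\ldots,u_d$ (with $d=|\desc(v)|$) so that $\desc(u_i)\subseteq\{u_1,\ldots,u_{i-1}\}$, which is possible since the subgraph induced on $\desc(v)$ is itself a DAG. Then $|\desc(u_i)|\le i-1$, giving
\[
    c\sum_{u\in\desc(v)}\omega(u)\;=\;c\sum_{i=1}^d(c+1)^{|\desc(u_i)|}\;\le\;c\sum_{i=1}^d(c+1)^{i-1}\;=\;(c+1)^d-1\;=\;\omega(v)-1,
\]
as required.

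I do not anticipate any real obstacle, but two small points warrant care. First, for $\rho$ one must remember to handle terminal nodes separately (otherwise the bound $\rho(v)\ge|V|$ is vacuous), and to use the $c\ge 2$ hypothesis to buy the slack needed to absorb the ``$+1$'' on the right-hand side of the admissibility inequality. Second, for $\omega$ the reverse-topological argument uses nothing beyond acyclicity and in fact would already work down to $c\ge 1$; the bound is tight exactly when the subgraph on $\{v\}\cup\desc(v)$ is a directed path, which is a reassuring sanity check.
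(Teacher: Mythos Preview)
Your proposal is correct and essentially mirrors the paper's proof: for $\rho$ both arguments use $\level(w)\ge\level(v)+1$ together with $|\Gamma(v)|\le|V|$ to bound the sum, and for $\omega$ both use a topological ordering plus a geometric series (the paper sorts $\Gamma(v)$ and uses $|\desc(u_i)|\le|\desc(v)|-i$, while you sort all of $\desc(v)$ in reverse and use $|\desc(u_i)|\le i-1$, arriving at the same geometric sum). Your choice to prove the stronger Gottlob-style inequality for $\omega$ and your explicit handling of terminal nodes for $\rho$ are minor cosmetic differences, not substantive ones.
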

\begin{proof}
    The proof for $\rho$ is the same as in \cite{Gottlob1995}, whereas our proof for $\omega$ is significantly simplified.
    To argue $c$-admissability of $\rho$, let $v\in V$.
    By \Cref{def:level}, it holds that $\level(w) > \level(v)$ for all $w\in \Gamma(v)$.
    Therefore,
    \begin{align}
        1+c\sum_{w\in \Gamma(v)} \rho(w) \le c\abs V(c|V|)^{\depth(G)-\level(v)-1}
        = (c|V|)^{\depth(G)-\level(v)} = \rho(v).
    \end{align}
    For $\omega$, let $\{u_1,\dots,u_k\} = \Gamma(v)$ be topologically ordered (with respect to $G$).
    Then $\abs{\desc(u_i)} \le \abs{\desc(v)}-i$.
    Thus,
    \begin{align}
        1+c\sum_{i=1}^k \omega(u_i) &\le 1+c\sum_{i=1}^k (c+1)^{\abs{\desc(v)}-i}\\
        &\le 1+c\sum_{i=0}^{\abs{\desc(v)}-1}(c+1)^i\\
        &= 1+c\frac{(c+1)^{\abs{\desc(v)}}-1}{c}\\
        &= (c+1)^{\abs{\desc(v)}} \\
        &= \omega(v).
    \end{align}
\end{proof}

In Sections~\ref{sscn:graph-transformation} and \ref{sscn:solve-c-dag}, we assume $C\in\QVp$, unless stated otherwise.

\subsection{Graph transformation: The Compression Lemma}\label{sscn:graph-transformation}

Ideally, our aim for a given $\CDAG$ instance $G$ is to define a $c$-admissible weighting function $f$ with $W_f(G)$ as small as possible.
This is because in \Cref{sscn:solve-c-dag}, we show how to solve arbitrary \CDAG{}-instances using $O(\log W_f(G))$ $C$-queries.
Unfortunately, for an arbitrary \CDAG{}-instance $G$ there does not necessarily exist a $c$-admissible weighting function $f$ such that $W_f(G)$ is ``small'', e.g. subexponential.
Thus, in this section, we show:
\begin{restatable}{lemma}{lemmaCompress}\label{l:compress}
    As input, we are given a \CDAG{} instance $G$, and a separator tree for $G$ of depth $D$ and separator size $s$.
    Fix any constant $c\geq 2$. Then, a query graph $G^*=(V^*, E^*)$ with $\abs{V^*}\le 2^{O(sD)}n$, together with a $c$-admissible weighting function $f^*$ and $W_{f^*}(G^*)\le (c+1)^{O(sD)}n$, can be constructed in time $2^{O(sD+\log n)}$ such that $\Call{Evaluate}{G} = \Call{Evaluate}{G^*}$ (irrespective of nondeterministic choices in \Cref{alg:c-dag-eval}). As required by the definition of C-DAG (\Cref{def:c-dag}), each node of $G^*$ corresponds to a verification circuit of size $\poly(\abs{V^*})$.
\end{restatable}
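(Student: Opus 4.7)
The plan is to use the separator tree $T$ as a roadmap for recursively decoupling $G$ into hardcoded copies. At the root supervertex $S_1$ of $T$ (size at most $s$), removing $S_1$ disconnects the remaining graph into pieces handled by subtrees of $T$. Since there are only $2^s$ possible outcomes for the queries in $S_1$, I duplicate each descendant $2^s$ times, one copy per guess $\alpha\in\{0,1\}^{|S_1|}$, and hardcode every wire leaving $S_1$ according to $\alpha$. Descendant supervertices are then decoupled from $S_1$, and the procedure is applied recursively inside each subtree. The resulting vertex set is $V^{*}=\{(v,\alpha):v\in S_i,\;\alpha\in\{0,1\}^{s\cdot d_i}\}$, where $d_i$ is the depth of $S_i$ in $T$ and $\alpha$ records the guessed outputs of the queries along the unique branch from the root to $S_i$. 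Summing over all supervertices gives $|V^{*}|\leq \sum_i|S_i|\cdot 2^{s\cdot d_i}\leq n\cdot 2^{sD}$, and the construction can be carried out in time $2^{O(sD+\log n)}$.

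Edges in $G^*$ are installed as follows. An edge $(u,v)$ of $G$ with $u\in S_j$ an ancestor separator of $S_i\ni v$ does not become an edge in $G^*$; instead, the corresponding input of $(v,\alpha)$ is hardcoded by the bit of $\alpha$ recording $u$'s output. All other edges $(u,v)$ map to $(u,\alpha_u)\to (v,\alpha_v)$ for each $\alpha_v$ extending $\alpha_u$. The result node of $G^*$ is a conductor $t$ that runs the recursive evaluation sketched in \Cref{fig:simpletransform} and formalized by \Cref{alg:compute-inputs}: it starts with the unique copy of $S_1$ (no ancestors), reads off the actual outputs of its queries to extend $\alpha$, descends to the selected copies at the next level, and iterates until it reaches the copy of the original result node. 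This conductor is a single deterministic circuit of size $\poly(|V^{*}|)$, which fits in the C-DAG formalism because every $C\in\QVp$ supports classical preprocessing and postprocessing.

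For the weighting function I would assign $f^{*}(v,\alpha):=(c+1)^{s(D-d_i)+k(v)}$, where $v\in S_i$ and $k(v)$ is the reverse topological index of $v$ inside the supervertex $S_i$, and give the conductor weight $(c+1)^{s(D+1)}$. The $c$-admissibility check splits the children of $(v,\alpha)$ into (a) other copies inside the same $S_i$, handled by the $k$-exponent exactly as in the $\omega$ weighting of \Cref{lem:weighting-function}, and (b) copies in strictly deeper supervertices, each of weight a factor $(c+1)^{-s}$ smaller than $f^{*}(v,\alpha)$; since at most $s$ such child supervertices are reachable along any single downward branch of $T$, the $(c+1)^{-s}$ decay absorbs their combined contribution. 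The total weight telescopes to
\begin{equation}
W_{f^{*}}(G^{*})=\sum_i|S_i|\cdot 2^{s\cdot d_i}\cdot (c+1)^{s(D-d_i)+O(s)}\leq n\cdot (c+1)^{O(sD)}.
\end{equation}
Equivalence $\Call{Evaluate}{G}=\Call{Evaluate}{G^{*}}$ then follows by induction on $d_i$: on the unique branch of copies whose hardcoded $\alpha$ coincides with the true separator outputs, every query receives exactly the inputs it would receive in $G$, and the conductor identifies this branch level by level, so invalid queries outside this branch never influence the final output.

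The main obstacle will be verifying $c$-admissibility, because a single separator query may have many outgoing edges to deeper supervertices and these edges are replicated across all $2^{s\cdot d_i}$ copies. The observation that makes it work is that each individual copy $(v,\alpha)$ only has children of the form $(w,\alpha')$ with $\alpha'$ extending $\alpha$, so the out-neighborhood of $(v,\alpha)$ is not blown up by the duplication and the geometric decay across separator-tree levels suffices. A secondary subtlety is realizing the conductor $t$, and any auxiliary selection gates it requires, as a single verification circuit in $C$ without damaging the weight bookkeeping; this is where the assumption that $C\in\QVp$ allows arbitrary classical pre- and postprocessing is used.
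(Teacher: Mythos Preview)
Your high-level plan—create copies indexed by guessed ancestor-separator outputs, add a conductor, weight by separator depth—matches the paper's. But the execution has a direction error that breaks both the edge structure and the weighting. By your own rule you hardcode an edge $(u,v)$ only when $u$'s separator is an \emph{ancestor} of $v$'s; the surviving edges are those where $u$ lies in the \emph{same or a deeper} separator than $v$. Hence the out-neighbours of $(v,\alpha)$ in $G^*$ sit in \emph{shallower} supervertices, not deeper ones, so your weight $(c+1)^{s(D-d_i)+k(v)}$ makes children heavier than parents and $c$-admissibility fails outright. More seriously, you never specify how $(v,\alpha)$ obtains the input coming from a parent $u$ in a strictly deeper separator: the natural reading of your edge rule gives $(v,\alpha)$ one incoming copy $(u,\alpha')$ for \emph{every} extension $\alpha'$ of $\alpha$, and you do not say which one supplies the wire. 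This is the technical heart of the construction. The paper resolves it by having each copy run the recursive procedure \textsc{ComputeOutput} (\Cref{alg:compute-inputs}, via \Cref{rule:wires}), which in turn requires the ``shortcut'' edges of \Cref{eqn:newE} from each copy to \emph{all} of its $G$-descendants living in strictly shallower separators along the branch; with those edges every copy has at most $O(sD)$ descendants in $G'$, and the weighting $\omega(v)=(c+1)^{|\mathrm{desc}(v)|}$ from \Cref{lem:weighting-function} is $c$-admissible with total weight $(c+1)^{O(sD)}n$.

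There is a second gap for promise classes. Your claim that ``invalid queries outside this branch never influence the final output'' is too quick: even along the conductor's chosen branch, two different nodes can end up consulting two distinct copies of the same underlying query with identical effective inputs, and if that query is invalid the copies may answer inconsistently (the paper gives exactly this example at the start of \Cref{sscn:merging-nodes}). The paper therefore adds a merging pass (\Cref{alg:merge}) that identifies copies sharing the same hardcoded ancestor bits on $D_u$, and shows the summed weighting remains $c$-admissible (\Cref{lem:merge-admissible}); without this step $\textsc{Evaluate}(G)=\textsc{Evaluate}(G^*)$ need not hold when $C$ is a promise class.
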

\noindent Combining this with \Cref{sscn:solve-c-dag}, we will hence be able to decide $G^*$ with $O(sD)$ queries.

\paragraph{Brief outline.} The transformation from $G$ to $G^*$ proceeds in multiple steps.
First, we construct a graph $G'$ (\Cref{sscn:basic-construction}) where each node $v\in V'$ has $\abs{\desc(v)}\le O(sD)$, where recall $\desc(v)$ is the set of descendents of $v$. Roughly, this is achieved by exploiting the structure of separator trees to ``hardcode'' dependencies.
%For example, if $G$ contained $u\leftarrow v \leftarrow w$, we could remove the dependence of $v\leftarrow w$ by creating two copies $v^0$ and $v^1$ in which $v$'s input from $w$ is hardcoded to $0$ or $1$, respectively.
%We then have $u\leftarrow (w, v^0, v^1)$ and $u$ picks either $v^0$ or $v^1$ as input, depending on $w$.
%This idea is due to Gottlob \cite{Gottlob1995}, where it is used in combination with \emph{hypertrees} to flatten a tree.
%Our construction of $G'$ leaves two issues.
This leaves two issues.
First, for technical reasons $G'$ is lacking an output node, which we add in $G''$.
Second, we have redundant copies of nodes, which simplify the construction, but are problematic in the presence of invalid queries, as two copies of the same node with the same inputs may produce different outputs.
We merge these redundant nodes to obtain graph $G^*$, and define a suitable $c$-admissible weighting function in the process (\Cref{sscn:merging-nodes}).
\Cref{sscn:correctness} shows correctness.

\subsubsection{Basic Construction (\texorpdfstring{$G'$}{G'})}\label{sscn:basic-construction}

In this section, we construct $G'=(V', E')$.
%, which is subsequently mapped to $G''$.
%Each node $v\in V'$ consists of a classical circuit that computes the inputs to a corresponding node of the original graph and then passes on its output.
We begin by formally stating the construction, followed by giving the intuition, and an illustration via \Cref{fig:transform2} and accompanying discussion.\\

% \begin{figure}[htb]
%     \centering
%     \includegraphics{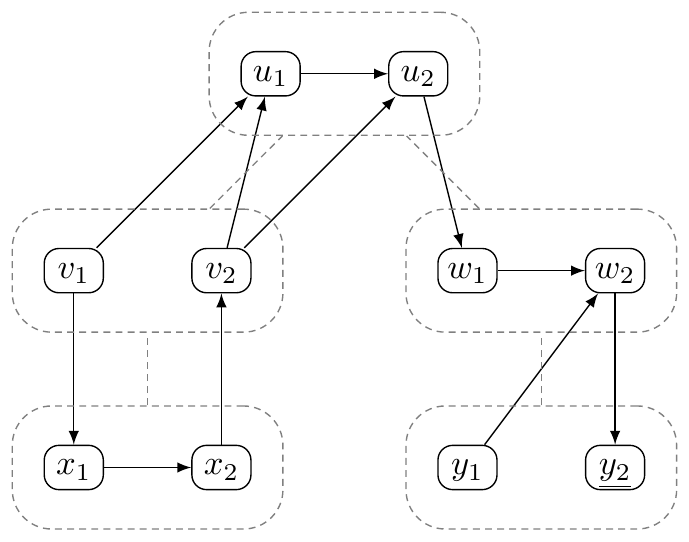}
%     \caption{Query graph with output node $y_2$ and dashed separator tree}\label{fig:separator-tree}
% \end{figure}
\vspace{-1mm}
\noindent \emph{The graph transformation.} Let $T=(V_T, E_{T})$ be a separator tree (\Cref{def:separator-tree}) of $G$ of depth $D$ and separator size $s$.
A running example is given in \Cref{fig:transform1}.
Let $S\in V_T$ be an arbitrary supervertex and $S_1,\dots,S_{d}$ be the unique path along superedges from the root supervertex $S_1$ to $S_{d}:=S$ (define $d:=d_{S}\le D$ as the distance from the root plus one).
Recall $S$ is labelled by some subset of $s$ vertices, $S=(u_{S,1},\dots,u_{S,s})$, where we assume the sequence in which the $u_{S,i}$ are listed is consistent with some fixed topological order on all of $G$.
Define sets
\begin{equation}\label{eqn:VS}
    V_S := \Bigl\{ v_{S,i}^{z_1,\dots,z_d} \Bigm\vert i\in[s],\, z_1,\dots,z_d \in \bin^{s}\Bigr\}
\end{equation}
and set $V' = \bigcup_{S\in V_T} V_S$.
As depicted in \Cref{fig:transform2}, it will be helpful to continue to view $V_S$ as a set, even though $V_S$ is not a supervertex (i.e. $G'$ itself will not be a separator tree).
Intuitively, $v_{S,i}^{z_1,\dots,z_d}$ in $V'$ represents node $u_{S,i}$ in $V$, but conditioned on ``outcome strings'' $z_1,\dots,z_d\in\set{0,1}^s$ in the separators $S_1,\dots,S_d$.
%(For clarity, we typically use $v$ and $u$ to distinguish between nodes of $G'$ and $G$, respectively.)
%For this reason, each node $v\in V'$ should intuitively be viewed as an (appropriately modified) copy of a node $u\in V$ in the original graph.
For ease of reference, we define a surjective function $\inv:V'\mapsto V$ to formalize this relationship:
\begin{equation}\label{eqn:inv}
    \forall S,i,z_1,\ldots, z_d,\quad \inv(v_{S,i}^{z_1,\dots,z_d}) = u_{S,i}.
\end{equation}
Finally, since $T$ has at most $n$ supernodes, we have $\abs{V'}\le 2^{O(sD)}n$.

Next, define edges
\begin{equation}\label{eqn:newE}
    E_S = \Bigl\{ \bigl(v_{S,i}^{z_1,\dots,z_d}, v_{S_j,k}^{z_1,\dots,z_j}\bigr) \Bigm|
    i\in [s],\, j\in[d-1], \,
    u_{S_j,k} \in \desc(u_{S,i})
    \Bigr\},
\end{equation}
where recall $u_{S_j,k} \in \desc(u_{S,i})$ is the set of all descendants of $u_{S,i}$ in the {original} graph $G$. In words, each $E_S$ creates, for each copy of $u_{S,i}$, edges to all copies of {descendants} $u_{S_j}$ which are on a strictly higher level in the separator tree (due to the $ j\in[d-1]$ constraint). In the context of \Cref{fig:transform1}, this means we ``shortcut'' paths to descendents, but only via new edges pointing strictly ``upwards'' towards the root.
Set $E' = \bigcup_{S\in V_T} E_S$.
Observe that $\abs{\desc(v)}\le O(sD)$ for all $v\in V'$.\\
%\Cref{fig:transform} depicts an example of this graph transformation.\\
\vspace{-1mm}

\begin{figure}[p!]
\centering
\begin{subfigure}[b]{\textwidth}
\begin{center}
   \includegraphics{fig/transform.pdf}
\end{center}
   \caption{Query graph $G$ with separator tree of depth $D=3$ and separator size $s=2$ shown as an overlay via dashed lines. Recall from \Cref{def:separator-tree} that each dashed set (e.g. $\set{u_1,u_2}$) is called a supervertex, and dashed edges (e.g. between $\set{u_1,u_2}$ and $\set{v_1,v_2}$) are superedges. Vertex $y_2$ is underlined to denote it as the output node.}
   \label{fig:transform1}
\end{subfigure}

\begin{subfigure}[b]{\textwidth}
\begin{center}
   \includegraphics{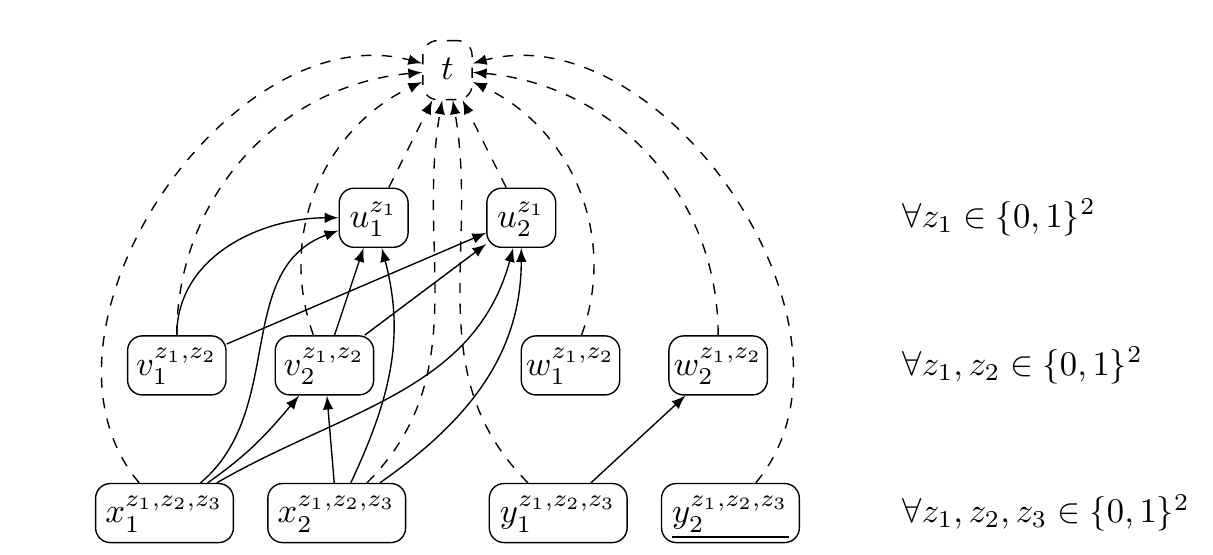}
\end{center}
   \caption{$G'$ consists of all nodes and edges drawn via solid lines. For clarity, each rectangle denotes a set of nodes $V_S$ (\Cref{eqn:VS}) corresponding to some supervertex $S$. For example, $u_1^{z_1}$ denotes a set of nodes $\set{u_1^{00},u_1^{01},u_1^{10},u_1^{11}}$, whose neighbor sets are defined via \Cref{eqn:newE}. To move from $G'$ to $G''$, we add node $t$ and all dashed edges.}
   \label{fig:transform2}
\end{subfigure}

\begin{subfigure}[b]{\textwidth}
\begin{center}
    \includegraphics{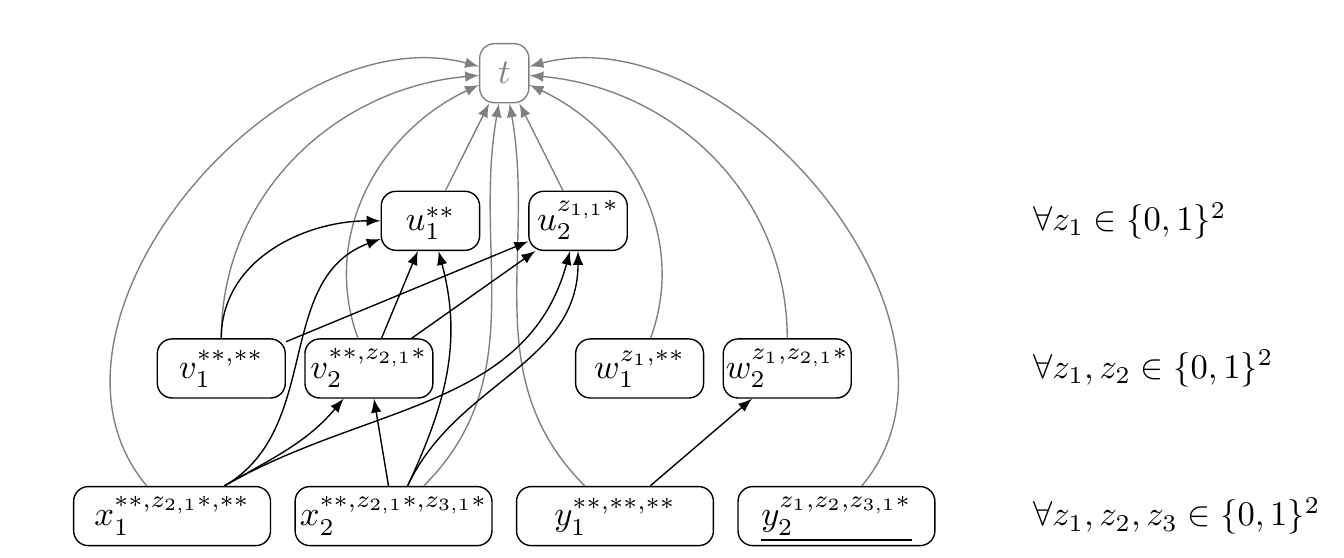}
\end{center}
    \caption{Graph $G^*$ with merged nodes indicated by asterisks in the superscript.}
    \label{fig:transform3}
\end{subfigure}
\caption{Example of the query graph transformation.}\label{fig:transform}
\end{figure}

\noindent\emph{Assigning queries to $G'$.} We have given a graph theoretic mapping $G\mapsto G'$, but not yet specified how the queries made at nodes of $G$ are mapped to queries made at nodes of $G'$.
Let us do so now.
Consider any $v_{S,i}^{z_1,\dots,z_d}\in V'$.
Roughly, the goal is for the query at $v_{S,i}^{z_1,\dots,z_d}$ to simulate the query at $\inv(v_{S,i}^{z_1,\dots,z_d}) = u_{S,i}$.
However, $v_{S,i}^{z_1,\dots,z_d}$ is ``conditioned'' on bit strings $z_1,\ldots, z_d$, so the simulation is not straightforward.
To make this formal, we use \Cref{alg:compute-inputs} as follows:
\begin{rul}\label{rule:wires}
    For each edge $(u_{T,j},u_{S,i})$ in $G$, the result of $\Call{ComputeOutput}{u_{T,j}\mid z_1,\dots,z_d}$ is used as the corresponding input to $v_{S,i}^{z_1,\dots,z_d}\in V'$.
\end{rul}
\noindent Intuitively, we may view the conditioning string $z_1,\ldots, z_d$ as specifying a ``parallel universe'', where if $(u_{T,j},u_{S,i})$ was an edge in $E$, then this parent-child relationship is simulated \emph{relative to this parallel universe} via the $\Call{ComputeOuput}$ function.

\begin{remark}
(1) By definition of a separator tree, all edges $(u_{T,j},u_{S,i})$ of $G$ must be in the same branch as $S$ (i.e. either above or below $S$ in the same branch, but not in a parallel branch of the tree).
(2) This implies there are essentially two cases to consider: When $u_{T,j}$ is closer to the root than $u_{S,i}$, or vice versa.
In the first case, Line~\ref{alg:compute-inputs:early-return} of \Cref{alg:compute-inputs} immediately returns the hardcoded bit of $z_1,\ldots, z_d$ corresponding to $u_{T,j}$.
In the second case, when $u_{S,i}$ calls \Cref{alg:compute-inputs}, Lines 6-8 will recursively compute outputs of nodes below $v_{S,i}^{z_1,\dots,z_d}$ in the same branch.
For this, $v_{S,i}^{z_1,\dots,z_d}$ will need access to the $\outputc$ functions (\Cref{def:c-dag}) of certain nodes below it; this is afforded to $v_{S,i}^{z_1,\dots,z_d}$ via the edge set $E_S$ (demonstrated via the ``upward'' black edges in \Cref{fig:transform2}).

%The output of $\Call{ComputeOutput}$ should not necessarily be thought of as a fixed bit, but rather as a ``wire', i.e. $\Call{ComputeOutput}$ is used to specify which output wires from one vertex in $V'$ go into which input slot of another vertex of $V'$.
\end{remark}

\begin{algorithm}[t]
\caption{Compute the output of $u_{S,i}$, conditioned on results $z_1,\dots,z_m$ in the separators above.}\label{alg:compute-inputs}
\begin{algorithmic}[1]
    \Function{ComputeOutput}{$u_{S,i}\mid z_1,\dots,z_{m}$} \Comment{recall $u_{S,i}\in S$}
        \State $S_1,\dots,S_{d} \gets$ path from the root to $S$
        \If{$m\geq d$}   \Comment{base case of recursion; recursion has computed $z_d$}
            \State \Return $z_{d,i}$ \Comment{recall $z_d\in \set{0,1}^s$; $z_{d,i}$ encodes answer to $u_{S,i}$} \label{alg:compute-inputs:early-return}
        \EndIf
        \State $z_{m+1}\gets 0^s$   \Comment{initialize answer bits to all zeroes to start} \label{alg:compute-inputs:z-init}
        \For{$j=1,\dots,s$}         \Comment{in topological order, set answer bits at current level of recursion} \label{alg:compute-inputs:loop}
            \State $z_{m+1,j} \gets \outputc\left[u_{S_{m+1},j}^{z_1,\dots,z_{m+1}}\right]$ \Comment{set bit $j$ of $z_{m+1}$ using query answers on incoming edges} \label{alg:compute-inputs:set-output}
        \EndFor
        \State\Return\Call{ComputeOutput}{$u_{S,i}\mid z_1,\dots,z_{m+1}$}\label{alg:compute-inputs:recursion}
    \EndFunction
\end{algorithmic}
%\caption{test}
\end{algorithm}

We have now specified the local input/output behavior of any node $v\in V'$.
Two problems remain: First, we require a designated output node in $G'$, which implicitly orchestrates the new logic in $G'$.
Second, observe in \Cref{fig:transform2} that the original output node of $G$, $y_2$, has been mapped to a new set of nodes labelled $y_2^{z_1,z_2,z_3}$, all of which are \emph{disconnected} from the rest of $G'$.
Thus, we require a mechanism to stitch together these components of $G'$.
To solve both problems simultaneously, we define $G''$ by adding a new output node, $t$, such that: (1) $t$ has incoming edges from all nodes in $V'$, and (2) the output of $G''$ is computed by having $t$ call $\Call{ComputeOutput}{v\mid \epsilon}$ and return its answer, where $\epsilon$ denotes the empty string and $v$ is the original output node of $G$.
Both $t$ and these new edges are depicted in \Cref{fig:transform2} via dashed lines.\\
\vspace{-1mm}

\noindent\emph{Intuition.} The construction of $G''$, and why it does what we need, is subtle; so let us illustrate via \Cref{fig:transform2}.
For this, recall in \Cref{fig:transform1} that $y_2\in V$ is the original output node of $G$.
For concreteness, assume in this discussion that $C=\NP$ and $s\in O(1)$, so that \Cref{thm:bsn-s} states $\CDAG_1 \subseteq \p^{\NP[\log]}$. The intuition is as follows:
\begin{enumerate}
    \item To apply admissible weighting functions and prove \Cref{thm:separator-tree-time}, a property we require\footnote{While necessary, this property itself is \emph{not} sufficient; we use it here to ease the discussion. More accurately, we require that for any $v$, $\abs{\desc(v)}\le O(\log n)$. This latter property is trickier to attain, and does not follow from $D\in O(\log n)$.} is that the length of any directed path in $G$ be at most $D\in O(\log n)$.
        However, in the separator tree decomposition of \Cref{fig:transform1}, the longest path can in principle have $O(n)$ edges.

    \item To address this, \Cref{fig:transform2} removes all ``downward edges'' with respect to the separator decomposition (i.e. edges $(v,w)\in E$ such that $\textup{level}(v)<\textup{level}(w)$ in the tree).
        Thus, the longest path now goes from a leaf to the root $t$, with each edge followed monotonically decreasing the current level\footnote{More accurately, the level is non-increasing with each edge followed. This is because the construction allows edges between pairs of vertices in the same supervertex (\Cref{eqn:newE}). This can incur an overhead in path length scaling with $s$, the separator size, which we ignore for this intuitive discussion.}. Since our separators are \emph{balanced}, any such path has length $O(\log n)$, as desired.

    \item Of course, this breaks the logic of the C-DAG $G$ itself.
    To correct this, we apply three ideas.
    \begin{enumerate}
        \item \emph{Create node copies.} Each node of $G$ (say, $x_1$) is split into multiple copies, each of which is hardcoded with a distinct possible output value of all its ``ancestors'' in the separator tree.
    In this example, $x_1$ is split into $64$ copies of form $x_1^{z_1,z_2,z_3}$, over all $z_1,z_2,z_3\in\set{0,1}^2$.
    Here, $z_1$ is intended to capture the outputs of $u_1$ and $u_2$, $z_2$ the outputs of $v_1$ and $v_2$, and $z_3$ the outputs of $x_1$ and $x_2$.
    Of these, $x_1$ only depends directly on the first bit of $z_2$ according to $G$; all other bits in $z_1,z_2,z_3$ are irrelevant for $x_1$, and are included only to make the construction systematic.
    (They will be removed shortly when moving to $G^*$ in \Cref{sscn:merging-nodes}.)
%    Crucially, the enumeration over all $z_1,z_2,z_3$ is \emph{only} over supernodes of the tree which are ancestors of $x_1$. Thus, the number of copies of $x_1$ created scales as $2^{O(s\log n)}\in \poly(n)$ when $s\in O(1)$.

        \item \emph{Add upward shortcuts.} Add new ``upward'' edges via \Cref{eqn:newE}. In words, this roughly means that if $u$ is an ancestor of $v$ in $G$, but $v$ occurs closer to the root than $u$, then we add upward shortcut edge $(u,v)$ to $E'$.
        In our example, we connect $x_1^{z_1,z_2,z_3}$ to all (copies of) descendants of $\inv(x_1^{z_1,z_2,z_3})$ which are \emph{in the same unique superpath} from $x_1^{z_1,z_2,z_3}$ up to $t$ (such as $v_2^{z_1,z_2}$).
        The careful reader may notice that $x_1^{z_1,z_2,z_3}$ does \emph{not} have an edge to $x_2^{z_1,z_2,z_3}$.
        This is why $x_2$ has superscript $z_3$; this enumerates over all possible outputs it may have received from $x_1$.

        \item \emph{Orchestrate the madness.} All copies of all nodes send their output to $t$ via the dashed edges in \Cref{fig:transform2}.
        Roughly, $t$ now selects, out of all the possible computation paths created via node copies, which is the ``right'' path.
        In our example, the ``right'' path can start with any copy of $v_1$ (e.g. $v_1^{00,00}$ or $v_1^{11,10}$, etc), since $v_1$ has in-degree $0$ in $G$. Thus, all copies of $v_1$ encode the same NP query.
        Suppose this NP query outputs $b\in\set{0,1}$.
        Then, the ``right'' path next utilizes any copy of $x_1$ of form $x_1^{z_1,bz_{2,2},z_3}$ (first bit of $z_2$ is $b$). And so forth.
        This ``selection'' of the ``right path'' is executed when $t$ calls $\Call{ComputeOutput}{y_2\mid \epsilon}$.
    \end{enumerate}
\end{enumerate}

\vspace{3mm}
\noindent \emph{An explicit run-through.} For concreteness, we now trace through $t$'s call to $\Call{ComputeOutput}{y_2\mid \epsilon}$ for \Cref{fig:transform2}:

\begin{center}
\begin{minipage}[c]{0.6\textwidth}
\begin{algorithmic}[1]
    \Blk{$\Call{ComputeOutput}{y_2\mid \epsilon}$:}
        \State $z_{1,1}\gets \outputc[u_1^{00}]$
        \State $z_{1,2}\gets \outputc[u_2^{z_{1,1}0}]$
        \Blk{\Return $\Call{ComputeOutput}{y_2\mid z_1}$:}
            \State $z_{2,1}\gets \outputc[w_1^{z_1,00}]$
            \State $z_{2,2}\gets \outputc[w_2^{z_1,z_{2,1}0}]$
            \Blk{\Return $\Call{ComputeOutput}{y_2\mid z_1,z_2}$:}
                \State $z_{3,1}\gets \outputc[y_1^{z_1,z_2,00}]$
                \State $z_{3,2}\gets \outputc[y_2^{z_1,z_2,z_{3,1}0}]$
                \Blk{\Return $\Call{ComputeOutput}{y_2\mid z_1,z_2,z_3}$:}
                    \State \Return $z_{3,2}$
                \EndBlk
            \EndBlk
        \EndBlk
    \EndBlk
\end{algorithmic}
\end{minipage}
\end{center}

\begin{remark}[Promise gaps]\label{rem:promisegaps}
    While the \emph{size} of the verification circuit at any node $u_{S,i}\in V$ grows under the mapping to $v_{S,i}^{z_1,\dots,z_d}\in V'$ (since the latter takes in more wires), the underlying verification procedures at each $u_{S,i}$ and $v_{S,i}^{z_1,\dots,z_d}$ are identical, up to the latter's use of \Cref{rule:wires} to decide on-the-fly which input wires to use based on $(z_1,\dots,z_d)$. Thus, the \emph{promise gaps} at each $u_{S,i}$ and $v_{S,i}^{z_1,\dots,z_d}\in V'$ are also identical. When $C$ is a promise class allowing error reduction, this is not of consequence; however, for StoqMA, which is not known to have error reduction, this observation allows us to keep the exponents in \Cref{thm:bsn-s-stoqma} at $O(s(n)\log^2n)$ (versus $O(s^2(n)\log^2n)$, since the promise gaps at each $v_{S,i}^{z_1,\dots,z_d}$ node still scale as $1/\poly(n)$ due to this observation, not $1/\poly(\abs{V''})$.
\end{remark}

\begin{remark}[For StoqMA]\label{rem:stoqma}
      $\Call{ComputeOutput}{v\mid \epsilon}$ (and thus $t$) runs in $\DTIME(\poly(V''))$ to stitch together the answers of all other nodes of $V$. Thus, when $C=\StoqMA$, the action of $t$ can also be viewed as a special case of a StoqMA computation (i.e. the StoqMA ``verifier'' for $t$ would ignore its proof, use its classical gates to simulate $\Call{ComputeOutput}{v\mid \epsilon}$, and then output either $\ket{+}$ or $\ket{-}$ depending on whether it wishes to accept or reject, respectively.) Thus, in this case, all nodes of $G''$ are valid $C=\StoqMA$ nodes, so $G''$ is a valid $\StoqMA$-DAG.
\end{remark}

\subsubsection{Merging Nodes (\texorpdfstring{$G^*$}{G^*})}\label{sscn:merging-nodes}

Next, we address the issue that $G$ and $G''$ are not necessarily equivalent for promise problems, since copies of the same invalid query could have different outputs.
For example, in \Cref{fig:transform2}, if $z_{1,1}=1$, then $u_1^{00}$ and $u_2^{z_{1,1}0}$ depend on different copies of $v_2$, which could lead to inconsistencies if $v_2$ encodes an invalid query. In addressing this, we will also remove redundant copies of nodes (e.g. $v_1$, which has in-degree $0$ in \Cref{fig:transform1}, encodes the same query in \Cref{fig:transform2}, regardless of how $z_1$ and $z_2$ are set).

To proceed, we construct graph $G^*$ by merging node copies which have the same hard-coded inputs.
Consider any $u:=u_{S,i}\in V$, where as in $\Cref{alg:compute-inputs}$, we let $d$ denote the depth of $S$ on the unique superpath $P_S:=(S_1,\ldots, S_d=S)$ from the root $S_1$ to $S$ in the separator tree.
Recalling that $\anc(u)$ denotes the ancestors of $u$ in $G$, i.e. the set of queries $u$ depends on, define
\be\label{eqn:Du}
    D_u := \anc(u)\cap \bigcup_{j=1}^d S_j,
\ee
in words, the ancestors of $u$ in the superpath $P_u$.
For any $v:=v_{S,i}^{z_1,\dots,z_d}\in V''$, define $h_v: D_u\to\bin$ with action $h_v(u_{S_j,k}) := z_{j,k}$, i.e. $h_v$ selects out the hard-coded bit $z_{j,k}$ corresponding to any $u_{S_j,k}\in D_u$ (i.e. $z_{j,k}$ is the $k$th bit of $z_j$ in the definition of $v$).
Now, whenever two copies $v_1,v_2\in V''$ of the same node (i.e. $\inv(v_1) = \inv(v_2)$) satisfy $h_{v_1} = h_{v_2}$ (i.e. $h_{v_1}$ and $h_{v_2}$ have the same truth table; note $D_u$ is in the original graph $G$ in definition $h_v:D_u\to\bin$), we will merge them.
Formally, the merge is accomplished by \Cref{alg:merge}, which simultaneously computes an admissible weighting function.
Henceforth, denote $(G^*,f^*) := \Call{Merge}{G''}$.

\begin{algorithm}[htb]
    \caption{Merge nodes in $G''$ to compute $G^*$.}\label{alg:merge}
    \begin{algorithmic}[1]
        \Function{Merge}{$G''=(V'',E'')$}
            \State $G_1 \gets G'',\; V_1 \gets V'',\; E_1\gets E'',\; f_1 \gets \omega$ \Comment{for weighting function $\omega$ from
            \Cref{lem:weighting-function}}
            \State $i\gets 1$
            \While{$\exists v_1,v_2\in V_{i}$ such that $\inv(v_1) = \inv(v_2)$ and $h_{v_1} = h_{v_2}$}
                    \State Choose any such $v_1,v_2$ such that $v:=\inv(w_1)$ is furthest from root in separator tree $T$.\label{alg:merge:choose}
                    \State Create copy $v^*$ of $v$ with $h_{v^*}:=h_{v_1}=h_{v_2}$.
                    \State $V_{i+1} \gets V_{i}\setminus\{v_1,v_2\}\cup\{v^*\}$
                    \State Replace $\outputc[v_1]$ and $\outputc[v_2]$ in the logic of nodes in $V_{i+1}$ with $\outputc[v^*]$.
                    \State $E_{i+1} \gets \bigl\{ (r(x),r(y)) \bigm\vert (x,y)\in E_i\bigr\}$, where $r(x) := \begin{cases}
                        v^*, & \text{if } x \in \{v_1,v_2\}\\
                        x, & \text{else}
                    \end{cases}$
                    \State Update $f_{i+1}:V_{i+1}\to\R$ such that $f_{i+1}(x) := \begin{cases}
                        f_{i}(v_1) + f_{i}(v_2), & \text{if } x = v^*\\
                        f_{i}(x), & \text{else}
                    \end{cases}$\label{alg:merge:update f}
            \EndWhile

            \State \Return $(G_i,f_i)$
        \EndFunction
    \end{algorithmic}
\end{algorithm}

\begin{remark}\label{rem:alwaysmerge}
    When $u\in V$ has in-degree $0$, then $D_u=\emptyset$. In this case, for any two copies $v_1,v_2\in V''$ of $u$, it is vacuously true that $h_{v_1}=h_{v_2}$. Thus, all copies of $u$ in $V''$ are merged by \Cref{alg:merge}. An example of this is depicted by $v_1$ in  \Cref{fig:transform1} being mapped to $v_1^{**,**}$ in \Cref{fig:transform3}. Intuitively, this captures the fact that since $v_1$ is in-degree $0$ in \Cref{fig:transform1}, the query at all copies $v_1^{z_1,z_2}$ of \Cref{fig:transform2} is identical, regardless of how $z_1,z_2$ are set.
\end{remark}

\begin{lemma}\label{lem:merge-admissible}
    For any constant $c\geq 2$, the weighting function $f^*$ produced by \Cref{alg:merge} is $c$-admissible for $G^*$, and satisfies $W_{f^*}(G^*)=W_\omega(G'')$ (for $\omega$ from \Cref{lem:weighting-function}).
\end{lemma}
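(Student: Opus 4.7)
The equality $W_{f^*}(G^*) = W_\omega(G'')$ is immediate by induction on merge iterations, since each update $f_{i+1}(v^*) = f_i(v_1) + f_i(v_2)$ preserves the total weight and every other weight is unchanged. The real content is in establishing $c$-admissibility of $f^*$ on $G^*$. A naive step-by-step induction on merges hits an obstacle: a single merge strictly increases the neighbourhood-weight sum at an in-neighbour $x$ of $\{v_1, v_2\}$ having an edge to only one of them (which is typically the case, since $v_1$ and $v_2$ differ in at least one superscript bit within their common prefix, so their in-neighbourhoods in $G''$ are disjoint), and there is no local slack in $\omega(x)$ to absorb this jump. I therefore work directly with the final quotient graph $G^*$, whose nodes are equivalence classes $[v]$ of the relation $v_1 \sim v_2 \iff \inv(v_1) = \inv(v_2) \wedge h_{v_1} = h_{v_2}$, with $f^*([v]) = \sum_{v' \in [v]} \omega(v')$.

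Summing the $c$-admissibility of $\omega$ on $G''$ over all $v' \in [v]$ gives
\[ f^*([v]) \;\geq\; |[v]| + c \sum_{w \in V''} \mu(w)\, \omega(w), \qquad \text{where } \mu(w) := |\{v' \in [v] : (v', w) \in E''\}|. \]
Crucially, $\omega$ is constant on each equivalence class, because by symmetry of the construction $|\desc_{G''}(w)|$ depends only on $\inv(w)$ and the superpath up to its supervertex, not on the superscript bits themselves; denote this common value $\omega_{[w]}$. It therefore suffices to prove, for every out-neighbour class $[w] \in \Gamma_{G^*}([v])$, the counting bound $\sum_{w' \in [w]} \mu(w') \geq |[w]|$, since this gives $\sum_{w' \in [w]} \mu(w') \omega(w') \geq |[w]| \omega_{[w]} = f^*([w])$, and summing over $[w]$ together with $|[v]|\geq 1$ yields $f^*([v]) \geq 1 + c \sum_{[w] \in \Gamma_{G^*}([v])} f^*([w])$.

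The counting bound reduces to elementary bookkeeping on superscript bits, once one identifies the correct structural inclusion. Let $u := \inv([v])$ lie in $S = S_d$ on the superpath $S_1, \ldots, S_d$, and let $u_w := \inv([w]) \in \desc(u)$ lie in $S_l$ with $l < d$. A $v' \in [v]$ whose $(l, k)$-child in $G''$ falls in $[w]$ is one whose $ds$-bit superscript jointly satisfies the $h_v$-constraints (on positions $D_u$) and the $h_w$-constraints on its length-$l$ prefix (on positions $D_{u_w} \subseteq \bigcup_{m \leq l} S_m$). The key structural fact is
\[ D_u \cap \textstyle\bigcup_{m \leq l} S_m \;\subseteq\; D_{u_w}, \]
which holds because $u \in \anc(u_w)$ (as $u_w \in \desc(u)$) forces $\anc(u) \subseteq \anc(u_w)$ by transitivity, and both sides intersect the same $S_m$'s. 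The two constraint sets therefore overlap precisely on $D_u \cap \bigcup_{m \leq l} S_m$, leaving $ds - |D_{u_w}| - |D_u \cap \bigcup_{m > l} S_m|$ free superscript bits; since $|D_u \cap \bigcup_{m > l} S_m| \leq (d - l) s$, the number of admissible $v'$ is at least $2^{ls - |D_{u_w}|} = |[w]|$, completing the argument. The main obstacle is thus pinpointing this inclusion: it is exactly the fact that makes the global aggregate bound on $G^*$ go through despite the failure of any local per-merge invariant.
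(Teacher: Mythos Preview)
Your proof is correct, but your dismissal of the step-by-step induction is mistaken, and the paper in fact proves the lemma exactly that way.

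You claim that a per-merge induction fails because an in-neighbour $x$ of $\{v_1,v_2\}$ typically sees only one of them, so merging would increase $\sum_{w\in\Gamma(x)}f(w)$ with no slack in $f(x)$ to absorb it. This reasoning is valid for $G''$, but not for $G_i$: Line~\ref{alg:merge:choose} of \Cref{alg:merge} merges bottom-up (furthest from the root first), and in-neighbours of $v_1,v_2$ lie strictly deeper in the separator tree by \Cref{eqn:newE}. Hence every in-neighbour $x$ is already a fully merged equivalence class when $v_1,v_2$ are processed. The very inclusion you isolate, $D_u\cap\bigcup_{m\le l}S_m\subseteq D_{u_w}$, then shows that if some constituent of $x$ was a $G''$-parent of $v_1$, another constituent (obtained by overwriting the length-$j$ prefix) is a $G''$-parent of $v_2$; so in $G_i$, $x$ is a parent of $v_1$ iff it is a parent of $v_2$. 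Consequently $\sum_{w\in\Gamma(x)}f_{i+1}(w)=\sum_{w\in\Gamma(x)}f_i(w)$ and the induction goes through immediately. The paper's proof is precisely this two-case induction ($v=v^*$ and $v$ a parent of $v^*$), though it states the parent-iff claim tersely.

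Your global approach is nonetheless sound and has its own merits. You work directly on the quotient, exploiting that $\omega$ is constant on classes (since $|\desc_{G''}(\cdot)|$ depends only on the preimage), and reduce admissibility to the counting bound $\sum_{w'\in[w]}\mu(w')\ge|[w]|$. One small point you leave implicit: the $h_v$- and $h_w$-constraints must agree on their overlap $D_u\cap\bigcup_{m\le l}S_m$, which holds because $[w]\in\Gamma_{G^*}([v])$ guarantees at least one compatible pair $(v_0,w_0)$. Compared to the paper, your argument is merge-order-independent and makes the combinatorics explicit; the paper's is shorter but leans on the bottom-up order and the same structural inclusion without spelling it out.
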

\begin{proof}
    We prove the lemma inductively on the iteration number, $i$.
    By \Cref{lem:weighting-function}, $f_1=\omega$ is $c$-admissible, and trivially $W_{f_1}(G_1)=W_\omega(G'')$.
    In the induction step, Line~\ref{alg:merge:update f} of \Cref{alg:merge} straightforwardly yields $W_{f_i}(G_i)=W_{f_{i+1}}(G_{i+1})$.
    We show $c$-admissibility for $f_{i+1}$, i.e. that $f(v)\geq 1+ c \sum_{w\in\Gamma(v)}f(w)$ (\Cref{eq:admissible}) holds for all $v\in V_{i+1}$, where recall $\Gamma(v)$ is the set of children of $v$.
    Since the admissibility condition only depends on $\Gamma(v)$, it suffices to consider two cases: $v=v^*$ and $v$ is a parent of $v^*$.
    First, if $v=v^*$, we have $\Gamma(v^*)= \Gamma(v_1)  \cup \Gamma(v_2) $, and thus
    \begin{equation}\label{eqn:stillcadmiss}
        f_{i+1}(v^*) = f_i(v_1) + f_i(v_2) \ge 2+c\sum_{u\in \Gamma(v^*)} f_{i}(u) = 2+c\sum_{u\in \Gamma(v^*)} f_{i+1}(u)
    \end{equation}
    since $f_i$ was $c$-admissible and since only $v^*$ is altered in round $i$. %We conclude $f_{i+1}$ is also $c$-admissible, since $\anc(v_1)=\anc(v_2)$.
    Second, if $v$ is a parent of $v^*$, then $v$ is a parent of at least one of $v_1$ or $v_2$ by the construction of \Cref{alg:merge}.
    But by definition of the edge set of $V''$ (\Cref{eqn:newE}), $v$ is a parent of $v_1$ if and only if $v$ is a parent of $v_2$.
    Thus, $v$ was a parent of both $v_1$ and $v_2$ in round $i$.
    The claim now follows since we set $f_{i+1}(v^*)=f_i(v_1)+f_i(v_2)$.
    \end{proof}

\subsubsection{Correctness}\label{sscn:correctness}

We now prove correctness, in the process establishing the Compression Lemma (\Cref{l:compress}).
For this, we first require the following lemma, which shows how to efficiently map any given correct query string for $G^*$ to a correct query string for $G$.
Below, \Call{ComputeOutput}{} on $G^*$ takes into account merged notes, i.e. it uses $v^*$ instead of $v$ after merging $v_1$ and $v_2$ in \Cref{alg:merge}.

\begin{lemma}\label{lem:correctness}
    Let $x^*:V^*\to \bin$ be a correct query string for $G^*$. Define \COs{} to be $\Call{ComputeOutput}{}$, except with each call to $\outputc$ on Line~\ref{alg:compute-inputs:set-output} replaced by looking up the corresponding bit of $x^*$.
    Define string $x:V\to\bin$ such that bit $x(v):=\Call{ComputeOutput*}{v\mid\epsilon}$.
    Then, $x$ is a correct query string for $G$.
\end{lemma}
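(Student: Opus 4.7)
The plan is to induct on a topological ordering of $V$, showing at each step that $x(v)=\Call{ComputeOutput*}{v\mid\epsilon}$ is a valid answer to $v$'s $C$-query in $G$ given inputs $\{x(u)\}_{u\in\pred(v)}$. The underlying idea is that $x^*$ being a correct query string for $G^*$ means $x^*(w)$ is a valid answer to $w$'s verifier in $G^*$ under the input pattern induced by $x^*$ on $w$'s $G^*$-parents; since the merged copy $v^* = v_{S,i}^{z_1^v,\dots,z_d^v} \in V^*$ reached by $\Call{ComputeOutput*}{v\mid\epsilon}$ runs the same verifier $Q_i$ as $v$ in $G$, it suffices to show that the inputs $v^*$ receives in $G^*$ (under $x^*$) agree wire for wire with the inputs $v$ would receive in $G$ (under $x$). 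Then $x(v)=x^*(v^*)$, obtained from the base case of the recursion, automatically inherits validity from $x^*$.

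The first technical step is a \emph{Consistency Claim}: the conditioning string $(z_1^v,\dots,z_d^v)$ that $\Call{ComputeOutput*}{v\mid\epsilon}$ builds along the superpath from the root to $v$'s supernode $S$ satisfies $z_{m,k}^v=x(u_{S_m,k})$ for every $m\le d$ and $k\in[s]$. This follows because $\Call{ComputeOutput*}{v\mid\epsilon}$ and $\Call{ComputeOutput*}{u_{S_m,k}\mid\epsilon}$ both walk the same prefix of the superpath, set bits in the same fixed topological order within each separator, and perform identical $x^*$ lookups throughout, so they produce identical conditioning strings up through level $m$ and the base case of the second call returns precisely $z_{m,k}^v$. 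The same deterministic-propagation argument shows that for any $u$ whose supernode is a descendant of $S$ in $T$, the call $\Call{ComputeOutput*}{u\mid z_1^v,\dots,z_d^v}$ extends the conditioning string exactly as $\Call{ComputeOutput*}{u\mid\epsilon}$ would, and hence also returns $x(u)$.

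With the Consistency Claim in hand, the inductive step is immediate. For each parent $u\in\pred(v)$ of $v$ in $G$, the corresponding input wire of $v^*$ is routed (via \Cref{rule:wires} and the edge sets $E_S$ of $V'$) to carry $\Call{ComputeOutput*}{u\mid z_1^v,\dots,z_d^v}$. There are two cases. If $u\in D_v$ then the call hits the early return at Line~\ref{alg:compute-inputs:early-return} and yields a hardcoded bit of the conditioning string, which equals $x(u)$ by the Consistency Claim. If $u\notin D_v$, the separator-tree property (any edge $(u,v)\in E$ lies inside a single branch of $T$) together with $u\in\anc(v)$ forces the supernode of $u$ to be a proper descendant of $S$ in $T$, so the extension argument above again returns $x(u)$. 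Thus the inputs feeding $v^*$'s verifier in $G^*$ exactly reproduce the inputs $v$'s verifier sees in $G$, and $x(v)=x^*(v^*)$ is a valid answer.

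The main obstacle I anticipate is the bookkeeping between the recursion and the merging of \Cref{alg:merge}: one must check that after merging, the $x^*$ lookup at $v_{S,i}^{z_1,\dots,z_d}$ depends only on the restriction of $(z_1,\dots,z_d)$ to $D_v$, so that querying with the conditioning string built by $\Call{ComputeOutput*}{v\mid\epsilon}$ actually lands on the intended merged node of $V^*$. This is essentially forced by \Cref{alg:merge}, which merges precisely the copies whose $h$-functions agree. A final check covers invalid queries: since the input string at $v^*$ coincides with the input string at $v$, the validity status of the two queries matches, and for invalid queries any $x(v)\in\bin$ is permitted by \Cref{def:correct}.
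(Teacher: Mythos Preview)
Your proposal is correct and follows essentially the same approach as the paper: induction on a topological order of $V$, the observation that $\Call{ComputeOutput*}{v\mid\epsilon}$ and $\Call{ComputeOutput*}{u\mid\epsilon}$ share a common prefix of recursion (your Consistency Claim), and the two-case split for each parent $u\in\pred(v)$ according to whether $u$'s supernode lies on or below the superpath to $v$. Your explicit remarks on the merging bookkeeping and on invalid queries are things the paper handles only via a footnote and implicitly through \Cref{def:correct}, respectively, so if anything your write-up is slightly more careful on those points.
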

\begin{proof}
    Recall $\abs{V}=n$, and that by \Cref{def:correct}, a correct query string for $\CDAG$ is defined as any string producible by Line~\ref{alg:c-dag-eval:x_i} of \Cref{alg:c-dag-eval}.
    Throughout, the bits of $x$ are ordered according to the topological order $(v_1,\ldots, v_n)$ on $V$ fixed by \Cref{alg:c-dag-eval}.
    We prove the claim inductively for $t\in (1,\ldots,n)$.

    By the topological order, the base case $v_1\in V$ has in-degree $0$, i.e. takes no inputs.
    Thus, by \Cref{rem:alwaysmerge}, there is only a single node in $G^*$ corresponding to $v_1$, which by construction computes the same query as $v_1$.
    Hence, the corresponding bit of $x^*$ trivially encodes the correct answer for $v_1$ in $G$.
    This bit will then be returned for $v_1$ once Line~\ref{alg:compute-inputs:early-return} executes, as desired.

    For the inductive case, let $t\geq 2$ and assume $x_1,\dots,x_{t-1}$ satisfy the induction hypothesis, i.e. they could\footnote{We say ``could'' because \Call{Evaluate}{} is nondeterministic (due to potential invalid queries when $C$ is a promise class).} be produced by the first $t-1$ iterations of \Call{Evaluate}{}.
    We now need to argue that a correct execution of \Call{Evaluate}{} could set $x_t = \COs{v_t\mid\epsilon}$.
    By design, $\COs{v_t\mid\epsilon}$ computes $z_1,\dots,z_d$ and then returns\footnote{Technically, the algorithm actually returns $x^*(v_t^{z_1,\dots,z_{d-1},z'_d})$, where $z_d'\in\set{0,1}^s$ is an ``intermediate string'' defined as follows: If node $v_t$ was the $k$th node in the topological order for supervertex $S$, then the first $k-1$ bits of $z_d'$ have been assigned by Line~\ref{alg:compute-inputs:set-output} of \COs{}, and the remaining $s-k+1$ bits of $z_d'$ are still set to the dummy value of $0$ from Line~\ref{alg:compute-inputs:z-init}. However, this does not affect our analysis. In particular, in $G^*$ we merged $v_t^{z_1,\dots,z'_d}$ and $v_t^{z_1,\dots,z_{d-1},z''_d}$ for any such pair $z'_d$ and $z''_d$ of ``intermediate strings'', since by definition of the topological order, bits $k$ through $s$ of any such $z'_d$ cannot correspond to any vertices in $D_{v_t}$. Thus, these indices effectively disappear for all copies of $v_t$ in $G^*$.} $x^*(v_t^{z_1,\dots,z_d})$.

    Recall now that $D_{v_t}$ denotes the ancestors of $v_t$ in $G$, which are also along the superpath from the root of the separator tree down to the supervertex $S$ containing $v_t$.
    (Formally, $D_u := \anc(u)\cap \bigcup_{j=1}^d S_j$ in \Cref{eqn:Du}.)
    We claim that $z:=z_1,\dots,z_d$ matches $x_1\dotsm x_{t-1}$ on $D_{v_t}$.
    (For clarity, the bits of $z\in\set{0,1}^{sd}$ are ordered according to the recursion of \COs, and may also contain bits corresponding to vertices not in $D_{v_t}$.)
    To see this, fix any $u\in D_{v_t} =\anc(v_t)\cap \bigcup_{j=1}^d S_j$, and let $i$ be the supervertex index such that $u\in S_i$.
    For brevity, define notation $x(u)$ and $z(u)$ to mean the bit of $x$ and $z$ corresponding to $u$, respectively\footnote{Since $x$ is defined on $G$, $x(u)$ is clearly defined uniquely. On the other hand, $z$ is defined on $G^*$, which contains potentially multiple copies of $u$; thus, it is slightly more subtle that $z(u)$ is uniquely defined. Indeed, uniqueness holds since the recursive path followed by \COs{} through $G^*$ visits \emph{precisely one} copy of $u$; \emph{which} copy is visited depends on the prefix of $z$ fixed in the recursion before $u$ is encountered.}.
    Now, recall $\COs{v_t\mid\epsilon}$ recursively traverses the path $S_1,\ldots, S_d$, where $u\in S_i$ and $v_t\in S_d$ for $1\leq i\leq d$.
    Thus, the operations performed by $\COs{v_t\mid \epsilon}$ and $\COs{u\mid \epsilon}$ are identical in the first $i$ recursions.
    Once $m=i$ (i.e. conditioning strings $z_1, \ldots, z_i$ have been set), $z(u)$ is returned by $\COs{u\mid \epsilon}$ on Line~\ref{alg:compute-inputs:early-return}, whereas $\COs{v_t\mid \epsilon}$ returns $z(v_t)$ if $i=d$ and continues recursively otherwise.

    We thus conclude $\COs{v_t\mid\epsilon}$ returns $x^*(v_t^{z_1,\dots,z_d})$ with $z(u) = x(u)$ for all $u\in D_{v_t}$.
    Recall now by \Cref{rule:wires} that, in order for $v_t^{z_1,\dots,z_d}\in V^*$ to simulate the query at $v_t\in V$, it computes the input on any wire $u$ into $v_t$ via $\COs{u\mid z_1,\dots,z_d}$.
    Since the construction is based on a separator tree, this incoming wire/edge $(u,v_t)$ lies along the same branch of the tree as $v_t$. Thus, we have only two cases to consider --- $u\in\anc(v_t)$ is above or below $v_t$ in said branch.
    (In \Cref{fig:transform1}, for example, if $v_t=w_2$, the ancestor $u_2$ is above $w_2$ in the tree, whereas ancestor $y_1$ is below $w_2$.)
    So, if $u\in D_{v_t}$ (i.e. $u$ is above $D_{v_t}$), then by the argument in the previous paragraph, $z(u)=x(u)$ is used as input to $v_t$.
    Moreover, since $u\in\anc(v_t)$, $u$ comes before $v_t$ in any topological order, and thus the induction hypothesis says $x(u)$ is correct.
    Otherwise, if $u\not\in D_{v_t}$ (i.e. ancestor $u$ is below $D_{v_t}$), then it again holds that $\COs{v_t\mid \epsilon}$ and $\COs{u\mid \epsilon}$ perform exactly the same operations in the first $d$ recursions.
    Therefore, both executions compute the same values $z_1,\dots,z_d$.
    Subsequently, $\COs{u\mid \epsilon}=x(u)$ calls $\COs{u\mid z_1,\dots,z_d}$ on Line~\ref{alg:compute-inputs:recursion} during the $d$th recursion and returns its value.
    In other words, $x(u)=\COs{u\mid z_1,\dots,z_d}$.
    But by \Cref{rule:wires}, in order to simulate its input on the incoming wire corresponding to $u$, $v_t^{z_1,\dots,z_d}$ uses $\COs{u\mid z_1,\dots,z_d}=x(u)$.
    Then, since $u\in \anc(v_t)$, $u$ comes before $v_t$ in any topological order, and thus by the induction hypothesis, $x(u)$ is correct.
    We hence conclude all input wires to $v_t^{z_1,\dots,z_d}$ must be set correctly, and thus $x(v_t)$ is also correct.
\end{proof}

We finally restate and prove the main lemma of this section.

\lemmaCompress*
\begin{proof}[Proof of \Cref{l:compress}]
    $G^*$ is constructed as in \Cref{sscn:basic-construction} and \Cref{sscn:merging-nodes}.
    We have $\abs{V^*} \le \abs{V''} \le 2^{O(sD)}n$ (recall $n=\abs{V}$), since there are at $2^{O(sD)}$ choices for conditioning strings $z_1\cdots z_D$.
    Since we assume the separator tree is given as input, the time to construct $G^*$ is clearly polynomial in $\abs{V''}$, i.e. $2^{O(sD+\log n)}$.
    For weighting function $\omega$ from \Cref{lem:weighting-function}, we have $W_{f^*}(G^*)=W_\omega(G'')  \le (c+1)^{O(sD)}n$, where the equality is from \Cref{lem:merge-admissible}, and the inequality since every node in $G''$ has at most $O(sD)$ descendants.
    Correctness follows from \Cref{lem:correctness} and the fact that the output of $G^*$, by definition of node $t$, is $\Call{ComputeOutput}{v\mid \epsilon}$.
    Finally, each verification circuit corresponding to a node in $V^*$ has size $\poly(V^*)$; the largest such verification circuit corresponds to node $t$, which takes in wires from \emph{all} other vertices in $G^*$, and calls $\Call{ComputeOutput}{v\mid \epsilon}$ (which takes time $\poly(\abs{V^*})$).
\end{proof}

\subsection{Solving \texorpdfstring{\CDAG}{C-DAG} via oracle queries}\label{sscn:solve-c-dag}

%We show in \Cref{sscn:solve-c-dag} how to find a correct query string (see \Cref{def:c-dag}) of a \CDAG-instance $G$ with $W_f(G) \le \poly(n)$ for a $c$-admissible weighting function $f$ with $O(\log n)$ queries to an oracle for $C$.
%Therefore a $\p^{C[\log]}$ machine can solve \CDAG-instances of constant depth (see \Cref{thm:constant-depth}) or logarithmic out-degree.
%Note that this includes graph families for which $\s(G) = O(1)$ does not hold.

We now show how to decide an $\nn$-node \CDAG-instance $G$ with a $c$-admissible weighting function $f$ using $O(\log W_f(G))$ oracle queries.
(We intentionally use $\nn$ to denote the size of $G$, to avoid confusion with the parameter $n$ from \Cref{sscn:graph-transformation}.)
Recall that at a high level, our aim is to convert the problem of deciding $G$ into the problem of maximizing a carefully chosen real-valued function $t$.
A binary search via the oracle $C$ is then conducted to compute the optimal value to $t$, from which a correct query string from $G$ can be extracted.
This high-level strategy was also used by Gottlob~\cite{Gottlob1995}; what is different here is how we define $t$ and how we implement the details of the binary search.

%Our results hold for general query graphs, but we also argue that they apply to graphs constructed with \Cref{l:compress} where consistency needs to be taken into account. \snote{still need this with merging in place?}

\subsubsection{Step 1: Defining the total solution weight function $t$}\label{sscn:totalsolnweight}

Let $G = (V=\{v_1,\dots, v_\nn\}, E)$ be a \CDAG-instance with $c$-admissible weighting function $f$.
Recall by \Cref{def:c-dag} that each circuit $Q_i$ has a proof register $\Y_i$.
Without loss of generality, we assume $Q_i$ receives a proof $\ket{\psi_i}\in\B^{\otimes m}$ and has completeness $\alpha$ and soundness $\beta$.
Then, define $t:\bin^\nn\times\left(\B^{\otimes m}\right)^{\times \nn}\mapsto \R$ such that
\begin{align}\label{eqn:t}
  t(x,\psi_1,\dots,\psi_\nn) := \sum_{i=1}^\nn f(v_i) \bigl(\underbrace{\strut x_i\Pr[Q_i(z_i(x),\psi_i)=1] + (1-x_i)\gamma}_{\textstyle g(x_i, z_i(x), \psi_i)}\bigr),
\end{align}
where $\gamma := (\alpha+\beta)/2$, and where $z_i(x)$ is defined similar to Line~\ref{alg:c-dag-eval:z_i} of \Cref{alg:c-dag-eval}, i.e. $z_i(x) \gets \bigcirc_{v_j\in \pred(v_i)}\, x_j$, for $x$ the input string to $t$.
Two comments are important here:
First, defining $z(x)$ in this manner may break the logic of \Cref{alg:c-dag-eval} when a prover is dishonest, in that the relationship between $x_i$ and $z_i$ of Line~\ref{alg:c-dag-eval:x_i} may not hold.
Nevertheless, in \Cref{sec:correct-query-string}, we prove that $t$ is {maximized} only when a prover acts \emph{honestly}.
Second, we intentionally define $t$ as taking in a \emph{cross product} over spaces $\B^{\otimes m}$, as opposed to a {tensor product}.
This simplifies the proofs of this section.
Finally, define
\begin{equation}
T := \max_{\substack{x\in\bin^\nn\\\ket{\psi_1},\dots,\ket{\psi_\nn}\in\B^{\otimes m}}} t(x,\psi_1,\dots,\psi_\nn).\label{eqn:T}
\end{equation}
In \Cref{sec:approximating-t} we show how to approximate $T$ using $O(\log W_f(G))$ $C$-queries and in \Cref{sec:correct-query-string} we prove that if $t(x,\psi_1,\dots,\psi_\nn) \approx T$, then $x$ is a correct query string.

\subsubsection{Step 2: Approximating \texorpdfstring{$T$}{T}}\label{sec:approximating-t}

In order to apply binary search to approximate $T$ (see proof of \Cref{thm:separator-tree-time}), we now show that the \emph{decision} version of approximating $T$ is in $C$. Namely, define promise problem $\Pi_\epsilon=(\Piy,\Pin)$ such that
\begin{eqnarray}
    \Piy &=& \{(t,s)\mid t:\bin^\nn\times\left(\B^{\otimes m}\right)^{\times \nn}\mapsto \R \text{ and }T \ge s\}\\
    \Pin &=& \{(t,s)\mid t:\bin^\nn\times\left(\B^{\otimes m}\right)^{\times \nn}\mapsto \R \text{ and }T \le s-\epsilon\},
\end{eqnarray}
for $T$ as in \Cref{eqn:T}, and $\epsilon:\ZZ\mapsto\R^{\geq 0}$ a fixed function of $N$ (i.e. by $\epsilon$ we mean $\epsilon(\nn)$).

\begin{lemma}\label{lem:approximating-t}
  Let $C\in\QVp$. Define $W := \sum_{i=1}^\nn f(v_i)$ for weighting function $f$ from \Cref{eqn:t}, and assume $W\leq \poly(\nn)$.
  Then, for any $\epsilon\ge 1/\poly(\nn)$, $\Pi_\epsilon\in C$.
\end{lemma}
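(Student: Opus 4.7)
The plan is to exhibit a single $C$-verifier $V$ whose acceptance probability on proof $(x,\psi_1,\dots,\psi_N)$ equals $t(x,\psi_1,\dots,\psi_N)/W$, so that its optimal acceptance probability over proofs is $T/W$. Deciding whether $T\ge s$ or $T\le s-\epsilon$ then reduces to distinguishing optimal acceptance $\ge s/W$ from $\le (s-\epsilon)/W$. Since $W\le\poly(N)$ and $\epsilon\ge 1/\poly(N)$, the resulting promise gap $\epsilon/W$ is $1/\poly(N)$, which is exactly what the definition of each $C\in\QVp$ permits (including $\StoqMA$, which already allows $1/\poly$ gaps in its definition).

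Concretely, $V$ takes as its proof the concatenation $(x,\psi_1,\dots,\psi_N)$, with $x\in\bin^N$ classical and each $\psi_i\in\B^{\otimes m}$ a proof for the node verifier $Q_i$. Exploiting the cross-product (rather than tensor-product) domain of $t$ here is exactly what lets us sidestep entanglement restrictions across the $\psi_i$, since they live in disjoint registers. The verifier then samples an index $i\in[N]$ with probability $p_i:=f(v_i)/W$, classically computes $z_i(x)$ from the parent bits of $v_i$, and branches on $x_i$: if $x_i=0$ it accepts with probability $\gamma=(\alpha+\beta)/2$, and if $x_i=1$ it runs $Q_i$ on $(z_i(x),\psi_i)$ and returns its verdict. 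Linearity of expectation then immediately yields $\Pr[V\text{ accepts}]=t(x,\psi_1,\dots,\psi_N)/W$.

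The remaining work is to spell out this sampling-and-dispatch step inside each model in $\QVp$. For $C\in\{\NP,\NEXP\}$ the verifier is deterministic, so instead of sampling, $V$ computes $t$ exactly from the classical proofs and deterministic $Q_i$'s (in polynomial or exponential time respectively) and thresholds against $s-\epsilon/2$. For $C\in\{\MA,\QCMA,\QMA,\QMAexp\}$, sampling uses classical coins or a quantum ancilla prepared in $\sum_i\sqrt{p_i}\ket{i}$, and each $Q_i$ is simulated coherently; for $\QMA(2)$ we invoke $\QMA(2)=\QMA(\poly)$ of Harrow and Montanaro~\cite{Harrow2013} to accommodate the $O(N)$ unentangled proofs at once. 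For $C=\StoqMA$, sampling is realized using $\ket{+}$ ancillae together with bucketing over uniform random bits (plus a rejection step for non-dyadic $p_i$) within the X/CNOT/Toffoli gate set; the final $\ket{+}/\ket{-}$ measurement of $V$ is inherited from the selected $Q_i$, while the $x_i=0$ branch is implemented by preparing a $\gamma$-biased ancilla that is swapped into the measurement qubit.

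The main obstacle is the $\StoqMA$ case, since gap amplification is unavailable there: the acceptance probability of $V$ must equal $t/W$ essentially on the nose (up to the $\poly(N)$ slack already absorbed into $\epsilon/W$), which constrains how the sampling and coin-flip subroutines are implemented within the restricted stoquastic gate set. Everything else is routine once the verifier is written down in each case, amounting to a one-line linearity-of-expectation computation plus a sanity check on the resulting promise gap.
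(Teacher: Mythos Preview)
Your proposal is correct and follows essentially the same approach as the paper: construct a verifier that samples $i$ with probability $f(v_i)/W$, branches on $x_i$, and either runs $Q_i$ or accepts with probability $\gamma$, yielding acceptance probability $t/W$; the per-class details (direct computation for $\NP/\NEXP$, Harrow--Montanaro for $\QMA(2)$, stoquastic implementation of sampling and the $\gamma$-biased output) match the paper's treatment. The one place the paper is slightly more careful is in arguing that entangled proofs across the $\Y_i$ registers do not help---this follows because $V$ only touches a single randomly chosen $\Y_i$, so the acceptance probability depends only on the reduced states $\sigma_i$, which are optimized by pure (hence product) states; your cross-product remark gestures at this but does not quite constitute the argument.
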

\begin{proof}
  In the case $C\in \{\NP,\NEXP\}$, $\Pi_\epsilon$ can easily be solved in $C$ by just computing $t(x,\psi_1,\dots,\psi_N)$ directly (note that $t:\bin^\nn\times(\set{0,1}^{\otimes m})^{\times \nn}\mapsto \R$ in this case).

  For the remaining $C$, we begin by defining probabilities $p_i := f(v_i)/W$ and let
  \begin{equation}
    t'(x,\psi_1,\dots,\psi_\nn) := \frac{1}{W}t(x,\psi_1,\dots,\psi_\nn) = \sum_{i=1}^\nn p_i \cdot g(x_i, z_i, \psi_i),\label{eqn:t'}
  \end{equation}
  whose maximum over all inputs we denote as $T'$.
  We prove the claim by constructing a $C$-verifier $V$ such that
  \begin{equation}\label{eqn:T'}
    \max_{\text{proofs } \ket\psi} \Pr[V \text{ outputs }1\mid\ket\psi] = T'.
  \end{equation}
  Thus, when $(t,s)\in\Piy$ (resp., $(t,s)\in\Pin$), $V$ accepts with probability at least $s/W$ (resp., at most $(s-\epsilon)/W$), where $\epsilon/W\geq 1/\poly(\nn)$ since $W\leq \poly(\nn)$ by assumption.\footnote{For $C=\QMA(2)$, we implicitly use the nontrivial error reduction of $\QMA(2)$ due to Harrow and Montanaro~\cite{Harrow2013}.}

  $V$ has proof space $\X\otimes \Y_1\otimes\dotsm\otimes \Y_\nn$ with $\X=\B^{\otimes \nn}$ and $\Y_i=\B^{\otimes m}$.
  A subtle point here is that function $t$ takes as part of its input a sequence $(\ket{\psi_1},\ldots,\ket{\psi_\nn})$, whereas in \Cref{eqn:T'}, $V$ takes in a joint (potentially entangled) proof $\ket{\psi}$ across proof registers $\Y_1\otimes\cdots\otimes \Y_\nn$.
  However, due to the construction of $V$ below, we shall see that without loss of generality, \Cref{eqn:T'} is attained for tensor product states $\ket{\psi}=\ket{\psi_1}\otimes\cdots\otimes\ket{\psi_\nn}$, which is equivalent to sequence $(\ket{\psi_1},\ldots,\ket{\psi_\nn})$, as desired.

  Given proof $\ket{\psi}\in \X\otimes \Y_1\otimes\dotsm\otimes \Y_\nn$, $V$ acts as follows:
  \begin{algorithmic}[1]
    \State Measure $\X$ in standard basis to obtain string $x$.
    \State Select random $i$ according to distribution $p_i$.%
    \footnote{This requires being able to sample from the distribution $p_i$, which in general cannot be done efficiently.
    However, we can approximate $p_i \approx k_i/2^{\poly(\nn)}$, allowing efficient sampling without changing the distribution significantly.}
    %\snote{there is an implicit assumption here that this sampling can be done efficiently. It should be true so long as the $p_i$ are sufficiently nice, say rational numbers of form $k/2^n$, but likely isn't true for arbitrary $p_i$. We might want to add a quick footnote to clarify that our choices of $p_i$ are sufficiently nice.}
    \If{$x_i=1$}
      \State Run $Q_i$ with input $z_i(x)$ and proof register $\Y_i$.
    \Else
      \State Output $1$ with probability $\gamma$.
    \EndIf
  \end{algorithmic}
  Since (the POVM corresponding to) $V$ is block diagonal with respect to $\X$, $\Pr[V \text{ outputs }1\mid\ket\psi]$ is maximized by some $\ket\psi = \ket{x}_\X\ket{\psi'}_{\Y_{1,\ldots, \nn}}$.
  Then, since we only measure a single local verifier $Q_i$ (at random, Step 4), we have
  \begin{equation}
    \Pr[V \text{ outputs }1\mid\ket\psi] = t'(x,\sigma_1,\dots,\sigma_\nn)\quad\text{where}\quad  \sigma_i :=  \Tr_{\bigotimes_{j\ne i}\Y_j}(\ketbra{\psi'}{\psi'}).\label{eqn:sigma}
  \end{equation}
  But for any fixed $x$, this is maximized by choosing pure states $\sigma_i = \ketbra{\psi_i}{\psi_i}$. Thus, $t'$ is optimized by a tensor product $\ket{\psi}=\ket{\psi_1}\otimes\cdots\otimes\ket{\psi_\nn}$, and so \Cref{eqn:T'} holds. This completes the proof.

  Two final remarks are needed for specific choices of $C$: (1) For $C=QMA(2)$, separable $\sigma_i$ are obtained by interpreting the first proof register as $\X\otimes \Y^1_1\otimes\dotsm\otimes\Y^1_N$ and the second as $\Y^2_1\otimes \cdots\otimes\Y^2_N$ (so that the joint proof is unentangled across this cut by assumption), where proof $\Y_i$ has the registers $Y_i^1$ and $Y_i^2$. (2) For $C=\StoqMA$, $V$ is indeed a stoquastic\footnote{Briefly, a stoquastic verifier~\cite{Bravyi_Bessen_Terhal_2006} takes in a poly-size quantum proof and poly many ancillae set to $\ket{0}$ and $\ket{+}$ states, runs poly many classical gates (i.e. Pauli $X$, CNOT, and Toffoli gates), and finally applies a single Hadamard gate to its output qubit just before measuring it in the standard basis.} verifier since:
   \begin{itemize}
    \item Via its $\ket{+}$ ancillae states and ability to simulate measurement in the standard basis via the principle of deferred measurement, a stoquastic verifier can execute Steps 1 and 2 in the description of $V$.
    \item Each $Q_i$ is by definition a stoquastic verifier (see \Cref{rem:stoqma}), and Step 4 of the $V$ simply returns the output of some $Q_i$ \emph{without} postprocessing, i.e. the output qubit of $Q_i$ is simply swapped into the output qubit of $V$.
    \item By definition of StoqMA, $1/2\leq \gamma\leq 1$ with $\gamma$ requiring (without loss of generality) at most logarithmic bits of precision to specify. Thus, Step 6 of $V$ can be simulated by a stoquastic verifier which, with appropriate conditioning, swaps into its output qubit either an ancillae qubit set to $\ket{0}$ (accepted with probability $1/2$, due to the final $H$ gate on the measurement qubit of a stoquastic verifier) or an ancillae qubit set to $\ket{+}$ (accepted with probability $1$).
   \end{itemize}
\end{proof}

Note that \Cref{lem:approximating-t} says nothing about \emph{why} we want to approximate $T$, i.e. what the optimal argument $x$ buys us. This is the purpose of \Cref{sec:correct-query-string}.

\subsubsection{Step 3: Correct Query String}\label{sec:correct-query-string}

We next show that only \emph{correct} query strings $x$ can attain $T$ (even \emph{approximately}).

\begin{lemma}\label{lem:correct-query-string}
  Define $\eta := (\alpha-\beta)/2$, and let $f$ be $\eta^{-1}$-admissible.
  If $t(x,\psi_1,\dots,\psi_\nn) > T-\eta$, then $x$ is a correct query string.
\end{lemma}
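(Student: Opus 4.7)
I will argue the contrapositive: if $x$ is not a correct query string, then there exist $(x^*,\psi^*_1,\ldots,\psi^*_\nn)$ with $t(x^*,\psi^*_1,\ldots,\psi^*_\nn)\ge t(x,\psi_1,\ldots,\psi_\nn)+\eta$, which contradicts $t(x,\psi_1,\ldots,\psi_\nn)>T-\eta$ by the definition of $T$ in \eqref{eqn:T}. The whole game is therefore to exhibit a single local modification of $(x,\psi)$ that gains at least $\eta$ in the value of $t$, and to control the ``collateral damage'' caused by that modification using $\eta^{-1}$-admissibility.

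Since $x$ fails to satisfy Line~\ref{alg:c-dag-eval:x_i} of \Cref{alg:c-dag-eval} at some index, pick any $i$ for which either (a) $z_i(x)\in\Piy^i$ and $x_i=0$, or (b) $z_i(x)\in\Pin^i$ and $x_i=1$. I would then define $x^*$ by flipping only the $i$th bit, $x^*_i:=1-x_i$ and $x^*_j:=x_j$ for $j\ne i$; and set $\psi^*_j:=\psi_j$ for $j\ne i$, while in case~(a) choosing $\psi^*_i$ to be an optimal $C$-proof for input $z_i(x^*)=z_i(x)$, achieving acceptance probability at least $\alpha$ (in case~(b) the choice of $\psi^*_i$ is irrelevant since $x^*_i=0$). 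At node $i$ itself, the summand in \eqref{eqn:t} changes from $f(v_i)\gamma$ (case~a) or from at most $f(v_i)\beta$ (case~b, using $z_i(x)\in\Pin^i$) to at least $f(v_i)\alpha$ (case~a) or to $f(v_i)\gamma$ (case~b). In both cases the gain at node $i$ is at least $f(v_i)(\alpha-\gamma)=f(v_i)(\gamma-\beta)=f(v_i)\eta$, using $\gamma=(\alpha+\beta)/2$ and $\eta=(\alpha-\beta)/2$.

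The crux is to bound the loss elsewhere. Because only $x_i$ was flipped, for $j\ne i$ one has $z_j(x^*)=z_j(x)$ whenever $v_i\notin\pred(v_j)$; hence the summand in \eqref{eqn:t} can only differ from its original value at children $v_j\in\Gamma(v_i)$. At each such child the term is of the form $f(v_j)(x_j\Pr[Q_j(\cdot,\psi_j)=1]+(1-x_j)\gamma)\in[0,f(v_j)]$, so the absolute change is at most $f(v_j)$. Summing, I obtain
\[
t(x^*,\psi^*_1,\ldots,\psi^*_\nn)-t(x,\psi_1,\ldots,\psi_\nn)\;\ge\;f(v_i)\eta-\sum_{v_j\in\Gamma(v_i)}f(v_j).
\]
Applying $\eta^{-1}$-admissibility (\Cref{def:weighting-function}) to $v_i$ gives $f(v_i)\ge 1+\eta^{-1}\sum_{v_j\in\Gamma(v_i)}f(v_j)$, which rearranges exactly to $f(v_i)\eta-\sum_{v_j\in\Gamma(v_i)}f(v_j)\ge \eta$. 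This yields the desired increment of at least $\eta$ in $t$, closing the contradiction.

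The main (and really only) obstacle is to make sure the single-flip modification is enough, i.e., that we do not need to propagate changes to grandchildren and beyond; this is precisely the point where admissibility being stated over the out-neighborhood $\Gamma(v)$ (rather than all of $\desc(v)$) pays off, since only direct children see altered inputs when one bit of $x$ is flipped. A minor but important point to flag is that the cross-product structure of the domain of $t$ (emphasized at the end of \Cref{sscn:totalsolnweight}) is what lets us swap $\psi_i$ for $\psi^*_i$ without disturbing any other $\psi_j$, sidestepping entanglement-across-proofs issues that would arise with a tensor-product formulation.
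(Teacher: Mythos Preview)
Your proposal is correct and follows essentially the same approach as the paper's own proof: both argue by contradiction via a single-bit flip at an incorrect index $i$, bound the gain at $v_i$ by $f(v_i)\eta$ via the same case analysis on $x_i\in\{0,1\}$, bound the collateral loss by $\sum_{v_j\in\Gamma(v_i)}f(v_j)$ using that only direct children see a changed $z_j$, and close with $\eta^{-1}$-admissibility. Your remarks about why grandchildren are unaffected and about the cross-product domain of $t$ match the paper's observations.
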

\begin{proof}
  Assume there exists a $v_i\in V$ such that $x_i$ is incorrect.
  We show that there exist $x',\ket{\psi'_1},\dots,\ket{\psi'_\nn}\in\B^{\otimes \nn}$ such that $t(x',\psi_1',\dots,\psi_\nn') \ge t(x,\psi_1,\dots,\psi_\nn)+\eta$, obtaining a contradiction.
  A subtle but useful fact we exploit is that $t$ takes in a \emph{sequence} $(\psi_1,\ldots, \psi_\nn)$; this allows us to locally update each $\psi_i$ to some $\psi_i'$ as follows. Define $x'$,$\ket{\psi'_1}$,$\dots$,$\ket{\psi'_\nn}$ such that $x'_i = \overline{x_i}$ (i.e. the complement of $x_i$), $x'_j = x_j$ and $\ket{\psi'_j}=\ket{\psi_j}$ for $j\neq i$, and $\ket{\psi'_i}$ maximizing $\Pr[Q_i(z_i(x'),\psi'_i)=1]$.

  Now, if $x_i=0$, then
  \begin{equation}
    g(x_i, z_i(x), \psi_i) = x_i\Pr[Q_i(z_i(x),\psi_i)=1] + (1-x_i)\gamma = \gamma.
  \end{equation}
  Since we assumed $x_i$ was incorrect, $z_i(x)\in\Piy^i$. Thus, there exists a $\ket{\psi_i'}$ such that
  \begin{equation}
    g(x'_i, z_i(x'), \psi'_i) = x'_i\Pr[Q_i(z_i(x'),\psi'_i)=1] + (1-x'_i)\gamma = \Pr[Q_i(z_i(x'),\psi'_i)=1] \ge \alpha.
  \end{equation}
  Conversely, if $x_i=1$, we have $z_i(x)\in\Pin^i$, and thus
  \begin{equation}
    g(x_i, z_i(x), \psi_i) = x_i\Pr[Q_i(z_i(x),\psi_i)=1] + (1-x_i)\gamma \le \beta,
  \end{equation}
  whereas for any $\ket{\psi_i'}$,
  \begin{equation}
    g(x'_i, z_i(x'), \psi'_i) = x_i'\Pr[Q_i(z_i(x'),\psi_i)=1] + (1-x'_i)\gamma = \gamma.
  \end{equation}
Thus, flipping $x_i$ to $\overline {x_i}$ increases the $i$th term in the sum comprising $t$ by at least $\eta\cdot f(v_i)$ (recall $\gamma=(\alpha+\beta)/2$).
  Therefore,
  \begin{align}
    &t(x',\psi'_1,\dots,\psi'_\nn) - t(x,\psi_1,\dots,\psi_\nn) \\
    &\quad = \sum_{j=1}^\nn f(v_j) g(x'_j, z_j(x'), \psi'_j) - \sum_{j=1}^\nn f(v_j) g(x_j, z_j(x), \psi_j)\\
    &\quad = f(v_i)\bigl(\underbrace{g(x'_i, z_i(x), \psi'_i) - g(x_i, z_i(x), \psi_i)}_{\textstyle \ge \eta}\bigr) + \sum_{(v_i,v_j)\in E} f(v_j)\bigl(\underbrace{g(x'_j,z_j(x'),\psi_j) - g(x_j,z_j(x),\psi_j)}_{\textstyle \ge -1}\bigr)\\
    &\quad \ge \eta \cdot f(v_i) - \sum_{(v_i,v_j)\in E} f(v_j)\\
    &\quad = \eta\biggl( f(v_i) - \eta^{-1}\sum_{(v_i,v_j)\in E} f(v_j)\biggr)\\
    &\quad \ge \eta, %\tag{$f$ is $\eta^{-1}$-admissible}
  \end{align}
  where the second statement holds since $g(\cdot)\in[0,1]$ and since flipping $x_i$ to $x_i'$ only affects the immediate children of $v_i$ (since each $z_j$ function only depends on the direct inputs to node $v_j$), and the last statement since $f$ is $\eta^{-1}$-admissible.
\end{proof}

\subsubsection{Step 4: Completing the Proof}

We now combine everything to show the main technical result of \Cref{scn:BSN}, \Cref{thm:separator-tree-time}.
\begin{proof}[Proof of \Cref{thm:separator-tree-time}]
    First, apply the Compression Lemma (\Cref{l:compress}) to transform $G$ into an equivalent $G^*$ with $\abs{V^*}\le 2^{O(sD)}n$ and $W_{f^*}(G^*)\le (c+1)^{O(sD)}n$.
    This takes $2^{O(sD+\log n)}$ time.
    Second, define the total solution weight function $t$ as in \Cref{eqn:t}, whose maximum value we denoted $T$ (\Cref{eqn:T}). By \Cref{lem:correct-query-string}, we know that any query string $x$ satisfying $t(x,\psi_1,\ldots, \psi_n)>T-\eta$ (for $\eta=(\alpha-\beta)/2$, $\alpha$ and $\beta$ the completeness/soundness parameters for each $C$-verifier $Q_i$ in the $C$-DAG, and for $f=f^*$ a $\eta^{-1}$-admissible weighting function) is a correct query string. So, assume without loss of generality (since $C\in\{\NP,\MA,\QCMA,\QMA,\QMA(2)\}$, where for $\QMA(2)$ we use~\cite{Harrow2013}) that $\alpha=2/3$ and $\beta=1/3$, so that $\eta^{-1}=6$.
    By \Cref{lem:merge-admissible}, $f^*$ is $c$-admissible for any $c\geq 2$, and hence $\eta^{-1}$-admissible.
    Third, use \Cref{lem:approximating-t} in conjunction with binary search to approximate $T$ for $G^*$. Here, we must be slightly careful. Set $N=\abs{V^*}\le 2^{O(sD)}n$. Since the precision parameter $\eta\in \Theta(1)$, it suffices to use $\log (W_{f^*}(G^*)) \in O(\log\abs{V^*}) \in O(sD + \log n)$ $C$-queries to resolve $T$ within additive error $\eta$.
    Let $\widetilde{T}$ denote this estimate of $T$.
    Fourth, make a final $C$-query via \Cref{lem:approximating-t} to decide whether there exists a correct query string $x$ and proofs $\ket{\psi_1},\ldots, \ket{\psi_N}$, such that $t(x,\psi_1,\ldots, \psi_N)\geq \widetilde{T}$ and for which $x_N=1$, and return its answer. (Recall that $x_N$, by definition, encodes the output of the $C$-DAG.)
\end{proof}

\subsection{The case of $\StoqMA$}\label{sscn:stoqma}

We are only able to show a weaker version of \Cref{thm:separator-tree-time} for $\StoqMA$, due to the fact that error reduction for $\StoqMA$ is not known (i.e. one cannot assume completeness/soundness $2/3$ and $1/3$).
Specifically, \Cref{l:compress,lem:approximating-t,lem:correct-query-string} still hold for $C=\StoqMA$.
However, as amplification of $\StoqMA$'s promise gap is not known (see, e.g.~\cite{Aharonov2020}), we cannot assume $\eta = \Omega(1)$ in the proof of \Cref{thm:separator-tree-time}.
If we instead use $\eta = 1/\poly(n)$ (note the use of $n$ versus $N$ here; see \Cref{rem:promisegaps}), \Cref{lem:approximating-t,lem:correct-query-string} require a $\poly(n)$-admissible weighting function.
However, for any $c$-admissible weighting function $f$, $W_f(G) \ge c^{\depth(G)}$, which is superpolynomial when $\depth(G)=\omega(1)$ and $c=\poly(n)$.
Thus, for StoqMA we can only prove the following weaker analogue of \Cref{thm:bsn-s}:

\theoremBsnSStoqma*
\begin{proof}
    This follows analogously to \Cref{thm:bsn-s}, but with $c\in\poly(n)$ (see \Cref{rem:promisegaps}) instead of $c\in O(1)$, which incurs an additional $\log$ factor in the exponent.
\end{proof}

Akin to \Cref{cor:qp}, it follows that:

\begin{corollary}\label{cor:qp-stoqma}
    For $C=\StoqMA$, $\CDAG_{\log^k}\in \QP^{C[\log^{k+2}(n)]}$ for all constants $k\in\N$.
\end{corollary}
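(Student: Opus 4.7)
The plan is to derive \Cref{cor:qp-stoqma} as a direct plug-in of \Cref{thm:bsn-s-stoqma} with $s(n):=\log^k(n)$, mirroring exactly how \Cref{cor:qp} was obtained from \Cref{thm:bsn-s}. The only difference is that for $\StoqMA$ there is an extra $\log n$ factor in the exponent, which is what forces us to go from $\log^{k+1}(n)$ queries (in \Cref{cor:qp}) up to $\log^{k+2}(n)$ queries here.

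First I would substitute $s(n)=\log^k(n)$ into the bound of \Cref{thm:bsn-s-stoqma}. On the oracle-query side, the number of $\StoqMA$ queries becomes $s(n)\log^2(n)=\log^{k+2}(n)$, which matches the query bound claimed in the statement. On the deterministic-time side, the base machine runs in time
\begin{equation}
2^{O(s(n)\log^2 n)}=2^{O(\log^{k+2} n)}=n^{O(\log^{k+1} n)},
\end{equation}
using the identity $2^{\log^{k+2} n}=2^{\log n \cdot \log^{k+1} n}=n^{\log^{k+1} n}$.

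Next I would invoke the definition of quasi-polynomial time (\Cref{def:qp}): $\QP=\bigcup_{c}\DTIME(n^{\log^c n})$. Since $k+1$ is a constant, $n^{O(\log^{k+1} n)}\in\QP$, so the outer machine runs within quasi-polynomial time. Combining this with the $\log^{k+2}(n)$-query bound on the $\StoqMA$ oracle yields membership in $\QP^{C[\log^{k+2}(n)]}$, as claimed. Efficient computability of the function $s(n)=\log^k(n)$ (required to apply \Cref{thm:bsn-s-stoqma}) is immediate for constant $k$.

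There is essentially no obstacle here; the corollary is a one-line instantiation of \Cref{thm:bsn-s-stoqma} together with the arithmetic identity $2^{\log^{k+2} n}=n^{\log^{k+1} n}$ and the definition of $\QP$. The only point worth stating explicitly for the reader is the exponent arithmetic, which cleanly explains why we pick up an extra $\log n$ relative to \Cref{cor:qp}.
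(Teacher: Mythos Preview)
Your proposal is correct and takes essentially the same approach as the paper, which simply states that the corollary follows ``akin to \Cref{cor:qp}'' without spelling out the arithmetic. Your substitution of $s(n)=\log^k n$ into \Cref{thm:bsn-s-stoqma} and the exponent computation $2^{O(\log^{k+2} n)}=n^{O(\log^{k+1} n)}\in\QP$ are exactly what is needed.
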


\subsection{Query Graphs of Bounded Depth}\label{sscn:bounded}

One can ask whether there are other kinds of graphs for which we can apply the techniques developed in this section.
Using the weighting function $\rho$ from \Cref{lem:weighting-function}, we obtain the following results for query graphs of bounded depth.

\begin{definition}[$\CDAGd$]\label{def:bsn_depth}
    Let $d:\N\to\N$ be an efficiently computable function. Then, $\CDAGd$ is defined as $\CDAG$, except that $G$ has depth scaling as $O(d(n))$, for $n$ the number of nodes used to specify the $\CDAG$ instance.
\end{definition}
\noindent We caution that this notation is very close to that of $\CDAGs$ --- the $d$ in $\CDAGd$ distinguishes that here we are considering bounded depth (as opposed to bounded separator number with $\CDAGs$). The union of $\CDAGd$ over all polynomials $d:\N\mapsto\N$ equals $\CDAG$.

The next theorem was shown by Gottlob \cite{Gottlob1995} for $C=\NP$ and $d(n)=\log^in, i\in\N$.
We strengthen it for NP and simultaneously extend it to the quantum setting.

\theoremConstantDepth*
\begin{proof}
    Follows analogously to \Cref{thm:bsn-s}, except we do not need to apply the graph transformation.
    It suffices to use \Cref{lem:approximating-t,lem:correct-query-string} directly with the weighting function $\rho$ from \Cref{lem:weighting-function}.
    For $C\in\{\NP, \NEXP, \QMAexp\}$, we do not need to increase the runtime of the base class to the total weight $W_\rho(G)$ because the queries for \Cref{lem:approximating-t} can be performed exactly or with exponential precision in the case of $\QMAexp$.
\end{proof}

\begin{corollary}\label{cor:constant-depth}
    For $C\in\QVp$ and $d\in O(1)$, $\CDAGd$ is $\p^{C[\log]}$-complete.
\end{corollary}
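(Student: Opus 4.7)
The proof is essentially a short combination of \Cref{thm:constant-depth} with a standard fact about parallel queries, mirroring the argument used for \Cref{thm:bsn}. I will split it into the containment and the hardness direction.

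For the containment $\CDAGd \subseteq \p^{C[\log]}$, the plan is to invoke \Cref{thm:constant-depth} directly with $d \in O(1)$. For $C\in\{\NP,\NEXP,\QMAexp\}$ this immediately gives $\CDAGd \subseteq \p^{C[O(1)\cdot \log n]} = \p^{C[\log]}$. For the remaining $C\in\QVp$ (in particular $C\in\{\MA,\QCMA,\QMA,\QMA(2),\StoqMA\}$), \Cref{thm:constant-depth} yields
\[
    \CDAGd \subseteq \DTIME\left(2^{O(d(n)\log n)}\right)^{C[d(n)\log n]} = \DTIME(2^{O(\log n)})^{C[\log n]} = \p^{C[\log]},
\]
since $d(n)\in O(1)$ makes the deterministic base class run in polynomial time.

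For $\p^{C[\log]}$-hardness, the plan is to follow exactly the argument used in the proof of \Cref{thm:bsn}: use the standard containment $\p^{C[\log]} \subseteq \p^{\Vert C}$ from~\cite{B91} (which generalizes from NP to any verification class in $\QVp$ by the same precomputation argument), and then embed an arbitrary $\p^{\Vert C}$ computation into a depth-$1$ \CDAG{} instance. Concretely, for an input $x$ to a $\p^{\Vert C}$ machine $M$ making polynomially many parallel queries, build a query graph $G$ whose in-degree-$0$ nodes encode the (independent) $C$-queries of $M$, together with a single output node $v$ whose verification circuit computes the final decision of $M$ based on the answers to all parallel queries. This $G$ has depth exactly $1$, hence lies in $\CDAGd$ for $d\in O(1)$, and by construction $G\in\CDAGd_\yes$ iff $M$ accepts $x$.

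The hardness step is essentially routine (no new ideas beyond the analogue for bounded separator number), so I do not anticipate any obstacle; the only mild point to check is that the output-node verifier in the reduction is itself a valid $C$-verifier, which for each $C\in\QVp$ is immediate since the output node performs only classical postprocessing of bits (cf. \Cref{rem:stoqma} for the $\StoqMA$ case). Combining both directions completes the proof.
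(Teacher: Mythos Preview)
Your proposal is correct and follows essentially the same approach as the paper: containment via \Cref{thm:constant-depth} with $d\in O(1)$, and hardness via $\p^{C[\log]}\subseteq \p^{\Vert C}$ together with the observation that parallel queries yield a depth-$1$ \CDAG. You simply spell out a few more details (the two cases of \Cref{thm:constant-depth} and the shape of the depth-$1$ query graph) than the paper's two-sentence proof.
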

\begin{proof}
    Containment in $\p^{C[\log]}$ is given by \Cref{thm:constant-depth}. That $\CDAGd$ for $d\in O(1)$ is $\p^{C[\log]}$-hard follows analogously to \Cref{thm:bsn} (in particular, since $ \p^{C[\log]} \subseteq \p^{\Vert C}$ for general $C$~\cite{B91}, and since parallel queries correspond to a constant depth query graph).
\end{proof}

\section{Hardness for APX-SIM via a unified framework}\label{scn:APXSIM}

We now show how the construction of \Cref{scn:BSN} can be embedded directly into the flag-qubit Hamiltonian construction of \cite{WBG20}, thus directly yielding hardness results for the APX-SIM problem (\Cref{def:apx}).
We first give required definitions in \Cref{sscn:APXdefs}.
\Cref{sscn:apxmainresult} and \Cref{sscn:helpfullemmas} state and prove the main result of this section, the Generalized Lifting Lemma (\Cref{lem:lift}).
Finally, \Cref{sscn:apply} shows how to apply the Generalized Lifting Lemma to obtain hardness results for APX-SIM (see \Cref{def:apx}).

\subsection{Definitions}\label{sscn:APXdefs}

% We first define the problem for which we will obtain hardness results, APX-SIM.

% \begin{definition}[APX-SIM$(H,A,k,l,a,b,\delta)$~\cite{Ambainis2014}]\label{def:apx}
%         Given a $k$-local Hamiltonian $H=\sum_i H_i $ acting on $N$ qubits, an $l$-local observable $A$, and real numbers $a$, $b$, and $\delta$ such that $b-a\geq N^{-c}$ and $\delta\geq N^{-c'}$, for $c,c'>0$ constant, decide:
%     \begin{itemize}
%         \item[YES.] If $H$ has a ground state $\ket{\psi}$ satisfying $\bra{\psi}A\ket{\psi}\leq a$.
%         \item[NO.] If for all $\ket{\psi}$ satisfying $\bra{\psi}H\ket{\psi}\leq \lmin(H)+\delta$, it holds that $\bra{\psi}A\ket{\psi}\geq b$.
%     \end{itemize}
% \end{definition}

The following definitions were introduced in \cite{WBG20} to allow one to abstractly speak about large classes of circuit-to-Hamiltonian mappings. This allows the Lifting Lemma of \cite{WBG20}, as well as its generalized version shown in \Cref{sscn:apxmainresult} (\Cref{lem:lift}), to be used in a black-box fashion (i.e. agnostic to the particular choice circuit-to-Hamiltonian construction used). As a result, both Lifting Lemmas automatically preserve desirable properties of the actual circuit-to-Hamiltonian mappings employed, such as being 1D or translation invariant.

\begin{definition}[Conformity~\cite{WBG20}]\label{def:conforming}
    Let $H$ be a Hamiltonian with some well-defined structure $S$ (such as $k$-local interactions, all constraints drawn from a fixed finite family, with a fixed geometry such as 1D, translational invariance, etc). We say a Hermitian operator $P$ \emph{conforms} to $H$ if $H+P$ also has structure $S$.
\end{definition}

\noindent For example, if $H$ is a $2$-local Hamiltonian on a 2D square lattice, then $P$ conforms to $H$ if $H+P$ is also a $2$-local Hamiltonian on a 2D square lattice. Next, define $\unitary{\X}$ as the set of unitary operators acting on space $\X$.% The following abstract notion of circuit-to-Hamiltonian mapping will be intuitively clarified subsequently.

\begin{definition}[Local Circuit-to-Hamiltonian Mapping~\cite{WBG20}]\label{def:local-mapping} Let $\spa{X}=(\C^2)^{\otimes p}$ and $\spa{Y}=(\C^2)^{\otimes q}$. A map $\Hw:\unitary{\spa{X}}\mapsto\herm(\spa{Y})$ is a \emph{local circuit-to-Hamiltonian mapping} if, for any $L>0$ and any sequence of $2$-qubit unitary gates $U=U_LU_{L-1}\cdots U_1$, the following hold:
\begin{enumerate}
    \item (Overall structure) $\Hw(U)\succeq0$ has a non-trivial null space, i.e.\ $\Null(\Hw(U))\neq 0$.
    This null space is spanned by (some appropriate notion of) ``correctly initialized computation history states'', i.e.\ with ancillae qubits set ``correctly'' and gates in $U$ ``applied'' sequentially.
    \item (Local penalization and measurement) Let $q_1$ and $q_2$ be the first two output wires of $U$ (each a single qubit), respectively. Let $\spre\subseteq \spa{X}$ and $\spost\subseteq\spa{Y}$ denote the sets of input states to $U$ satisfying the structure enforced by $\Hw(U)$ (e.g.\ ancillae initialized to zeroes), and null states of $\Hw(U)$, respectively. Then, there exist projectors $M_1$ and $P_L$, projector $M_2$ conforming to $\Hw(U)$, and a bijection $\f:\spre\mapsto\spost$, such that for all $i\in\set{1,2}$ and $\ket{\phi}\in\spre$, the state $\ket{\psi}=\f(\ket{\phi})$ satisfies
        \begin{equation}\label{eqn:sim}
            \Tr\big(\ketbra{0}{0}_i(U_LU_{L-1}\ldots U_1)\ketbra{\phi}{\phi}(U_LU_{L-1}\ldots U_1)^\dagger\big) = \Tr\big(\ketbra{\psi_L}{\psi_L} M_i\big),
        \end{equation}
        where $\ket{\psi_L} = P_L \ket\psi / \enorm{P_L \ket\psi}$ is $\ket\psi$ postselected on measurement outcome $P_L$ (we require $P_L \ket\psi\neq 0$). Moreover, there exists a function $\g:\N\times\N\mapsto\R$ such that
        \begin{align}
            \enorm{P_L\ket{\psi}}^2&=\g(p,L)\text{ for all }\ket{\psi}\in \Null(\Hw(U)),\label{eqn:sim3}\\
            M_i&= P_L M_i P_L\label{eqn:sim2}.
        \end{align}
\end{enumerate}
The map $\Hw$, and all operators/functions above ($M_1$,$M_2$,$P_L$,$\f$,$\g$) are computable given $U$.
\end{definition}
\noindent To gain intuition about \Cref{def:local-mapping}, consider the simplest case of Kitaev's $5$-local construction applied to a QMA verification circuit $U=U_L\cdots U_1$~\cite{Kitaev2002}. Then\footnote{In~\cite{Kitaev2002}, $\hin$ checks that all ancillae are set to $\ket{0}$ before the verification, $\hprop$ checks that each step $i$ in the verification follows from step $i-1$, $\hstab$ ensures the clock register is correctly encoded, and $\hout$ checks that the verifier accepts the given proof.}, $\Hw(U)=\hin+\hprop+\hstab$, since recall the null space of $\Hw(U)$ is precisely the set of all correctly initialized history states. (Notably, the term $\hout$ is omitted.) The sets $\spre$ and $\spost$ correspond to the correctly initialized inputs to $U$ (i.e. of form $\ket{\psi}_A\ket{0\cdots 0}_B$ for some proof $\ket{\psi}_A$ and the ancilla register $B$ set to all zeroes) and all correctly initialized history states\footnote{Kitaev's history state~\cite{Kitaev2002} encodes the history of the verification in \emph{superposition}, i.e. as $\ket{\psihist}\propto \sum_{t=0}^{L}U_t\cdots U_1\ket{\psi}_A\ket{0\cdots 0}_B\ket{t}_C$, where $C$ is a clock register. This in contrast to the Cook-Levin theorem~\cite{Cook1971,Levin1973}, which encodes the history in a tableau.} $\ket{\psihist}$, respectively. The projector $P_L=\ketbra{L}{L}_C$ projects onto timestep $L$ in clock register $C$, with $g(p,L)=1/(L+1)$. Finally, $M_i=\ketbra{0}{0}_{A_i}$ (for $A_i$ the $i$th qubit of register $A$, and where the projection onto time step $L$ has already happened due to the use of $\ket{\psi_L}$ in \Cref{eqn:sim}).

%
%\begin{lemma}[Extended Projection Lemma (\cite{Kempe2006,Gharibian2019})]\label{l:kkr}
%    Let $H=H_1+H_2$ be the sum of two Hamiltonians operating on some Hilbert space $\spa{H}=\spa{S}+\spa{S}^\perp$. The Hamiltonian $H_1$ is such that $\spa{S}$ is a zero eigenspace and the eigenvectors in $\spa{S}^\perp$ have eigenvalue at least $J>2\snorm{H_2}$. Let $K\coloneqq \snorm{H_2}$. Then, for any $\delta\geq0$ and $\ket{\psi}$ satisfying $\bra{\psi}H\ket{\psi}\leq \lmin(H)+\delta$, there exists a $\ket{\psi'}\in \spa{S}$ such that:
%    \begin{itemize}
%        \item (Ground state energy bound)
%        \begin{equation}
%        \lmin(H_2|_{\spa{S}})-\frac{K^2}{J-2K}\leq \lmin(H)\leq \lmin(H_2|_{\spa{S}}),
%        \end{equation}
%        where $\lmin(H_2|_{\spa{S}})$ denotes the smallest eigenvalue of $H_2$ restricted to space $\spa{S}$.
%        \item (Ground state deviation bound)
%        \begin{equation}
%        \abs{\braket{\psi}{\psi'}}^2\geq {1-\left(\frac{K+\sqrt{K^2+\delta(J-2K)}}{J-2K}\right)^2}.
%        \end{equation}
%        \item  (Energy obtained by perturbed state against $H$)
%        \begin{equation}
%        \bra{\psi'}H\ket{\psi'}\leq\lmin(H)+\delta+2K\frac{K+\sqrt{K^2+\delta(J-2K)}}{J-2K}.
%        \end{equation}
%    \end{itemize}
%\end{lemma}
%
%\begin{equation}\label{eqn:enorm}
%\trnorm{\ketbra{v}{v}-\ketbra{w}{w}}=2\sqrt{1-\abs{\braket{v}{w}}^2}\leq 2\enorm{\ket{v}-\ket{w}}.
%\end{equation}

\subsection{The Generalized Lifting Lemma}\label{sscn:apxmainresult}

\begin{figure}[t]
\[
\Qcircuit @C=1.5em @R=0.5em {
     \lstick{\ket{x}} & \multigate{5}{~~V~~} &   \rstick{x_N}\qw\\
     \lstick{\ket{\psi_1}} & \ghost{~~V~~} &   \rstick{\qflag}\qw\\
     \lstick{\ket{\psi_2}} & \ghost{~~V~~} &  \\
     \push{\vdots\rule{0em}{1em}\hspace{2mm}}&    &\\
     \lstick{\ket{\psi_N}} & \ghost{~~V~~} &  \\
     &  &
}
\]
\caption{A depiction of the circuit $V$ constructed in \Cref{lem:approximating-t}, with two minor modifications for our purposes here. First, the second wire above denotes the output wire of $V$, and is relabbeled $\qflag$ here. Second, we assume without loss of generality that $V$ outputs the $N$th bit of $x\in\set{0,1}^N$ on the first wire above, labelled $x_N$. For simplicity, we depict the proofs $\ket{\psi_i}$ above in tensor product, but we make no such \emph{a priori} assumption in any of our proofs.}
\label{fig:verifier}
\end{figure}
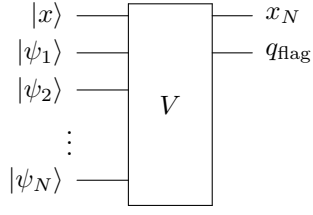

%\begin{lemma}[Generalized Lifting Lemma for APX-SIM]\label{lem:lift}
%Fix $C\in\QVp$. As input, we are given (1) a \CDAG{} instance $G$ on $n$ nodes, and (2) a separator tree for $G$ of depth $D$ and separator size $s$.
%Let $(G^*,f^*)$ be the output of \Cref{l:compress} on $N$ nodes, given $G$ and its separator tree. Let $V$, as depicted in \Cref{fig:verifier}, be the verification circuit constructed in \Cref{lem:approximating-t}, given $(G^*,f^*)$.
%Define shorthand $\Delta$ for $\Delta(\Hw(V))$.
\begin{lemma}[Generalized Lifting Lemma for APX-SIM]\label{lem:lift}
Fix $C\in\QVp$. As input, we are given a \CDAG{} instance $G^*$ on $N$ nodes, and $c$-admissible weighting function $f^*$. Let $V$, as depicted in \Cref{fig:verifier}, be the verification circuit constructed in \Cref{lem:approximating-t}, given $(G^*,f^*)$.
Define shorthand $\Delta$ for $\Delta(\Hw(V))$.
Fix a local circuit-to-Hamiltonian mapping $\Hw$, and assume the notation in \Cref{def:local-mapping}. Fix any function $\alpha:\N\mapsto\N$ such that
 \begin{equation}
    \alpha > \max\left(\frac{4\norm{M_2}}{\Delta},\frac{\Delta}{3\norm{M_2}^2},1\right).
 \end{equation}
Then, the Hamiltonian $H\coloneqq \alpha\Hw(V)+ M_2$ satisfies:
    \begin{itemize}
		\item If $G$ is a YES instance, then for all $\ket{\psi}$ with $\bra{\psi}H\ket{\psi}\leq\lmin(H)+\frac{1}{\alpha^2}$,
		\begin{equation}
		  \bra{\psi}M_1\ket{\psi}\leq \frac{1}{\alpha}\left[\frac{W}{\eta}\left(\frac{1}{\alpha}+\frac{12\norm{M_2}^2}{\Delta}\right)+\frac{12\norm{M_2}^2}{\Delta}\right].
		\end{equation}
		\item If $G$ is a NO instance, then for all $\ket{\psi}$ with $\bra{\psi}H\ket{\psi}\leq\lmin(H)+\frac{1}{\alpha^2}$,
		\begin{equation}
		  \bra{\psi}M_1\ket{\psi}\geq g(p,L)-\frac{1}{\alpha}\left[\frac{W}{\eta}\left(\frac{1}{\alpha}+\frac{12\norm{M_2}^2}{\Delta}\right)-\frac{12\norm{M_2}^2}{\Delta}\right],
		\end{equation}
    \end{itemize}
    for $W$ and $\eta$ defined in \Cref{lem:approximating-t} and \Cref{lem:correct-query-string}, respectively, and $\g(p,L)$ defined in \Cref{def:local-mapping}.
\end{lemma}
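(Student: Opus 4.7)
The plan is to instantiate the flag-qubit approach of~\cite{WBG20} with the verifier $V$ from \Cref{lem:approximating-t} playing the role of the ``super-verifier''. The key conceptual simplification is that by construction, $V$ accepts proof $(x,\psi_1,\ldots,\psi_N)$ with probability exactly $t'(x,\psi_1,\ldots,\psi_N)=t(x,\psi_1,\ldots,\psi_N)/W$, so the ``flag qubit'' is simply $V$'s second output wire $\qflag$ and no multi-round small-angle rotation trick is required. Penalizing this wire via $M_2$ drives the ground space of $H=\alpha\Hw(V)+M_2$ toward approximate maximizers of $t'$; \Cref{lem:correct-query-string} then forces the classical $\X$-register of any decoded proof to be supported on correct query strings, which deterministically agree on $x_N$. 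Since $M_1$ measures ``first output wire equals $0$'' (via \eqref{eqn:sim} of \Cref{def:local-mapping}), this single observable cleanly separates the YES from the NO case.

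First, I upper-bound $\lmin(H)\leq g(p,L)(1-T')$ by evaluating $H$ on the history state $\f(\ket{x^\star}\otimes\ket{\psi^\star})$ of a proof achieving $t'=T'$. For any $\ket{\psi}$ with $\bra{\psi}H\ket{\psi}\leq\lmin(H)+1/\alpha^2$, decompose $\ket{\psi}=a\ket{\psi_\parallel}+b\ket{\psi_\perp}$ with $\ket{\psi_\parallel}\in\Null(\Hw(V))$ and $\ket{\psi_\perp}$ orthogonal. Using $\bra{\psi}\Hw(V)\ket{\psi}\geq\Delta|b|^2$ and $\bra{\psi}M_2\ket{\psi}\geq 0$, a direct calculation yields $|b|^2=O(\norm{M_2}/(\alpha\Delta))$; Cauchy--Schwarz then gives $\bigl|\bra{\psi}M_i\ket{\psi}-\bra{\psi_\parallel}M_i\ket{\psi_\parallel}\bigr|=O(\norm{M_i}\cdot|b|)$. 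The three assumed lower bounds on $\alpha$ are exactly what repackage this $\sqrt{\cdot}$ bound into the linear-in-$1/\alpha$ factors (with constant $12\norm{M_2}^2/\Delta$) appearing in the statement.

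Next, \eqref{eqn:sim}--\eqref{eqn:sim2} of \Cref{def:local-mapping} identify $\bra{\psi_\parallel}M_2\ket{\psi_\parallel}=g(p,L)\bigl(1-p_\mathrm{acc}(V,\rho_\mathrm{in})\bigr)$, where $\rho_\mathrm{in}=\f^{-1}(\ketbra{\psi_\parallel}{\psi_\parallel})$. Because $V$'s first operation is a standard-basis measurement of $\X$, I may assume WLOG that $\rho_\mathrm{in}=\sum_x p_x\ketbra{x}{x}_\X\otimes\sigma^{(x)}$, so that $p_\mathrm{acc}=\sum_x p_x\,t'(x,\sigma^{(x)})$; here the mixed-state extension of $t'$ is linear in each $\sigma^{(x)}$, and \Cref{lem:correct-query-string} carries over to mixed proofs by convexity (since $t'(x,\cdot)$ is a linear functional of each $\sigma^{(x)}$, any incorrect $x$ satisfies $t'(x,\sigma^{(x)})\leq T'-\eta/W$ uniformly). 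Combining with the previous paragraph gives $\sum_x p_x\bigl(T'-t'(x,\sigma^{(x)})\bigr)\leq\delta$ for an explicit $\delta=O(1/\alpha+\norm{M_2}^2/\Delta)/g(p,L)$, and any $x$ that is not a correct query string contributes at least $p_x\cdot\eta/W$ to this sum, so a Markov step gives $\Pr_{x\sim p}[x\text{ incorrect}]\leq\delta W/\eta$.

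Finally, since $\bra{\psi_\parallel}M_1\ket{\psi_\parallel}=g(p,L)\Pr_{x\sim p}[x_N=0]$, in the YES case (every correct query string has $x_N=1$) this is at most $g(p,L)\delta W/\eta$, and in the NO case (every correct query string has $x_N=0$) it is at least $g(p,L)(1-\delta W/\eta)$. Transferring back to $\bra{\psi}M_1\ket{\psi}$ via the cross-term bound from the second paragraph produces the claimed YES/NO inequalities. The main obstacle will be purely book-keeping: the three assumed lower bounds on $\alpha$ are precisely what ensure that the four distinct error sources (the slack $1/\alpha^2$, the null-space leakage, the Markov conversion factor $W/\eta$, and the $M_1$ cross term) combine into the exact form $\tfrac{1}{\alpha}\bigl[\tfrac{W}{\eta}\bigl(\tfrac{1}{\alpha}+\tfrac{12\norm{M_2}^2}{\Delta}\bigr)\pm\tfrac{12\norm{M_2}^2}{\Delta}\bigr]$ appearing in the statement, and verifying that all numerical constants close up requires care rather than new ideas.
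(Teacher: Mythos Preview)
Your three-step structure (low-energy states are close to history states; low-energy history states encode mostly-correct query strings; combine to bound $\bra\psi M_1\ket\psi$) is exactly the paper's decomposition into \Cref{l:Trace_Distance}, \Cref{l:low-flag}, and \Cref{l:finalhardness}, and Steps~2 and~3 of your sketch are correct.

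There is, however, a genuine gap in Step~1. From $\bra\psi\Hw(V)\ket\psi\geq\Delta|b|^2$ and the \emph{trivial} bound $\bra\psi M_2\ket\psi\geq 0$ you only get
\[
\alpha\Delta|b|^2 \;\leq\; \bra\psi H\ket\psi \;\leq\; \lmin(H)+\tfrac{1}{\alpha^2}\;\leq\;\norm{M_2}+\tfrac{1}{\alpha^2},
\]
hence $|b|=O\bigl(\sqrt{\norm{M_2}/(\alpha\Delta)}\bigr)=O(1/\sqrt\alpha)$. Your assertion that the assumed lower bounds on $\alpha$ ``repackage this $\sqrt{\cdot}$ bound into the linear-in-$1/\alpha$ factors'' cannot be right: no lower bound on $\alpha$ turns $C/\sqrt{\alpha}$ into $C'/\alpha$, since their ratio diverges as $\alpha\to\infty$. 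With $|b|=O(1/\sqrt\alpha)$, both the $M_1$ cross term and the energy slack $\bra{\psi_\parallel}H\ket{\psi_\parallel}-\lmin(H)$ are only $O(1/\sqrt\alpha)$, and the final bounds you obtain are $O(1/\sqrt\alpha)$ rather than the stated $O(1/\alpha)$.

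The fix (this is precisely what the Extended Projection Lemma~\cite{Kempe2006,Gharibian2019} does, which the paper invokes for \Cref{l:Trace_Distance}) is to replace $\bra\psi M_2\ket\psi\geq 0$ by the sharper bound
\[
\bra\psi M_2\ket\psi \;\geq\; (1-|b|^2)\,\mu_2 \;-\; 2|b|\,\norm{M_2},\qquad \mu_2:=\min_{\ket\phi\in\Null(\Hw(V))}\bra\phi M_2\ket\phi,
\]
coming from $\bra{\psi_\parallel}M_2\ket{\psi_\parallel}\geq\mu_2$. Combined with $\lmin(H)\leq\mu_2$, the energy inequality becomes a \emph{quadratic} in $|b|$,
\[
(\alpha\Delta-\mu_2)\,|b|^2 \;-\; 2\norm{M_2}\,|b| \;-\;\delta\;\leq\;0,
\]
whose positive root is $|b|=O\bigl(\norm{M_2}/(\alpha\Delta)\bigr)$, linear in $1/\alpha$. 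Here the hypothesis $\alpha>4\norm{M_2}/\Delta$ controls the leading coefficient, and $\alpha>\Delta/(3\norm{M_2}^2)$ together with $\delta\leq 1/\alpha^2$ bounds the discriminant term $\sqrt{3\alpha\Delta\delta}\leq 3\norm{M_2}$. So the $\alpha$ conditions \emph{are} exactly what close the constants --- but only once the correct quadratic is in hand, not from your stated estimate. After this correction, your Steps~2 and~3 recover the exact expressions in the lemma.
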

\begin{proof}
    The claim follows immediately by defining $\delta:=1/\alpha^2$, and then combining \Cref{l:Trace_Distance}, \Cref{l:low-flag}, and \Cref{l:finalhardness} (all given subsequently in \Cref{sscn:helpfullemmas}). Roughly, \Cref{l:Trace_Distance} first shows that any low-energy state of $H$ must be ``close'' to a history state (formally, a null state of $\Hw(V)$, as per \Cref{def:local-mapping}). \Cref{l:low-flag} shows that, in turn, any low-energy \emph{history} state of $H$ must have most of its weight on correct query strings. Finally, \Cref{l:finalhardness} combines the previous two lemmas, along with \Cref{def:local-mapping}, to obtain the claim, i.e. the ground state of $H$ must encode the full computation represented by $G$, and thus a local measurement suffices to decide $G$.
\end{proof}

\subsection{Lemmas required for proof of Lifting Lemma}\label{sscn:helpfullemmas}

We now give the three lemmas required for the proof of \Cref{lem:lift}, all of which assume the notation for the latter. The first of these can be stated and proven identically to Lemma 22 of  \cite{WBG20}, since it does not leverage any properties of $V$ itself, but only the abstract definition of $\Hw(U)$. While the proof is simple, it uses the Extended Projection Lemma~\cite{Kempe2006,Gharibian2019}; for brevity we omit both here.

\begin{lemma}[\cite{WBG20}]\label{l:Trace_Distance}
	 Fix any function $\alpha:\N\mapsto\N$ such that
	\begin{equation}\label{eqn:6}
	\alpha > \max\left(\frac{4\norm{M_2}}{\Delta},\frac{\Delta}{3\norm{M_2}^2},1\right),
	\end{equation}
	and any $\delta\leq 1/\alpha^2$. Then, for any $\ket{\psi}$ such that $\bra{\psi}H\ket{\psi}\leq \lmin(H)+\delta$, there exists a uniform history state $\ket\phi \in \Null(\Hw(V))$ such that
	\begin{equation}\label{eqn:trbound2}
	   \trnorm{\ketbra{\psi}{\psi} - \ketbra{\phi}{\phi}} \leq\frac{12\norm{M_2}}{\alpha\Delta}
	\end{equation}
	and where $\ket{\phi}$ has energy
	\begin{equation}\label{eqn:5}
	\bra{\phi}H\ket{\phi}\leq \lmin(H)+\delta+\frac{12\norm{M_2}^2}{\alpha\Delta}.
	\end{equation}
\end{lemma}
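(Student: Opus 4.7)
The plan is to invoke the Extended Projection Lemma of \cite{Kempe2006,Gharibian2019} with $H_0 \coloneqq \Hw(V)$ playing the role of the ``heavy'' term and $H_1 \coloneqq M_2$ the role of the ``light'' perturbation. By \Cref{def:local-mapping}, $\Hw(V)\succeq 0$ has nullspace consisting precisely of (correctly-initialized) uniform history states, with spectral gap $\Delta$ above it; thus $\alpha\Hw(V)$ has gap $\alpha\Delta$ above the same nullspace. The first term in the $\max$ defining the hypothesis on $\alpha$, namely $\alpha>4\norm{M_2}/\Delta$, is exactly what is needed to place us in the regime $\alpha\Delta>2\norm{M_2}$ where the Extended Projection Lemma applies.

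First I would decompose $\ket\psi=\cos\theta\,\ket\phi+\sin\theta\,\ket{\phi^\perp}$, where $\ket\phi\in\Null(\Hw(V))$ is the normalized projection of $\ket\psi$ onto the nullspace and $\ket{\phi^\perp}$ is orthogonal to it. The spectral gap yields $\bra\psi H\ket\psi \ge \alpha\Delta\sin^2\theta - \norm{M_2}$, while plugging any vector of $\Null(\Hw(V))$ into the Rayleigh quotient shows $\lmin(H)\le\norm{M_2}$. Combined with the hypothesis $\bra\psi H\ket\psi\le\lmin(H)+\delta$ and $\delta\le 1/\alpha^2$, this produces $\sin^2\theta \le (2\norm{M_2}+\delta)/(\alpha\Delta)$. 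The identity $\trnorm{\ketbra\psi\psi-\ketbra\phi\phi}=2\abs{\sin\theta}\sqrt{1-\sin^2\theta}$ alone only delivers a square-root bound; obtaining the linear estimate \eqref{eqn:trbound2} requires the sharper form of the Extended Projection Lemma, which bootstraps the naive square-root bound via a second-order Projection-Lemma-style argument. The remaining conditions $\alpha>\Delta/(3\norm{M_2}^2)$ and $\alpha>1$ are precisely what allows this bootstrap to close with clean constants.

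For the energy bound \eqref{eqn:5}, the key step will be to write $\bra\phi H\ket\phi = \bra\phi M_2\ket\phi$ (using $\Hw(V)\ket\phi=0$), then to bound $\abs{\bra\phi M_2\ket\phi - \bra\psi M_2\ket\psi}\le \norm{M_2}\cdot\trnorm{\ketbra\psi\psi-\ketbra\phi\phi}\le 12\norm{M_2}^2/(\alpha\Delta)$ via the standard inequality $\abs{\Tr(M\rho)-\Tr(M\sigma)}\le\norm{M}\cdot\trnorm{\rho-\sigma}$, and finally to use $\bra\psi M_2\ket\psi\le\bra\psi H\ket\psi\le\lmin(H)+\delta$.

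The principal obstacle I expect is the constant-hunting required to pin down the prefactor $12$ appearing in both \eqref{eqn:trbound2} and \eqref{eqn:5}: this demands carefully tracking the contribution of each of the three terms in the $\max$ defining the hypothesis on $\alpha$ through the bootstrap step, with each term removing a specific source of slack (dominating the perturbation, upgrading the square-root overlap bound to a linear one, and absorbing lower-order corrections). Beyond this numerical bookkeeping, the argument uses only the abstract properties of $\Hw(V)$ and $M_2$ (positive semidefiniteness, spectral gap $\Delta$, and the operator norm of the perturbation), and nothing about the specific verifier $V$ from \Cref{lem:approximating-t}; this is exactly why the proof is a transcription of \cite[Lemma~22]{WBG20}, and is the reason the authors omit it here.
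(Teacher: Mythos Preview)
Your proposal is correct and matches the paper's approach exactly: the paper explicitly states that this lemma ``can be stated and proven identically to Lemma 22 of \cite{WBG20}'' via the Extended Projection Lemma of \cite{Kempe2006,Gharibian2019}, using only the abstract properties of $\Hw(V)$ and $M_2$, and omits the details for brevity. Your outline of the decomposition, the role of each term in the $\max$, and the H\"older step for \eqref{eqn:5} is precisely the content of that omitted argument.
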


For the second lemma, \Cref{l:low-flag}, recall $V$ has proof space $\X\otimes \Y_1\otimes\dotsm\otimes \Y_\nn$ with $\X=\B^{\otimes \nn}$ and $\Y_i=\B^{\otimes m}$. Henceforth, we denote an arbitrary (potentially entangled) proof in this space as $\ket{w_{\X\Y}}$. We remark \Cref{l:low-flag} is our version of Lemma 23 of \cite{WBG20}; however, our proof is significantly simplified, despite our lifting lemma allowing arbitrary C-DAGs, due to the specific design of our verifier $V$ from \Cref{lem:approximating-t}. (In particular, Lemma 23 of \cite{WBG20} requires a somewhat involved argument using conditional probabilities to obtain soundness against entanglement across proofs.)

\begin{lemma}\label{l:low-flag}
	Suppose history state $\ket{\phi} \in \Null(\Hw(V))$ has preimage $\ket{\psiin}=\f^{-1}(\ket{\phi})$ (for bijection $\f$ from \Cref{def:local-mapping}), where $\ket{\psiin}$ has proof $\ket{w_{\X\Y}}$ with total amplitude $\pbad$ on incorrect query strings in $\X$. Then,
	\begin{equation}\label{eqn:eta}
	   \bra{\phi}H\ket{\phi}> \lmin(H)+g(p,L)\frac{\pbad\cdot \eta}{W}.
	\end{equation}	
\end{lemma}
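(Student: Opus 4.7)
The plan is to reduce the claim to an algebraic manipulation on top of two inputs: (i) the abstract properties of $\Hw$ encoded in \Cref{def:local-mapping}, and (ii) the ``incorrect strings lose a gap of $\eta$'' statement from \Cref{lem:correct-query-string} applied to the verifier $V$ built in \Cref{lem:approximating-t}. First I would note that since $\ket{\phi}\in\Null(\Hw(V))$, the $\alpha\Hw(V)$ summand of $H$ contributes nothing, so $\bra{\phi}H\ket{\phi}=\bra{\phi}M_2\ket{\phi}$. Then, combining \Cref{eqn:sim}, \Cref{eqn:sim3} and \Cref{eqn:sim2} of \Cref{def:local-mapping} for $i=2$ (with the second output wire of $V$ being the flag qubit $\qflag$, per \Cref{fig:verifier}), I get the identity
\begin{equation}
    \bra{\phi}M_2\ket{\phi}\;=\;g(p,L)\cdot \Tr\bigl(\ketbra{0}{0}_{\qflag}\,V\ketbra{\psiin}{\psiin}V^\dagger\bigr)\;=\;g(p,L)\bigl(1-\Pr[V\text{ accepts }\ket{w_{\X\Y}}]\bigr),
\end{equation}
since by construction $V$ accepts exactly when $\qflag$ is measured to $\ket{1}$.

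Next, I would expand $\ket{w_{\X\Y}}=\sum_x\gamma_x\ket{x}_\X\ket{w_x}_{\Y_1\cdots\Y_N}$ in the standard basis of $\X$. The key observation here is that the verifier $V$ from \Cref{lem:approximating-t} begins by measuring $\X$ in the standard basis, so its acceptance POVM is block-diagonal in $\X$. Consequently, regardless of entanglement across the $\Y_i$, I can write
\begin{equation}
    \Pr[V\text{ accepts }\ket{w_{\X\Y}}]\;=\;\sum_x |\gamma_x|^2\, t'\bigl(x,\sigma^x_1,\ldots,\sigma^x_N\bigr),\qquad \sigma^x_i:=\Tr_{\bigotimes_{j\ne i}\Y_j}\!\ketbra{w_x}{w_x},
\end{equation}
where $t'=t/W$. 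By the contrapositive of \Cref{lem:correct-query-string}, any incorrect $x$ satisfies $t(x,\cdot)\le T-\eta$, while correct $x$ trivially satisfy $t(x,\cdot)\le T$. Summing with weights $|\gamma_x|^2$ and using $\pbad=\sum_{x\text{ incorrect}}|\gamma_x|^2$ gives $\Pr[V\text{ accepts }\ket{w_{\X\Y}}]\le T'-\eta\pbad/W$. Substituting back yields
\begin{equation}
    \bra{\phi}H\ket{\phi}\;\ge\; g(p,L)\bigl(1-T'\bigr)+g(p,L)\frac{\eta\,\pbad}{W}.
\end{equation}

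Finally, I would show $\lmin(H)\le g(p,L)(1-T')$ by constructing a witness: take an optimal proof $(x^*,\psi^*_1,\ldots,\psi^*_N)$ attaining $T$, form the corresponding input state $\ket{\psiin^*}$, and let $\ket{\phi^*}=\f(\ket{\psiin^*})\in\Null(\Hw(V))$ via the bijection in \Cref{def:local-mapping}. Then $\bra{\phi^*}H\ket{\phi^*}=\bra{\phi^*}M_2\ket{\phi^*}=g(p,L)(1-T')$, so $\lmin(H)\le g(p,L)(1-T')$. Combining with the previous display gives the claimed bound (the strict inequality in the statement being the standard slack incurred because the specific $\ket{\phi}$ need not saturate either of the two $\le$'s above simultaneously unless $\pbad=0$, in which case it is vacuous up to replacing $>$ with $\ge$). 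The only conceptual obstacle is step two: verifying that potential entanglement of $\ket{w_{\X\Y}}$ across registers does not spoil the per-$x$ decomposition. But this is immediate from the destructive $\X$-measurement at the start of $V$, which is precisely why \Cref{lem:approximating-t} was designed with $V$ acting block-diagonally on $\X$, and it bypasses the substantially more delicate exchange argument used for the analogous Lemma 23 of~\cite{WBG20}.
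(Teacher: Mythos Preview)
Your proposal is correct and follows essentially the same approach as the paper: reduce $\bra{\phi}H\ket{\phi}$ to $g(p,L)\cdot\Pr[V\text{ rejects}]$ via \Cref{def:local-mapping}, use the block-diagonality of $V$ in $\X$ together with \Cref{lem:correct-query-string} to bound the acceptance probability by $T'-\pbad\,\eta/W$, and then exhibit the optimal history state $\ket{\phi^*}=\f(\ket{\psiin^*})$ to upper-bound $\lmin(H)$ by $g(p,L)(1-T')$. Your remark that entanglement across the $\Y_i$ is harmless because $V$ first measures $\X$ is exactly the point the paper makes (and the reason it avoids the exchange argument of~\cite{WBG20}); your handling of the strict inequality is no looser than the paper's own.
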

\begin{proof}
	Let $\ket{\psiout}=V\ket{\psiin}$.
Letting $X_+$ and $X_-$ denote the sets of correct and incorrect query strings, respectively, we may write
\begin{equation}\label{eqn:wxy}
    \ket{w_{\X\Y}}=\sum_{x\in X_-}\alpha_x\ket{x}_{\X}\ket{\psi_x}_{\Y}+\sum_{x\in X_+}\alpha_x\ket{x}_{\X}\ket{\psi_x}_{\Y},
\end{equation}
for $\sum_{x\in X_+\cup X_-}\abs{\alpha_x}^2=1$, arbitrary unit vectors $\set{\ket{\psi_x}}_x$, and $\pbad:=\sum_{x\in X_-}\abs{\alpha_x}^2$.
Recall from \Cref{def:local-mapping} that $M_2$ simulates the projector $\ketbra{0}{0}_{\qflag}$ via
	\begin{equation}\label{eqn:here}
	\Tr\left(\ketbra{0}{0}_2(U_LU_{T-1}\ldots U_1)\ketbra{\psiin}{\psiin}(U_LU_{T-1}\ldots U_1)^\dagger\right) = \Tr\big(\ketbra{\phi_L}{\phi_L} M_2\big),
	\end{equation}
	(since we assumed in \Cref{fig:verifier} that the second output qubit of $V$ is the flag qubit), where $\ket{\phi_L}$ is the history state $\ket{\phi}$ projected down onto time step $T$. We thus have
	\begin{eqnarray}
	\bra{\phi}H\ket{\phi}&=&\bra{\phi}M_2\ket{\phi}\\
    &=& g(p,L)\bra{\phi_L}M_2\ket{\phi_L}\\
    &=& g(p,L)\Tr\left(\ketbra{\psiout}{\psiout}\cdot\ketbra{0}{0}_{\qflag}\right)\\
    &=&g(p,L)\Pr[V \text{ rejects}\mid\ket{w_{\X\Y}}],
\end{eqnarray}
where the second statement follows from \Cref{eqn:sim3} and \Cref{eqn:sim2}, and the third from \Cref{eqn:here}.
By \Cref{eqn:T} and \Cref{eqn:t'}, there exists a proof $\ket{w'_{\X\Y}}=\ket{x}\ket{\psi_1}\cdots\ket{\psi_N}$ accepted by $V$ with probability precisely $T/W$.
% Let $\ket{\psiout'}=V\ket{\psiin'}$ denote the output of $V$ (\Cref{fig:verifier}) on an input state $\ket{\psiin'}$ containing this optimal proof $\ket{w'_{\X\Y}}$.
Let $\ket{\psiin'}$ be an input state containing this optimal proof $\ket{w'_{\X\Y}}$.
\Cref{lem:correct-query-string} now yields\footnote{We are implicitly using the fact that, as observed in the proof of \Cref{lem:approximating-t}, for any fixed query string $x$, the acceptance probability of $V$ is maximized by choosing a product state proof $\ket{\psi_1}\cdots\ket{\psi_N}$ on $\Y$.}
\begin{eqnarray}
    \bra{\phi}H\ket{\phi}&>&g(p,L)\left(\Pr[V \text{ rejects}\mid\ket{w'_{\X\Y}}]+\left(\sum_{x\in X_-}\abs{\alpha_x}^2\right)\frac{\eta}{W}\right)\label{eq:low-flag:1}\\
    %&=&g(p,L)\left(\Tr\left(\ketbra{\psiout'}{\psiout'}\cdot\ketbra{0}{0}_{\qflag}\right)+\frac{\eta}{W}\right)\\
    &=&\bra{\phi'}M_2\ket{\phi'}+g(p,L)\frac{\pbad\cdot \eta}{W}\label{eq:low-flag:2}\\
    &\geq&\lmin(H)+g(p,L)\frac{\pbad\cdot \eta}{W},\label{eq:low-flag:3}
\end{eqnarray}
% \dnote{Not clear to me, especially $\eta/W$}\snote{add details for eq 56)}
where the first inequality \eqref{eq:low-flag:1} uses the fact that
\begin{equation}
    \Pr[V \text{ accepts} \mid \ket{w_{\X\Y}}] \le \pgood\cdot\frac TW + \pbad \left(\frac TW - \frac\eta W\right)= \frac TW - \frac{\pbad\cdot\eta}W.
\end{equation}
The second statement uses \Cref{eqn:sim3,eqn:sim2}, with $\ket{\phi'}:=\f(\ket{\psiin'})$,
and the last statement \eqref{eq:low-flag:3} uses $\ket{\phi'}\in\Null(\Hw(V))$ by the definition of $\f$ in \Cref{def:local-mapping}.

As a final aside, the proof above is written with the context of \emph{quantum} verification classes such as $C=\QMA$ in mind. However, the same proof can be applied directly to (say) $C=\NP$ by embedding an NP verifier in the usual manner into a QMA verifier (i.e. the QMA verifier begins by measuring its proof in the standard basis via the principle of deferred measurement). Of course, even when $C=\NP$, the construction of this section still yields a genuinely quantum Hamiltonian $H$ (as opposed to a Hamiltonian $H$ diagonal in the standard basis), due to our use of circuit-to-Hamiltonian mappings $\Hw$.
\end{proof}

Finally, the third lemma, \Cref{l:finalhardness}, is our analog of Lemma 25 of \cite{WBG20}. We follow the same high-level approach as the latter, but again, our proof here is simplified. This is because \Cref{lem:correct-query-string} can be directly leveraged to obtain that any history state close enough to the ground space of $H$ must simply output the correct answer to the input \CDAG{} on wire $x_N$ in \Cref{fig:verifier}. (In contrast,~\cite{WBG20} needed the Commutative Quantum Union Bound to argue that all proofs are simultaneously correct.)

\begin{lemma} \label{l:finalhardness}
	Consider any $\ket{\psi}$ satisfying $\bra{\psi}H\ket{\psi}\leq\lmin(H)+\delta$. If $\delta\leq 1/\alpha^2$, then
	\begin{itemize}
		\item if $G^*$ is a YES instance, then
		\begin{equation}
		  \bra{\psi}M_1\ket{\psi}\leq \frac{W}{\eta}\left(\delta+\frac{12\norm{M_2}^2}{\alpha\Delta}\right)+\frac{12\norm{M_2}^2}{\alpha\Delta}
		\end{equation}
		\item if $G^*$ is a NO instance, then
		\begin{equation}
		  \bra{\psi}M_1\ket{\psi}\geq g(p,L)-\frac{W}{\eta}\left(\delta+\frac{12\norm{M_2}^2}{\alpha\Delta}\right)-\frac{12\norm{M_2}^2}{\alpha\Delta}
		\end{equation}
	\end{itemize}
\end{lemma}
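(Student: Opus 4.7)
The plan is a two-step bootstrap: first use \Cref{l:Trace_Distance} to replace the given state $\ket{\psi}$ by a uniform history state $\ket{\phi}\in\Null(\Hw(V))$ that is close in trace distance to $\ket{\psi}$ and has only slightly higher energy under $H$; then use \Cref{l:low-flag} to bound the weight $\pbad$ of the corresponding proof on incorrect query strings; and finally read $x_N$ off through $M_1$ by using the fact that $M_1$ is essentially the standard-basis measurement of the $N$th qubit of the proof register $\X$.

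First I would invoke \Cref{l:Trace_Distance} directly: its hypotheses on $\alpha$ and $\delta\le 1/\alpha^2$ match ours verbatim, so it produces $\ket{\phi}\in\Null(\Hw(V))$ with $\trnorm{\ketbra{\psi}{\psi}-\ketbra{\phi}{\phi}}\le 12\norm{M_2}/(\alpha\Delta)$ and $\bra{\phi}H\ket{\phi}\le\lmin(H)+\delta+12\norm{M_2}^2/(\alpha\Delta)$. Next, writing $\ket{\psiin}=\f^{-1}(\ket{\phi})$ with proof $\ket{w_{\X\Y}}$ as in \Cref{eqn:wxy}, I would feed $\ket{\phi}$ into \Cref{l:low-flag} and combine with this energy bound to obtain $g(p,L)\cdot\pbad < (W/\eta)\bigl(\delta+12\norm{M_2}^2/(\alpha\Delta)\bigr)$.

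The key bridge is then to relate $\bra{\phi}M_1\ket{\phi}$ to $\pbad$. Running the computation in the proof of \Cref{l:low-flag} verbatim but invoking \Cref{eqn:sim,eqn:sim2,eqn:sim3} with $i=1$ rather than $i=2$ yields $\bra{\phi}M_1\ket{\phi}=g(p,L)\cdot\Pr[x_N=0\mid\ket{w_{\X\Y}}]$. The crucial observation is that by construction of $V$ in \Cref{fig:verifier}, the first output wire is literally the $N$th qubit of the $\X$-register, so $M_1$ is diagonal in the standard basis on $\X$ and selects exactly the amplitudes of $\ket{w_{\X\Y}}$ with $x_N=0$. In a YES instance the correct output of $G^*$ is $x_N=1$, so every string with $x_N=0$ must be incorrect, giving $\bra{\phi}M_1\ket{\phi}\le g(p,L)\cdot\pbad$; in a NO instance every correct string has $x_N=0$, giving $\bra{\phi}M_1\ket{\phi}\ge g(p,L)-g(p,L)\cdot\pbad$.

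To finish I would transfer back from $\ket{\phi}$ to $\ket{\psi}$: since $M_1$ is a projector, H\"older's inequality gives $\abs{\bra{\psi}M_1\ket{\psi}-\bra{\phi}M_1\ket{\phi}}\le \trnorm{\ketbra{\psi}{\psi}-\ketbra{\phi}{\phi}}\le 12\norm{M_2}^2/(\alpha\Delta)$, where I conservatively bound $\norm{M_2}\le\norm{M_2}^2$ (harmless for the projector-valued $M_2$ of norm $1$ supplied by the constructions of interest, and consistent with the shape of the bounds already given by \Cref{l:Trace_Distance}). Substituting the two cases above, together with the bound on $g(p,L)\cdot\pbad$ from the second paragraph, yields exactly the stated YES and NO inequalities. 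The one genuinely delicate point, and what I expect to be the main obstacle in a fully formal write-up, is cleanly identifying the abstract operator $M_1$ of \Cref{def:local-mapping} with the standard-basis readout of the bit $x_N$ of the proof register, so that the case analysis above collapses to a simple probabilistic statement about $\ket{w_{\X\Y}}$; once the convention of \Cref{fig:verifier} is fixed, the rest is bookkeeping.
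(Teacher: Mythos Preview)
Your proposal is correct and follows essentially the same approach as the paper: invoke \Cref{l:Trace_Distance} to pass to a history state $\ket{\phi}$, apply \Cref{l:low-flag} to bound $\pbad$, use \Cref{eqn:sim,eqn:sim2,eqn:sim3} with $i=1$ to identify $\bra{\phi}M_1\ket{\phi}$ with $g(p,L)$ times the probability that $x_N=0$, split into YES/NO cases, and transfer back to $\ket{\psi}$ via H\"older. The paper handles your ``delicate point'' exactly as you anticipate, by appealing to the convention of \Cref{fig:verifier} that $V$ outputs $x_N$ on its first wire, and likewise tacitly uses $\norm{M_2}=\norm{M_2}^2$ for the projector $M_2$ when matching the H\"older error term to the stated bound.
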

\begin{proof}
	We first use \Cref{l:Trace_Distance} to map, assuming $\delta\leq 1/\alpha^2$, $\ket{\psi}$ to a history state $\ket{\phi}\in \Null(\Hw(V))$ such that
	\begin{equation}\label{eqn:step1}
	\trnorm{\ketbra{\psi}{\psi} - \ketbra{\phi}{\phi}} \leq\frac{12\norm{M_2}}{\alpha\Delta}\quad\text{and}\quad
	\bra{\phi}H\ket{\phi}\leq \lmin(H)+\delta+\frac{12\norm{M_2}^2}{\alpha\Delta}.
	\end{equation}
	We next use \Cref{l:low-flag} to obtain that preimage $\ket{\phiin}=f^{-1}(\ket{\phi})$ contains proof $\ket{\wm}$ (see \Cref{eqn:wxy}) with
\begin{equation}
    \pbad < \frac{W}{g(p,L)\cdot\eta}\left(\delta+\frac{12\norm{M_2}^2}{\alpha\Delta}\right),\label{eqn:pbad}
\end{equation}	
	i.e. the total amplitude $\pbad$ of $\ket{w_{\X\Y}}$ on incorrect query strings in $\X$ is bounded. But by \Cref{def:c-dag}, if the query string $x_1\cdots x_N$ in $\X$ is correct, then $x_N$ encodes the correct output of C-DAG $G^*$.
Moreover, by design, $V$ in \Cref{fig:verifier} always outputs $x_N$ on its first wire.
Thus,
\begin{eqnarray*}
    \text{if $G^*$ is a YES instance}\quad&\Rightarrow&\quad\Tr\big(\ketbra{0}{0}_1V\ketbra{\phiin}{\phiin}V^\dagger\big)\leq \pbad,\label{eqn:yes}\\
    \text{if $G^*$ is a NO instance}\quad&\Rightarrow&\quad\Tr\big(\ketbra{0}{0}_1V\ketbra{\phiin}{\phiin}V^\dagger\big)\geq 1-\pbad\label{eqn:no}
\end{eqnarray*}
Since for $\ket{\phi_L} = P_L \ket\phi / \enorm{P_L \ket\phi}$,
\begin{equation}
    \bra{\phi}M_1\ket{\phi} = g(p,L)\bra{\phi_L}M_1\ket{\phi_L}=g(p,L)\Tr\big(\ketbra{0}{0}_1V\ketbra{\phiin}{\phiin}V^\dagger\big)
\end{equation}
(by \Cref{eqn:sim2}, \Cref{eqn:sim3}, and \Cref{eqn:sim}), we thus have that if $G^*$ is a YES instance, $\bra{\phi}M_1\ket{\phi}\leq g(p,L)\pbad$, and if $G^*$ is a NO instance, $\bra{\phi}M_1\ket{\phi}\geq g(p,L)(1-\pbad)$. Combining this with \Cref{eqn:pbad} and \Cref{eqn:step1} via H\"{o}lder's inequality yields the claim.
\end{proof}

\subsection{Applying the Lifting Lemma}\label{sscn:apply}

We now give two examples of how to use \Cref{lem:lift} to obtain hardness results for APX-SIM, for the cases of $C=\QMA$ and $C=\StoqMA$.

\paragraph{Example 1: $C=\QMA$.} The theorem below sets $N:=\min(2^{O(s(n)\log n)},2^{O(d(n)\log n)})$ --- the two values in $\min(\cdot,\cdot)$ correspond to the use of the bounded separator framework (\Cref{thm:bsn-s}) or bounded depth framework (\Cref{thm:constant-depth}), respectively, in conjunction with \Cref{lem:lift}.

\begin{theorem}[Hardness of APX-SIM for $C=\QMA$ via \Cref{lem:lift}] \label{thm:liftedhardness}
    Fix $C=\QMA$, and let $G$ be any $\CDAG$ instance on $n$ nodes with separator number and depth scaling as $s(n)$ and $d(n)$, in the sense of $\CDAGs$ and $\CDAGd$, respectively. Set $N:=\min(2^{O(s(n)\log n)},2^{O(d(n)\log n)})$.  Then, there exists a $\poly(N)$-time many-one reduction from $G$ to an instance $(H,a,b,\delta)$ of APX-SIM, which satisfies: (1) $H$ has size $\poly(N)$ (i.e. acts on $\poly(N)$ qubits/qudits, and has $\poly(N)$ local terms), (2) $H$ is either $5$-local acting on qubits or $2$-local on a 1D chain of $8$-dimensional qudits (depending on which circuit-to-Hamiltonian mapping is employed), (3) $b-a\geq 1/\poly(N)$ and $\delta\geq 1/\poly(N)$.
\end{theorem}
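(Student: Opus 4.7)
The plan is to string together the construction of \Cref{scn:BSN} with the Generalized Lifting Lemma (\Cref{lem:lift}), using an appropriate choice of local circuit-to-Hamiltonian mapping $\Hw$. First, depending on which of the two parameters (separator number $s(n)$ or depth $d(n)$) gives the smaller value of $N$, we invoke either the Compression Lemma (\Cref{l:compress}), or directly use the bounded-depth weighting function $\rho$ of \Cref{lem:weighting-function}, to obtain in time $\poly(N)$ a compressed query graph $G^*$ on $\poly(N)$ vertices together with a $c$-admissible weighting function $f^*$ (for any fixed constant $c\ge 2$, e.g. $c=6$ so that $c\ge\eta^{-1}$ for the QMA promise gap $\eta=1/6$). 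By \Cref{l:compress}, resp. \Cref{lem:weighting-function}, the total weight satisfies $W:=W_{f^*}(G^*)\leq \poly(N)$.

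Next, I feed $(G^*,f^*)$ into \Cref{lem:approximating-t} to obtain a QMA verifier $V$ (as depicted in \Cref{fig:verifier}) of size $\poly(N)$ whose maximum acceptance probability equals $T/W$. I then apply a local circuit-to-Hamiltonian mapping $\Hw$ to $V$: for the 5-local qubit claim, $\Hw$ is Kitaev's standard construction~\cite{Kitaev2002} (with $M_1,M_2,P_L$ and $g(p,L)=1/(L+1)$ as instantiated after \Cref{def:local-mapping}); for the 2-local 1D qudit claim, I use a known 1D translation-agnostic construction (e.g.\ the Hallgren--Nagaj--Narayanaswami style construction yielding 8-dimensional qudit chains) satisfying \Cref{def:local-mapping}. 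Both of these are verified in~\cite{WBG20} to satisfy the abstract axioms of \Cref{def:local-mapping}, with promise gap $\Delta:=\Delta(\Hw(V))\ge 1/\poly(N)$, $\norm{M_2}\le 1$, and $g(p,L)\ge 1/\poly(N)$.

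I now invoke \Cref{lem:lift} with this $\Hw$ to obtain the APX-SIM instance $(H,M_1,a,b,\delta)$, where $H=\alpha\Hw(V)+M_2$ and $\delta:=1/\alpha^2$. The key is to choose $\alpha\in\poly(N)$ large enough that (i) the conditions $\alpha>\max(4\norm{M_2}/\Delta,\Delta/(3\norm{M_2}^2),1)$ are met, and (ii) both error terms in the YES/NO conclusions of \Cref{lem:lift} are at most $g(p,L)/4$. Concretely, since $W/\eta$, $1/\Delta$, $\norm{M_2}$, and $1/g(p,L)$ are all $\poly(N)$, choosing $\alpha$ a sufficiently large polynomial in $N$ makes the bracketed expressions of the form $\tfrac{1}{\alpha}[\tfrac{W}{\eta}(\tfrac{1}{\alpha}+\tfrac{12\norm{M_2}^2}{\Delta})\pm\tfrac{12\norm{M_2}^2}{\Delta}]$ vanish like $1/\poly(N)$. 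I then set $a:=g(p,L)/4$ and $b:=3g(p,L)/4$, so that $b-a=g(p,L)/2\ge 1/\poly(N)$, and take observable $A:=M_1$, which is local and conforms to $\Hw(V)$ by \Cref{def:local-mapping}. Locality and geometry of $H=\alpha\Hw(V)+M_2$ are inherited from $\Hw$ by \Cref{def:conforming}, so $H$ is 5-local on qubits (resp.\ 2-local on the 8-dim 1D chain), as claimed.

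The main obstacle is bookkeeping: one must verify that the parameters $W$, $\eta$, $\Delta$, $\norm{M_2}$, and $g(p,L)$ remain $1/\poly(N)$ \emph{after} the compression from $G$ to $G^*$, since $G^*$ may have size up to $2^{O(s(n)\log n)}n$ or $2^{O(d(n)\log n)}n$. This is exactly why $N$ is defined as the minimum of these two quantities: both the Compression Lemma and the bounded-depth analysis produce circuits of size $\poly(N)$, so every relevant quantity is controlled polynomially in $N$, and the overall reduction runs in $\poly(N)$ time as required.
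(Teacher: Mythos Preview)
Your proposal is correct and follows essentially the same approach as the paper: branch on whether the separator-number or depth bound is smaller, apply the Compression Lemma (preceded by computing a separator tree via \Cref{lem:compute-separator-tree}, which you should mention explicitly since \Cref{l:compress} takes the tree as input) or use $\rho$ directly, build the verifier $V$ of \Cref{lem:approximating-t}, plug into \Cref{lem:lift} with Kitaev's or the HNN 1D mapping, and choose $\alpha$ a large enough polynomial in $N$ so all error terms are $1/\poly(N)$. Your explicit choice of $a=g(p,L)/4$, $b=3g(p,L)/4$ is a fine concretization of what the paper leaves implicit.
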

\begin{proof}
    If $N=2^{O(d(n)\log n)}$, set $(G^*,f^*)=(G,\rho)$ for $\rho$ from \Cref{lem:weighting-function} and proceed to the next paragraph. Otherewise, as in \Cref{thm:bsn-s}, apply \Cref{lem:compute-separator-tree} to $G$ to compute a separator tree of depth $D = O(\log(n))$ with separators of size $s=\s(G)$ in time $n^{O(s)}$. This is then fed into \Cref{l:compress} to obtain an equivalent C-DAG $G^*$ with $N=2^{O(s(n)\log n)}$ nodes, each of which corresponds to a QMA verifier of size $\poly(N)$ (i.e. with constant promise gap, taking in proof of size $\poly(N)$, and running a verification circuit of size $\poly(N)$), along with weighting function $f^*$.

    Next, invoke \Cref{lem:lift} on $(G^*,f^*)$.
    Depending on whether we desire $H$ to be $5$-local on qubits or a 1D chain on qudits, set $\Hw$ to be Kitaev's $5$-local construction~\cite{Kitaev2002} or Hallgren, Nagaj, and Narayanaswami's 1D construction~\cite{HNN13}, respectively (except in both cases, we omit the $\hout$ term which penalizes rejected proofs).
    We then plug $\Hw(V)$ for $V$ from \Cref{fig:verifier} into \Cref{lem:lift} with parameters as follows. (Note that by \Cref{l:compress} and \Cref{lem:approximating-t}, $V$ has size $\poly(N)$, and thus $\Hw(V)$ has size $\poly(N)$, by the constructions of \cite{Kitaev2002,HNN13}.)

    First, $M_1$ and $M_2$ are appropriately encoded $1$-local rank-$1$ projectors onto $\ketbra{0}{0}$ at the last verification time step on the output and flag qubits, respectively; thus, $\norm{M_2}=1$. The spectral gap $\Delta(\Hw(V))$ scales as $1/\poly(N)$~\cite{GK12,Gharibian2019b}, and $g(p,L)=1/(1+L)=1/\poly(N)$ in both cases. If $N=2^{O(d(n)\log n)}$, then the weighting function $W=W_{f^*}$ satisfies $W_{f^*}(G^*)\le n(cn^{d(n)})\in\poly(N)$ for any $c\in \poly(n)$. Else, by \Cref{l:compress}, the weighting function $W=W_{f^*}$ satisfies $W_{f^*}(G^*)\le (c+1)^{O(sD)}n\in \poly(N)$, and $\eta\in O(1)$ (defined in \Cref{lem:correct-query-string}, and since in time $\poly(N)$, each QMA verifier at a node of $G^*$ can be amplified to have constant promise gap). In both cases, we conclude that by setting $\alpha$ to be a large enough fixed polynomial in $N$, we obtain a $1/\poly(N)$ promise gap in \cref{lem:lift}, thus satisfying all claims regarding $a,b,\delta$. All functions involved (e.g. $g(m,T)$, $\Delta$), including the reduction itself, run in time $\poly(N)$.
\end{proof}

\noindent As noted in \Cref{sscn:results}, combining \Cref{thm:bsn-s} with \Cref{thm:liftedhardness}, we have that $\CDAG_1$ can directly be embedded into an instance of APX-SIM.

\paragraph{Example 2: $C=\StoqMA$.}  In \Cref{lem:lift}, when $N=2^{O(s(n)\log n)}$ (i.e. bounded separator number framework) the promise gap of $C$ directly influences $\eta$, which in turn affects $W$, $\alpha$, and $\Delta(\Hw(V))$. Thus, we can apply it to obtain hardness for APX-SIM on stoquastic Hamiltonians. The tradeoff is that due to the extra log factor in \Cref{thm:bsn-s-stoqma} (versus \Cref{thm:bsn}), the size of the stoquastic APX-SIM instance obtained still unfortunately grows quasi-polynomially, even for $s\in O(1)$. (Recall this extra log factor is itself due to the lack of error reduction!) However, when $N=2^{O(d(n)\log n)}$ (i.e. bounded depth framework), no such hit is incurred. As in \Cref{lem:lift}, both frameworks are considered below.

\begin{theorem}[Hardness of APX-SIM for $C=\StoqMA$ via \Cref{lem:lift}] \label{thm:liftedhardnessstoqma}
    Fix $C=\StoqMA$ and any efficiently computable function $s:\N\to\N$, and define $N:=\min(2^{O(s(n)\log^2 n)},2^{O(d(n)\log n)})$.  Then, there exists a $\poly(N)$-time many-one reduction from any instance of $\CDAG$ to an instance $(H,a,b,\delta)$ of APX-SIM for stoquastic $H$, which satisfies: (1) $H$ has size $\poly(N)$ (i.e. acts on $\poly(N)$ qubits, and has $\poly(N)$ local terms), (2) $H$ is $2$-local, (3) $b-a\geq 1/\poly(N)$ and $\delta\geq 1/\poly(N)$.
\end{theorem}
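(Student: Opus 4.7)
The plan is to mirror the proof of \Cref{thm:liftedhardness} (the $C=\QMA$ case), with two adaptations for $C=\StoqMA$: (i) invoke \Cref{thm:bsn-s-stoqma} in place of \Cref{thm:bsn-s}, which incurs the extra log factor visible in the first branch of the $\min$ defining $N$, to accommodate StoqMA's $1/\poly(n)$ promise gap (since StoqMA is not known to admit error reduction); and (ii) choose $\Hw$ to be a $2$-local stoquastic circuit-to-Hamiltonian mapping, such as the one of~\cite{Gharibian2019b}.

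Concretely, if $N = 2^{O(d(n)\log n)}$, set $(G^*, f^*) := (G, \rho)$ with $\rho$ the weighting function of \Cref{lem:weighting-function}; taking $c \in \poly(n)$, we have $W_\rho(G) \le n(cn)^{d(n)} \in \poly(N)$. Otherwise, apply \Cref{lem:compute-separator-tree} to obtain a separator tree of depth $D \in O(\log n)$ and separator size $s = \s(G)$, feed it into the Compression Lemma (\Cref{l:compress}) with $c \in \poly(n)$, and obtain $G^*$ with admissible weighting function $f^*$ satisfying $W_{f^*}(G^*) \le (c+1)^{O(sD)} n = 2^{O(s(n)\log^2 n)} \in \poly(N)$. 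The choice $c \in \poly(n)$ (rather than $c \in O(1)$ as in \Cref{thm:liftedhardness}) is forced by $\eta \ge 1/\poly(n)$ for StoqMA together with the $\eta^{-1}$-admissibility requirement of \Cref{lem:correct-query-string}; by \Cref{rem:promisegaps}, each node of $G^*$ retains its inverse-polynomial promise gap inherited from $G$.

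Next, build the verifier $V$ of \Cref{lem:approximating-t} on $(G^*, f^*)$ (stoquasticity of $V$ being guaranteed by the last bullet point in the proof of that lemma), and plug $V$ into a $2$-local stoquastic circuit-to-Hamiltonian mapping $\Hw$ (e.g.~\cite{Gharibian2019b}), obtaining a $2$-local stoquastic Hamiltonian $\Hw(V)$ of size $\poly(N)$ with $\Delta(\Hw(V)) \ge 1/\poly(N)$ and $g(p,L) \ge 1/\poly(N)$. Apply \Cref{lem:lift} with $M_1, M_2$ the $1$-local projections onto the output and flag qubits at the final time step (conforming to $\Hw(V)$ and preserving $2$-locality and stoquasticity), and $\alpha$ a sufficiently large polynomial in $N$.

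The main technical concern is ensuring the promise gap produced by \Cref{lem:lift} remains $1/\poly(N)$ despite StoqMA's missing amplification. Inspecting the YES/NO bounds there, this reduces to checking $W_{f^*}(G^*)/\eta \in \poly(N)$, which holds in both branches of the $\min$: in the bounded-depth branch via $W_\rho(G) \in \poly(N)$ and $\eta \ge 1/\poly(N)$; in the bounded-separator branch via $W_{f^*}(G^*) \in \poly(N)$ (where the extra log factor is precisely the price paid by \Cref{thm:bsn-s-stoqma}) together with $\eta \ge 1/\poly(N)$. Choosing $\alpha \in \poly(N)$ sufficiently large then yields $b-a,\, \delta \ge 1/\poly(N)$, completing the $\poly(N)$-time many-one reduction.
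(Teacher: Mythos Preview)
Your proposal is correct and follows essentially the same approach as the paper's proof. The only minor discrepancy is the citation for the stoquastic circuit-to-Hamiltonian mapping: the paper invokes the construction of Bravyi, Bessen, and Terhal~\cite{Bravyi_Bessen_Terhal_2006} (and cites \Cref{rem:stoqma} for the stoquasticity of $V$), whereas you point to~\cite{Gharibian2019b}; the former is the more direct reference for a mapping $\Hw$ satisfying \Cref{def:local-mapping}.
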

\begin{proof}
    The proof is almost identical to that of \Cref{thm:liftedhardness}, except for two differences: (1) Set $\Hw$ as the stoquastic circuit-to-Hamiltonian construction of Bravyi, Bessen, and Terhal \Cite{Bravyi_Bessen_Terhal_2006}, so that the output Hamiltonian $H$ is indeed stoquastic. (Recall by that $V$ in \Cref{fig:verifier} is indeed stoquastic by \Cref{rem:stoqma}.) (2) When $C=\StoqMA$ and $N=2^{O(s(n)\log^2 n)}$, $\eta = 1/\poly(n)$ (versus $\eta= O(1)$), and so $c\in \poly(n)$. This means that although the size of $G^*$ produced by \Cref{l:compress} remains unchanged, the weighting function $f^*$ now satisfies $W_f^{*}\leq 2^{O(s\log^2n)}$ (versus $W\leq 2^{O(s\log n)}$). This, in turn, means that $V$ (\Cref{lem:approximating-t} and \Cref{fig:verifier}) grows polynomially in size as $2^{O(s\log^2n)}$, implying $\Delta(\Hw(V))$ (see Lemma 5 of \cite{Bravyi_Bessen_Terhal_2006}), and thus $\alpha$, also scale with $2^{O(s\log^2n)}$. (The analysis for the $N=2^{O(d(n)\log n)}$ case remains unchanged.)
\end{proof}
\noindent Thus, in the $N=2^{O(d(n)\log n)}$ case (i.e. bounded depth framework), we recover that APX-SIM on stoquastic Hamiltonians is $\pStoqMAlog$-hard~\cite{Gharibian2019b}. For clarity, this follows because $\pStoqMAlog=\p^{\Vert \StoqMA}$~\cite{Gharibian2019b}, and $\p^{\Vert \StoqMA}$ corresponds to a depth-$1$ $\StoqMA$-DAG.

\section{No-go statement for ``weak compression'' of polynomials}\label{scn:polycompress}

We now make a simple observation that the weighting function approach applied to NP queries (introduced in~\cite{Gottlob1995} and used here as well) can be turned upside-down to obtain a no-go statement about a purely mathematical question: \emph{Can arbitrary multi-linear polynomials be ``weakly compressed''}? Throughout this section, we consider weighting functions applied to NP-DAGs.

\paragraph{Definitions.} To define ``weak compression'', recall first the definition of an \emph{arithmetic circuit}, which is a standard succinct encoding for polynomials.

\begin{definition}[Arithmetic circuit]\label{def:arithmetic}
    An \emph{arithmetic circuit} $C$ over field $F$ is given via a DAG as follows. Each vertex of in-degree $0$ is labelled by either a variable $x_i$ or a constant from $F$. Each vertex of in-degree at least $2$ is labelled by either the ``$+$'' or ``$\times$'' operation. Vertices of in-degree $1$ are not allowed. There is a single node of out-degree $0$, the \emph{output node}. The polynomial $p_C$ computed by $C$ is obtained by evaluating the circuit with order of operations dictated by any topological order on $C$, where the output node is fixed as the last node in the order.
\end{definition}

\begin{figure}[t]
\begin{center}
\includegraphics[width=3.5cm]{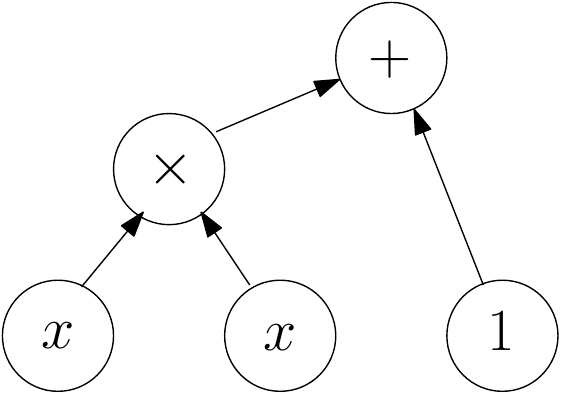}
\end{center}
\caption{An example of an arithmetic circuit which computes polynomial $p(x)=x^2+1$.}
\label{fig:arithmetic}
\end{figure}

%Informally, we will now define: Given a real-valued polynomial $f$ as input (specified via an arithmetic circuit), \emph{weak compression} efficiently maps f to some other efficiently computable real-valued function $g$, such that (1) given an optimal point $y^*$ at which $g$ is maximized, we may efficiently recover an optimal point $x^*$ at which $f$ is maximized, and (2) $g(y^*)$ requires fewer bits than $f(x^*)$ to represent (i.e. it has been ``compressed''). Formally:

We now define our notion of weak compression; intuition given subsequently.

\begin{definition}[Weak compression of polynomials]\label{def:compress}
    Let $f:[0,1]^m\to\R^+$ be a multi-variate polynomial with rational coefficients, specified via an arithmetic circuit of size $M$. Assume there exists $x^*\in[0,1]^m$ maximizing $f$ such that $f(x)$ can be specified exactly\footnote{For clarity, we are assuming a naive binary expansion of $f(x^*)$.} via $B$ bits, for some (finite) $B$. We say $f$ is \emph{weakly compressible to $B'$ bits} if there exists an efficiently computable mapping taking $f$ to another function $g:[0,1]^{m'}\to \R^+$ such that:
\begin{enumerate}
    \item For any $y\in[0,1]^{m'}$, $g(y)$ is computable in $\poly(m)$ time.
    \item (Optimality preserved) For any optimal $y^*$ maximizing $g(y^*)$ over $[0,1]^{m'}$, there exists a $\poly(m)$-time map taking $y^*$ to an optimal $x^*\in[0,1]^m$ maximizing $f(x^*)$.
    \item (Compression) There exists an optimal $y^*$ requiring at most $B'$ bits to specify exactly.
\end{enumerate}
\end{definition}
\noindent Very roughly, \Cref{def:compress} says we may efficiently reduce the number of bits required to represent the optimal value $f(x^*)$. More formally, we can efficiently map polynomial $f$ to a new function $g$ such that: (1) $g$ may deviate from $f$ arbitrarily, except on at least one optimal point $x^*$ for $f$, which $g$ must ``preserve'' via some $g$-optimal point $y^*$. (2) $f(y^*)$ must require fewer (i.e. $B'$) bits than $f(x^*)$ (i.e. $B)$ to represent. Note that $g$ is not required to be a polynomial, nor do we require that $g(y^*)=f(x^*)$.\\

\noindent \emph{Sanity checks regarding \Cref{def:compress}.} When $B'\geq B$, $f$ is trivially weakly compressible to $B'$ bits (simply set $g$ to $f$). More interestly, one might ask: Given $f$, why can one not simply divide $f$ by $f(x^*)$, i.e. set $f'(x)=f(x)/f(x^*)$? This would allow $B'=1$. The problem here is that $x^*$ is \emph{not known a priori}, and crucially, $f$ is specified via an arithmetic circuit. Thus, it is not at all clear how one might efficiently compute $f(x^*)$, given just this circuit description.\\

We now observe a no-go statement regarding weak compressibility of polynomials (expressed as arithmetic circuits).

\lemmaPolycompress*
\noindent The proof, while simple, requires a few ingredients, and is thus given in \Cref{sscn:proofpolycompress}. It leads to the following concrete no-go statements.

\corollaryPolycompressA*
\begin{proof}
    Immediate from \Cref{l:polycompress} and the fact that in its proof, the admissible weighting function $\omega$ can have at most exponential total weight on an arbitrary NP-DAG, which requires $B$ to scale as a polynomial in the worst case.
\end{proof}

\corollaryPolycompressB*
\begin{proof}
    The proof is similar to \Cref{cor:polycompress1}, except when we start with a $\pNPtwo$ computation (i.e. making $2$ NP queries). The weighting function $\omega$ now has at most $O(1)$ total weight, justifying the choice $B\in O(1)$ in the claim. The claim now follows since if $\pNPtwo=\pNPone$, then $\PH=\textup{P}^{\Sigma_2^p}$~\cite{Hartmanis1993}.
\end{proof}

\subsection{Proof of \Cref{l:polycompress}}\label{sscn:proofpolycompress}

We first require the following lemma for encoding correct NP query strings into polynomial optimization. For concreteness, consider the admissible weighting setup of \Cref{sscn:solve-c-dag}, specialized to the case of NP queries. Recall from \Cref{sec:approximating-t} that the admissible weighting function framework allows us to reduce the task of identifying a correct \NP\ query string $x^*\in\set{0,1}^m$ to optimizing a real-valued function $t:\set{0,1}^{\poly(m)}\to\R$ of form (c.f. \Cref{eqn:t}, which also allowed QMA queries)
\begin{equation}\label{eqn:f}
    t(x,y_1,\ldots, y_m)=\sum_{i=1}^mw_i \left(x_i V_i(x,y_i) + \frac{(1-x_i)}{2}\right),
\end{equation}
where $w_i$ are the admissible weights (assumed to be rational), the bit $x_i$ encodes the claimed answer to \NP\ verifier $V_i$ in the NP-DAG, and $y_i$ is the \NP\ proof to verifier $V_i$. (Remark: In \Cref{eqn:t}, $V_i$ takes in $z_i(x)$ rather than all of $x$, where recall $z_i(x)$ selects the substring of $x$ corresponding to the input wires of node $V_i$. For simplicity, here we assume without loss of generality that the function $z_i$ is embedded into the definition of $V_i$ itself, so that we can omit writing $z_i$.) Then, by \Cref{lem:correct-query-string}, $x^*$ is simply read off the optimal $(x^*,y^*_1,\ldots, y^*_m)$ which maximizes $t$.

We now use standard tricks to encode this setup into optimization of a multi-linear polynomial.
\begin{lemma}\label{l:poly}
    Let $t$ be as in \Cref{eqn:f}, specified using $n$ bits of precision (used to describe weights $w_i$ and verifiers $V_i$). There exists a polynomial-time Turing machine which, given $t$, produces an arithmetic circuit encoding multi-linear polynomial $\pout:[0,1]^{\poly(n)}\to\R^+$ with rational coefficients such that
\begin{equation}
    \max_{x,y_1,\ldots, y_m\in\set{0,1}^{\poly(m)}} t(x,y_1,\ldots, y_m)=\max_{s\in[0,1]^{\poly(m)}}\pout(s).
\end{equation}
(Both $f$ and $\pout$ have range $[0,\sum_i\abs{w_i}]$ over their respective domains.) Moreover, given an optimal $s^*$ maximizing $\pout$, one can efficiently compute a correct NP query string for the NP-DAG underlying $t$.
\end{lemma}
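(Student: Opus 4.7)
The plan is to convert each NP verifier appearing in $t$ into a multi-linear polynomial and combine these into a single multi-linear polynomial $\pout$ whose unconstrained maximum over $[0,1]^*$ matches $\max t$ over the Boolean hypercube. The crucial property exploited throughout is that a multi-linear polynomial is linear in each variable when the others are fixed, so its maximum over $[0,1]^n$ is always attained at a vertex of $\{0,1\}^n$; combined with pointwise agreement (up to a harmless shift) on the cube, this forces the two maxima to coincide.

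First, I would apply the Cook--Levin theorem to each NP verifier $V_i$ to obtain, in polynomial time, a $3$-CNF formula $\phi_i(x,y_i,z_i)$ over polynomially many fresh auxiliary variables $z_i$ such that $V_i(x,y_i)=1$ iff some assignment to $z_i$ satisfies $\phi_i$. Then I would arithmetise each clause $C=\ell_1\vee\ell_2\vee\ell_3$ of $\phi_i$ via the standard encoding $c_C = 1-(1-\ell_1)(1-\ell_2)(1-\ell_3)$, where a positive literal $x_j$ is represented by $x_j$ and a negated literal by $1-x_j$. Each $c_C$ is multi-linear and takes values in $[0,1]$ over $[0,1]^*$.

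The next step combines the $c_C$'s into a single multi-linear polynomial $\tilde V_i$ whose maximum over suitably chosen auxiliary variables equals $V_i(x,y_i)$ on Boolean inputs; the naive product $\prod_C c_C$ fails because shared variables produce higher-degree monomials. To decouple, I would introduce, for each clause $C$ of $\phi_i$, a fresh auxiliary variable $r_{C,i}\in[0,1]$ and set
\begin{equation*}
\tilde V_i(x,y_i,z_i,r_{\cdot,i}) = \prod_C r_{C,i} \;-\; K_i\sum_C r_{C,i}(1-c_C),
\end{equation*}
for a sufficiently large constant $K_i$. Each $r_{C,i}$ is fresh (appearing once in the product and once in its corresponding penalty term), so together with the multi-linearity of each $c_C$ in the base variables, $\tilde V_i$ is multi-linear in all its arguments. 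Maximising over $r$ forces $r_{C,i}\in\{0,1\}$ with $r_{C,i}=1$ chosen only when $c_C=1$; the product $\prod_C r_{C,i}$ then evaluates to $1$ iff all clauses are satisfied, yielding $\max_{y_i,z_i,r_{\cdot,i}}\tilde V_i = V_i(x,y_i)$ on Boolean $x$. I would then substitute these encodings into the template of $t$ to obtain
\begin{equation*}
\pout(x,y,z,r) = \sum_{i=1}^m w_i\bigl(x_i\,\tilde V_i + (1-x_i)/2\bigr).
\end{equation*}
Because $V_i$'s inputs are the parents of $v_i$ in the DAG (so $x_i$ does not appear in $\tilde V_i$), $x_i\,\tilde V_i$ is multi-linear in $x_i$; across distinct $i$ the aux variables $r_{C,i},y_i,z_i$ are disjoint, and each shared $x_j$ appears linearly in each summand. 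Therefore $\pout$ is multi-linear, and by construction $\max \pout = \max t$.

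The main obstacle is making the construction meet \emph{all} three requirements simultaneously: a $\poly(n)$-size arithmetic circuit (immediate, since each $\tilde V_i$ has size $\poly(\abs{\phi_i})$ and they are summed in $t$'s template), multi-linearity across \emph{shared} variables (handled by keeping the $r_{C,i}$ fresh per verifier and exploiting that $V_i$ does not read $x_i$), and range $[0,\sum_i\abs{w_i}]$ (which may require an affine shift of $\pout$ by a constant depending on the $K_i$ and $\abs{\phi_i}$ to absorb the negative values that the penalty terms produce on non-optimal $r$; shifting preserves multi-linearity and the location of the maximiser). Finally, given any optimiser $s^*\in[0,1]^{\poly(m)}$, I would recover a Boolean optimiser by iteratively rounding each coordinate to $0$ or $1$, choosing whichever value does not decrease $\pout$ (possible since $\pout$ is linear in each coordinate individually); reading off the $x$-bits of this Boolean optimiser and invoking \Cref{lem:correct-query-string} then yields a correct NP query string for the underlying NP-DAG.
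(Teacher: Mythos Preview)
Your proposal is correct in substance and takes a genuinely different route from the paper. Both start by applying Cook--Levin and arithmetising clauses via $c_C=1-(1-\ell_1)(1-\ell_2)(1-\ell_3)$, but then diverge. The paper simply takes the \emph{product} $q_V=\prod_C c_C$ (which is not multi-linear), substitutes into the template \Cref{eqn:f}, and only afterwards restores multi-linearity by applying the standard multi-linear extension operator $p^{(k)}=(1-s_k)\fix(p^{(k-1)},k,0)+s_k\fix(p^{(k-1)},k,1)$ coordinate by coordinate (the same trick used in $\IP=\PSPACE$). You instead engineer multi-linearity \emph{up front} by introducing a fresh slack variable $r_{C,i}$ per clause and the penalty functional $\tilde V_i=\prod_C r_{C,i}-K_i\sum_C r_{C,i}(1-c_C)$; since each $r_{C,i}$ occurs once in the product and once in its own penalty summand, and the $c_C$'s appear only additively, $\tilde V_i$ is already multi-linear and no global linearisation pass is needed. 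Your verification that $\max_{z_i,r}\tilde V_i=V_i(x,y_i)$ on Boolean inputs (for $K_i\geq 1$), the observation that $x_i\notin\tilde V_i$ by the DAG structure, and the coordinate-wise rounding to recover a Boolean optimiser are all sound.

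One point to tighten: your construction does not quite deliver the parenthetical range claim $[0,\sum_i\abs{w_i}]$ simultaneously with the equality of maxima. Without the affine shift, $\pout$ can dip to roughly $-\sum_i w_iK_i\abs{\phi_i}$ on non-optimal $r$ (so the range claim fails); with the shift, the location of the maximiser is preserved but the displayed equality $\max\pout=\max t$ is off by the shift constant. The paper's linearisation avoids this because each step is a convex combination of Boolean restrictions, so the range $[0,\sum_i\abs{w_i}]$ is automatic. This is cosmetic for the downstream applications (\Cref{l:polycompress}, \Cref{cor:polycompress1}, \Cref{cor:polycompress2}), which only need a poly-size arithmetic circuit whose optimum is locatable by binary search and from which $x^\ast$ can be read off --- but you should state explicitly which of the two claims you are relaxing, rather than suggesting the shift fixes both.
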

\begin{proof}
The construction applies standard tricks (used, e.g., in the proof of $\IP=\PSPACE$~\cite{S92}). Fix a topological ordering $R:=(V_1,\ldots, V_m)$ on the vertices of the NP-DAG, and let $L$ denote the maximum level (\Cref{def:level}) of any node in $R$. Throughout, we abuse notation and interchangably refer to $V_i$ as both nodes in the DAG and \NP\ verifiers $V_i$. The construction of $\pout$ is accomplished by the following iterative algorithm:
\begin{enumerate}
    \item Set $i=0$.
    \item While $i\leq L$ do:
    \begin{enumerate}
        \item Let $S_i$ denote the set of nodes at level $i$ (with respect to $R$).
        \item For all $V\in S_i$:
        \begin{enumerate}
                \item (Map circuits to $3$-SAT formulae) Map $V$ to a $3$-SAT formula $\phi_V$ via the Cook-Levin theorem~\cite{C72,L73} with a minor modification: Since the input to $V$ is \emph{a priori} unknown (it depends on the outputs of the predecessors of $V$), omit the constraints in the Cook-Levin construction which force the input to a fixed string. Note:
                    \begin{itemize}
                        \item $\phi_V(x,y_V,z_V)$ takes in three strings: $x$ (query answers to predecessor queries), $y_V$ (verifier $V$'s proof), $z_V$ (auxilliary variables introduced by Cook-Levin construction).
                        \item Without loss of generality, all $\phi_V$ throughout this construction are assumed to have the same number $N$ of variables and $M$ of clauses (via trivial padding arguments).
                    \end{itemize}
    \item (Arithmetize each clause of $\phi_V$) Let $c_{V,j}$ denote the $j$th clause of $\phi_V$, where $j\in[M]$. Arithmetize each $c_{V_j}$ via rules $\overline{x}\mapsto 1-x$ and $x\vee y\vee z\mapsto 1-xyz$ (with this order of precedence). For example,
    \begin{equation}
        (z_1\vee \overline{z_2}\vee z_3)\mapsto 1-(1-z_1)(z_2)(1-z_3).
    \end{equation}
    View the right hand side as a multi-linear polynomial $r_{V_j}:[0,1]^3\to [0,1]$.
    \item (Combine clauses of $\phi_V$) For each $\phi_V$, define polynomial $q_V:[0,1]^N\to [0,1]$ as
        $
            q_V:=\Pi_{j=1}^Mr_{V_j}.
        $
        Note $q_V$ has range $[0,1]$, but is no longer multi-linear. Also, $\phi_V(x,y_V,z_V)$ and $q_V(x,y_V,z_V)$ take in the same arguments (although for $q_V$, each coordinate of $x,y_V,z_V$ lies in $[0,1]$).
        \end{enumerate}
        \item Set $i=i+1$.
    \end{enumerate}
        \item (Combine polynomials to simulate weighting function) Substituting into \Cref{eqn:f}, define polynomial
        \begin{equation}\label{eqn:p1}
            p(x, y_{V_1},\ldots, y_{V_m},z_{V_1},\ldots, z_{V_m}):=\sum_{i=1}^mw_i \left(x_i q_{V_i}(x,y_{V_i},z_{V_i}) + \frac{(1-x_i)}{2}\right).
        \end{equation}
        Note we define the domain of $p$ as $[0,1]^{m(2N+1)}$; for brevity, let $s_i$ for $i\in[m(2N+1)]$ now denote the $i$th real parameter of $p$'s input. Observe $p$ has range  $[0,\sum_i \abs{w_i}]$, since each $q_{V_i}$ has range $[0,1]$.

         \item (Linearize the polynomial) Round $p$ to a multi-linear polynomial $\pout$ via the following iterative process, for which we define $p^{(k)}$ as the polynomial $p$ after round $k\in\set{0,\ldots, m(2N+1)}$:
        \begin{enumerate}
            \item Set $k=1$.
            \item While $k\leq m(2N+1)$, set
            \begin{equation}\label{eqn:linearize}
                p^{(k)}(s_1,\ldots, s_{m(2N+1)}) = (1-s_k)\cdot \fix(p^{(k-1)},k,0) +s_k\cdot\fix(p^{(k-1)},k,1),
            \end{equation}
            where $\fix(p^{k-1},k,b)$ is obtained by fixing $s_k$ of $p^{k-1}$ to $b\in\set{0,1}$.
            \item Set $k=k+1$.
        \end{enumerate}
        Observe that for all $k\in\set{1,\ldots, m(2N+1)}$,
        \begin{equation}
            p^{(k-1)}(s_1,\ldots, s_{k-1},b,s_{k+1},\ldots s_{m(2N+1)})=p^{(k)}(s_1,\ldots, s_{k-1},b,s_{k+1},\ldots s_{m(2N+1)})
        \end{equation}
        for any $b\in\set{0,1}$. Thus, $\pout:=p^{(m(2N+1))}$ satisfies $\pout(s)=p(s)$ for all $s\in\set{0,1}^{m(2N+1)}$. Moreover, by construction $\pout$ is multi-linear and has range $[0,\sum_i \abs{w_i}]$ (since each iteration of line 4b introduces a convex combination over local assignments).
\end{enumerate}
Finally, we assume all arithmetic operations above are represented implicitly via gates of an arithmetic circuit (required due to Step 2biii, as expanding $q_V$ explicitly in a monomial basis can result in exponentially many terms). The resulting arithmetic circuit clearly has size $\poly(n)$.

\paragraph{Correctness.} Since $\pout$ is multi-linear, it obtains\footnote{Here is a simple proof via exchange argument, for completeness: Let $x=(x_1,...,x_n)$ be a point maximizing multi-linear $f:[a,b]^n\to\R$ for arbitrary $a,b\in \R$. Assume without loss of generality $x_1 \not \in \set{a,b}$. Then, fixing $x_2,\ldots ,x_n$, the resulting function $f(x_1)$ is linear in $x_1$ by definition, and so $\max(f(a),f(b))\geq f(x_1)$. Exchanging $\argmax(f(a),f(b))$ for $x_1$ completes the claim.} its maximum on an extreme point of the compact set $[0,1]^{m(2N+1)}$, i.e.
        \begin{equation}
            \max_{s\in[0,1]^{m(2N+1)}}\pout(s)=\max_{s\in\set{0,1}^{m(2N+1)}}\pout(s).
        \end{equation}
Thus, we may restrict attention\footnote{Note the linearization of Step 4 is \emph{necessary} to obtain this statement. For example, consider an unsatisfiable $2$-SAT formula $\phi(x_1,x_2)=(x_1\vee x_2)\wedge(\overline{x_1}\vee x_2)\wedge(x_1\vee \overline{x_2})\wedge(\overline{x_1}\vee \overline{x_2})$. Let $q$ be the multi-variate polynomial produced by arithmetizing $\phi$ as in steps $2(ii)$ and $2(iii)$. Then, the maximum value of $q$ over strings is $0$, but setting each variable to $1/2$ yields value $(3/4)^4>0$. With this said, note that for $3$-SAT, since a $7/8$-approximation ratio is optimal via the PCP theorem~\cite{Ha97}, one can show via AM-GM inequality that for any unsatisfiable $\phi$, optimizing all variables of $q$ over $[0,1]$ yields value at most $(7/8)^m$ for $m$ clauses. Thus, up to inverse exponential corrections, one \emph{could} avoid the linearization step, but the tradeoff is added clutter and the need to assume $\p\neq \NP$.} to $s\in\set{0,1}^{m(2N+1)}$. But on this set, $\phi_V$ (Cook-Levin output) and $q_V$ (arithmetization of $\phi_V$) coincide. We conclude
\begin{equation}
    \max_{s\in[0,1]^{m(2N+1)}}\pout(s)= \max_{x,y_1,\ldots, y_m\in\set{0,1}^{\poly(m)}} t(x,y_1,\ldots, y_m)
\end{equation}
for $t$ from \Cref{eqn:f}. Moreover, recalling that $s=xy_{V_1}\cdots y_{V_m}z_{V_1}\cdots z_{V_m}$ (viewed as a concatenation of strings), it follows that given the optimal $s^*$, we may recover the correct \NP\ query string simply by reading off $x$.
\end{proof}

With \Cref{l:poly} in hand, the proof of \Cref{l:polycompress} now follows straightforwardly.
\begin{proof}[Proof of \Cref{l:polycompress}]
    The proof is similar to that of \Cref{thm:bsn-s} (and thus \Cref{thm:separator-tree-time}), except we need not apply the Compression Lemma (\Cref{l:compress}) in that the content of \Cref{sscn:solve-c-dag} suffices, and now we use \Cref{l:poly} to make the connection to polynomials. Specifically, let $\Pi$ be any instance of a $\pNP$ problem, and $M$ a $\pNP$ machine deciding $\Pi$. Map the NP-DAG representing $M$'s action directly (i.e. without utilizing \Cref{l:compress}) to the function $t$ in \Cref{eqn:f}. For this, the weights $w_i$ can be any $c$-admissible weighting function which satisfies the preconditions of \Cref{lem:correct-query-string}; for concreteness, choose the $2$-admissible function $\omega(v)=3^{\abs{\desc(v)}}$ from \Cref{lem:weighting-function}. Apply \Cref{l:poly} to map $t$ to polynomial $\pout(x,y_1,\ldots, y_m, z_1,\ldots, z_m)$. Since $\pout$ is efficiently evaluated on any given input, and has an optimal value $\pout(x^*,y^*_1,\ldots, y^*_m, z^*_1,\ldots, z^*_m)$ expressible using $B$ bits of precision\footnote{For an arbitrary NP-DAG, $B$ can be polynomial in the NP-DAG's size.}, a binary search using $B$ queries to an NP-oracle suffices to identify an optimal input $(x^*,y^*_1,\ldots, y^*_m, z^*_1,\ldots, z^*_m)$. By \Cref{l:poly}, one can now efficiently extract the answers to all NP queries made by $M$ (specifically, this is the string $x^*$), and thus efficiently simulate $M$ itself to decide $\Pi$.
\end{proof}

\subsection*{Acknowledgments}

We thank Eric Allender, Johannes Bausch, Stephen Piddock, James Watson, and Justin Yirka for helpful discussions. SG acknowledges funding from DFG grants 432788384 and 450041824.

\printbibliography

\end{document}